\algrenewcommand\algorithmicforall{\textbf{foreach}}
\algrenewcommand\algorithmicindent{.8em}
\newcommand{\bH}{\textbf{H}}
\newcommand{\bp}{\textbf{p}}
\newcommand{\bv}{\textbf{v}}
\newcommand{\btheta}{\mbox{\boldmath{$\theta$}}}
\newtheorem{theorem}{Theorem}[section]
\newtheorem{lemma}{Lemma}
\newtheorem{mydef}{Definition}
\theoremstyle{remark}
\theoremstyle{remark}
\newtheorem{Remark}{Remark}
\newtheorem{exmp}{Example}[section]
\begin{document}
\label{title}
\title{RIS-assisted Physical Layer Security}

\author{
	\IEEEauthorblockN{Xudong Li\IEEEauthorrefmark{1}, Matthias Frey\IEEEauthorrefmark{2}, Ehsan Tohidi\IEEEauthorrefmark{1}\IEEEauthorrefmark{3}, Igor Bjelakovi\'c\IEEEauthorrefmark{1}\IEEEauthorrefmark{3}, S\l{}awomir Sta\'nczak\IEEEauthorrefmark{1}\IEEEauthorrefmark{3}
	}\\
	\IEEEauthorblockA{\IEEEauthorrefmark{1}Technische Universit\"{a}t Berlin,} 
    \IEEEauthorblockA{\IEEEauthorrefmark{2}The University of Melbourne,}
    \IEEEauthorblockA{\IEEEauthorrefmark{3}%
    Fraunhofer Heinrich Hertz Institute}
}

\maketitle

\begin{abstract}
We propose a reconfigurable intelligent surface (RIS)-assisted wiretap channel, where the RIS is strategically deployed to provide a spatial separation to the transmitter, and orthogonal combiners are employed at the legitimate receiver to extract the data streams from the direct and RIS-assisted links. Then we derive the achievable secrecy rate under semantic security for the RIS-assisted channel and design an algorithm for the secrecy rate optimization problem. The simulation results show the effects of total transmit power, the location and number of eavesdroppers on the security performance.

\begin{IEEEkeywords}
Reconfigurable intelligent surface, wiretap channel, semantic security, secrecy rate.
\end{IEEEkeywords}
\end{abstract}

\section{Introduction}
    The rapid advancement of wireless communication technologies has driven the exploration of new methods to improve coverage and optimize signal propagation. One such innovation is the deployment of reconfigurable intelligent surfaces (RISs) \cite{pan2021reconfigurable}, which are engineered surfaces capable of dynamically controlling electromagnetic waves. By altering how these waves are reflected, refracted, or absorbed, a RIS can extend coverage in challenging environments and shape the communication channels to improve signal quality and reliability \cite{huang2019reconfigurable}. 

    Recent studies have demonstrated that the RIS has the potential to significantly enhance the physical layer security (PLS) of wireless communications by directing signals toward legitimate users and away from potential eavesdroppers \cite{feng2020large}. Indeed, the secrecy rate can be greatly increased with the assistance of RISs \cite{chen2019intelligent}. However, the choice of security measure in PLS has a significant impact on practical relevance of the results \cite{bloch2011physical}.

    In the context of PLS, the wiretap channel model \cite{wyner1975} provided the initial framework for securing communications against eavesdroppers by minimizing information leakage. Weak secrecy was introduced to measure the leakage from an information theoretic perspective. Later, a more stringent measure called strong secrecy was proposed \cite{maurer1994strong}. The drawback of these security measures is that they do not quantify how much inference about the transmitted message the eavesdropper can make based on its received signal. The security measure that eliminates this drawback is semantic security, which has been introduced in \cite{bellare2012semantic} and has a clear operational interpretation\cite{frey2017mac}. 

    In this paper, we consider a wiretap channel with one RIS and several eavesdroppers. The legitimate receiver is able to distinguish signals from the direct link and the RIS-assisted link with the help of spatial separation and orthogonal combiners (SSOC) \cite{1033686}. We explore the security performance in this regime using semantic security as the secrecy measure. 

    \subsection{Prior work}
    RISs were shown to be able to sense the wireless environment and accordingly adjust its reflection coefficients dynamically to improve wireless communication \cite{pan2021reconfigurable}. Following these advancements, the use of RISs was explored to create a programmable wireless environment for PLS \cite{chen2019intelligent}, particularly in scenarios where traditional methods struggled to maintain robust security in complex environments \cite{shen2019secrecy,rafieifar2021secrecy}. In \cite{cui2019secure, zhou2021secure}, different algorithms for secrecy rate maximization by jointly optimizing total power constraint and RIS configuration were developed and a closed-form solution with fixed RIS configuration was given in \cite{5485016}. However, the performance gain of integrating RIS into PLS was not fully exploited. The achievable secrecy rate used in those algorithms was derived under strong the security criterion. Moreover, the algorithms were to optimize the total transmission power because the signals from the direct link and the RIS-assisted link were simply summed at the legitimate receiver.

    \subsection{Main Contribution and Outline}
    The main contribution of this paper can be summarized as follows:
    \begin{itemize}
        \item We introduce a vector-valued model for the RIS-assisted wiretap channel with SSOC. 
        \item We derive an achievable secrecy rate in this regime under the semantic security criterion, which is more stringent and operationally meaningful than strong secrecy used in prior works on RIS-assisted wiretap channels.
        \item We design a power allocation algorithm for the direct and RIS-assisted links that optimizes the secrecy rate and we evaluate the security enhancement.
    \end{itemize}
    In Section \ref{Sec 2. semantc}, we give the basic definitions. In Section \ref{Sec. 3}, we introduce the system model that is considered in this paper. Section \ref{Sec. Main results} contains the theorem of achievable secrecy rate and the proofs. We propose an optimization algorithm in Section \ref{Sec. Evaluation} and present simulation results in Section \ref{Sec. Simulation}. Section \ref{Sec. Conclusion} presents the paper's conclusions.
    
\section{Semantic Security} \label{Sec 2. semantc}
In this section, we provide the definition and interpretation of semantic security. The concept originates from the field of cryptography \cite{GOLDWASSER1984270}, but has also been extended to and used in the realm of physical layer security \cite{bellare2012semantic}.

We recall basic notions of wiretap channels and wiretap codes, which will be used in the definition of semantic security.
\begin{mydef}\label{def. Channel}
A wiretap channel
$T_{w} =((\mathcal{X},\mathcal{G}),(\mathcal{Y},\mathcal{F}),(\mathcal{Z},\\\mathcal{H}), K_{AB}, K_{{AE}})$ is defined by 
\begin{itemize}
    \item the measurable spaces $(\mathcal{X}, \mathcal{G})$ consisting  of input alphabet $\mathcal{X}$ with a $\sigma$-algebra $\mathcal{G}$ for the transmitter (Alice), $(\mathcal{Y}, \mathcal{F})$ representing the output alphabet with a $\sigma$-algebra $\mathcal{F}$ for the legitimate receiver (Bob), and $(\mathcal{Z},\mathcal{H})$ being the output alphabet and corresponding $\sigma$-algebra for the eavesdropper (Eve),
    \item the channel $K_{AB}$ from Alice to Bob defined by a  Markov kernel on $\mathcal{F}\times \mathcal{X}$ (i.e. a mapping $K_{AB}: \mathcal{F} \times \mathcal{X} \rightarrow [0,1]$ such that a) for each $x\in \mathcal{X}$ the map $\mathcal{F}\ni F \mapsto K_{AB}(F, x)$ is a probability measure and b) for each $F\in \mathcal{F}$ the map $\mathcal{X}\ni x \mapsto K_{AB}(F,x)$ is $\mathcal{G}$-measurable),
   \item the channel $K_{AE}$ from Alice to Eve given as a Markov kernel on $\mathcal{H} \times \mathcal{X}$ (i.e. a mapping $K_{AE}: \mathcal{H} \times \mathcal{X} \rightarrow [0,1]$ with properties that  are analogous to those of $K_{AB}$ from the previous item).
\end{itemize}
\end{mydef}

\begin{mydef}\label{def. Wiretap Code}
    An $(n,\mathcal{M})$-wiretap code $(\mathcal{C}_n, D_n)$ for the channel $T_{w}$ on the alphabet $\mathcal{X}$ consists of
    \begin{itemize}
        \item a randomized encoder $\mathcal{C}_n: \mathcal{M} \times \mathcal{W} \to \mathcal{X}^{n}$, where $\mathcal{M}$ is the set of all messages $\{1, \dots, L\}$ and $W$ is a random variable with values in $\mathcal{W}=\{1, \dots, L_1\}$ distributed according to a distribution $P_W$ on $\mathcal{W}$.
        In order to encode each message $m \in \mathcal{M}$, we perform a random experiment $W$ with outcomes $w \in\mathcal{W}$ to generate a sequence $x^n(m, w) \coloneq \mathcal{C}_n(m,w) \in \mathcal{X}^{ n}$.
        \item a (deterministic) decoder $D_n: \mathcal{Y}^{n} \to \mathcal{M}$.
    \end{itemize}
\end{mydef}

The information rate $R$ and the randomness rate $R_1$ are described by
\begin{align} 
    &R = \frac{1}{n}\log{|\mathcal{M}|} = \frac{1}{n}\log{L}\\ 
    &R_1 = \frac{1}{n}\log{|\mathcal{W}|} = \frac{1}{n}\log{L_1}.
\end{align}

By a wiretap codebook of block length $n \geq 1$, information rate $R \geq 0$ and randomness rate $R_1 \geq 0$, we mean the finite sequence $\{\mathcal{C}_n(m,w): m \in \mathcal{M}, w \in \mathcal{W}\}$ and $\mathcal{C}_n(m,w) \in \mathcal{X}^n$ is the transmitted codeword on the input alphabet. We denote the codebook by the same symbol $\mathcal{C}_n$.

For the definition of semantic security we assume that the goal of Eve is represented by a partition $\Pi$ of the message set $\mathcal{M}$ and her task is to determine the partition element $\pi \in \Pi$ that the transmitted message $m\in \mathcal{M}$ belongs to. Two typical goals are given in the next two examples.

\begin{exmp}
    If Eve aims to reconstruct the first bit of a binary message, the message set $\mathcal{M}$ will be partitioned into two subsets, one for messages starting with 0 and the other for those starting with 1.  
\end{exmp}

\begin{exmp}
    If Eve aims to reconstruct the entire message, the message set will be partitioned into singleton sets, i.e., $\Pi = \{\{m\}: m \in \mathcal{M}\}$.
\end{exmp}
The idea behind the definition of semantic security is to compare Eve's success probability in determining the correct partition element with pure guessing. A semantically secure communication scheme should have a small gap between these success probabilities for every possible partition of the message space.

\begin{mydef}\label{def. Semantic}
    The eavesdropper's maximum advantage in reconstructing the transmitted message under semantic security \cite{bellare2012semantic} criterion is given by
    \begin{align}\label{def:semantic-advantage-partition}
            Adv_{ss}(\mathcal{C}_n, Z^n) \coloneq \max_{P_M, \Pi} &\left(\sup_{g :\mathcal{Z}^n \to \Pi} P_{M,Z^n}(M\in g(Z^n))\right. \nonumber\\ 
            &-\left.\max_{\pi \in \Pi} P_{M}(M\in \pi) \right),
    \end{align}
where $P_M$ and $\Pi$ denote the distribution of the message $M$ and a partition of the message set $\mathcal{M}$ respectively, the function $g :\mathcal{Z}^n \to \Pi $ models the selection of a partition element of $\Pi$ based on the observation of Eve's channel output $Z^n$, and $P_{M,Z^{n}}$ denotes the joint distribution of the message and Eve's channel output.

    We say that the codebook $\mathcal{C}_n$ achieves $\delta-$semantic security, for $\delta > 0$, if 
    \[
    Adv_{ss}(\mathcal{C}_n, Z^n) \leq \delta.
    \]
\end{mydef}
We interpret the semantic security advantage defined in \eqref{def:semantic-advantage-partition} as follows: The term $\max\limits_{g :\mathcal{Z}^n \to \Pi}\! P_{M,Z^n}(M\!\in\! g(Z^n))$ denotes the maximum success probability for determining the partition element of $\Pi$ upon observation of Eve's channel output. This implies that Eve's objective is defined by a partition $\Pi$ of $\mathcal{M}$, and Eve's goal is to guess to which partition element the transmitted message $M$ belongs. The term $\max\limits_{\pi \in \Pi} P_{M}(M\in \pi)$ represents the most likely partition element under the distribution $P_M$ without knowledge of Eve's channel output, i.e., the optimal guess. Therefore, the right-hand side of \eqref{def:semantic-advantage-partition} gives the largest advantage over pure guessing Eve can achieve based on the knowledge of $P_M$ and the choice of partition $\Pi$. If the codebook achieves $\delta-$semantic security, then Eve's maximum success probability  is $\delta-$close to pure guessing. 

The results in \cite[Proposition 1]{ling2014semantically} show that for the case of wireless channels, semantic security implies both the weak secrecy introduced in\cite{wyner1975} and strong secrecy from \cite{maurer1994strong} which have been used traditionally in the realm of physical layer security. Apart from this, semantic security has the advantage, in contrast to weak and strong secrecy, that it is operationally defined. This means that it delivers a quantitative statement about the maximum probability of success compared to pure guessing for a clearly defined class of eavesdropping attacks. 

In the context of secure communication over noisy channels, achieving semantically secure communication involves not only guaranteeing confidentiality but also reliability. The reliability of communication is evaluated by the average decoding error probability.  
\begin{mydef}\label{def. error}
    The average decoding error probability $P_{\mathcal{E}}(\mathcal{C}_n,D_n)$ for an  $(n,\mathcal{M})$-wiretap code is defined as
    \begin{align}        
    P_{\mathcal{E}}(\mathcal{C}_n,D_n) = &\frac{1}{LL_1} \sum_{w=1}^{L_1} \sum_{m=1}^L  \nonumber\\&P_{{Y}^n|{X}^n}(D({Y}^n) \neq m|{X}^n={x}^n(m,w))\label{eq. average decoding error},
    \end{align}
    where $P_{{Y}^n|{X}^n}$ denotes the conditional probability of legitimate receiver's channel output given the transmitted codeword.
\end{mydef}

We can now define the achievable secrecy rate that quantifies the rate at which reliable and secure communication is achieved.
\begin{mydef}\label{def. Achievable}
    A non-negative number $R_s$ is called an achievable secrecy rate under semantic security for a wiretap channel if there is a sequence $(\mathcal{C}_n,D_n)_{n \in \mathbb{N}}$ of $(n,\mathcal{M})$-codes such that for every $\epsilon_1, \epsilon_2,\epsilon_3 > 0$ there is $n_0 \in \mathbb{N}$ such that for all $n \geq n_0$,
\begin{equation*}
 P_{\mathcal{E}}(\mathcal{C}_n,D_n) \!<\! \epsilon_1, \
    Adv_{ss}(\mathcal{C}_n, {Z}^n) < \epsilon_2, \ \text{and}  \
    \frac{1}{n} \log L \geq R_s-\epsilon_3.
\end{equation*}
\end{mydef}

\section{System Model} \label{Sec. 3}
\subsection{Wireless Communication Model}\label{Subsec. Wireless}
We consider a RIS-assisted communication system as shown in Fig. \ref{fig. wireless}, where Alice equipped with $N_t$ antennas intends to send confidential information to Bob in the presence of a finite number of Eves, and Bob as well as the Eves are equipped with $N_r$ antennas. To enhance the security of the transmission, a RIS consisting of $N$ reflecting elements is deployed. Notations $\bH_{AB} \in \mathbb{C}^{N_r\times N_t}$, $\bH_{AR} \in \mathbb{C}^{N\times N_t}$ and $\bH_{RB} \in \mathbb{C}^{N_r\times N}$ are used for the direct channel between Alice and Bob, the channel from Alice to the RIS, and the channel from the RIS to Bob, respectively, whereas  $\bH_{AE_j} \in \mathbb{C}^{N_r\times N_t}, \bH_{RE_j}\in \mathbb{C}^{N_r\times N}$ denote the direct channel between Alice and Eve$_j$, and between the RIS and Eve$_j$, $j\in\{1,\dots,d\}$, respectively.

\begin{figure}[tb]
    \vspace{-0.35cm}
    \hspace{-1.5cm}
    \centering
    \includegraphics[width=0.4\textwidth]{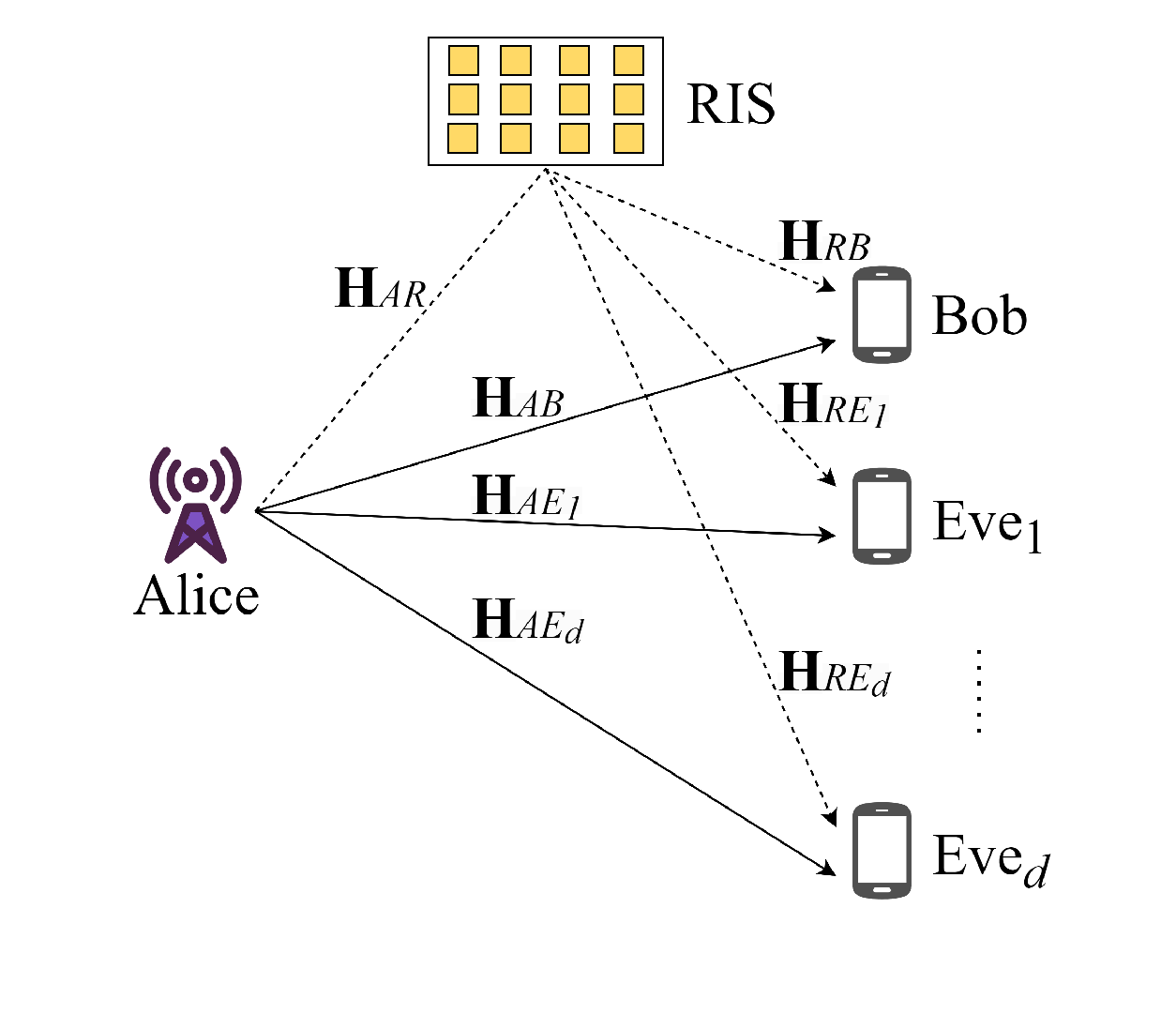}
    \vspace{-0.7cm}
    \caption{RIS-assisted secure communication system.}
    \vspace{-0.3cm}
    \label{fig. wireless}
\end{figure}

Utilizing both direct and RIS-assisted links, we aim to transmit two data streams, denoted by $X_1$ and $X_2$, each with a corresponding precoder $\bp_k \!\in\! \mathbb{C}^{N_t \times 1}, k\in\{1,2\}$. The random variables $X_1, X_2 \in \mathbb{C}$ over the set $\mathcal{X}$ denote the transmitted data for a single channel use. The RIS performs a linear mapping from the incident signal vector to a reflected signal vector based on an equivalent $N \times N$ diagonal phase-shift matrix denoted by $\boldsymbol{\theta} \in \mathbb{C}^{N \times N}$. Consequently, the received signals at Bob and Eve$_j$ are given by
\begin{align} 
    Y &= (\bH_{AB} + \bH_{RB}\btheta\bH_{AR}) (\bp_1 X_1 + \bp_2 X_2) + E_{AB},\label{eq.origRec}\\
    Z_j &= (\bH_{AE_j} + \bH_{RE_j}\btheta\bH_{AR}) (\bp_1 X_1 + \bp_2 X_2) + E_{AE_j},\label{eq.origRecEve}    
\end{align}
where $E_{AB},E_{AE_j} \in \mathbb{C}^{N_r \times 1}$ represent complex additive white Gaussian noises (AWGNs) at Bob and Eve$_j$, respectively.

In this paper, we maintain the separation of the two transmit data streams via spatial isolation using a beam space representation. To achieve this, we assume that the RIS is strategically deployed in a location that enables such spatial separation. In other words, to provide distinct paths, the difference in the path directions should exceed the spatial resolution \cite{1033686}. Having two distinct paths allows us to design the precoder $\bp_1$ to maximize the gain for the direct link while ensuring it lies in the null space of the RIS-assisted link, and vice versa for precoder $\bp_2$ \cite{1033686}. Consequently, we have $\bH_{RB}\btheta\bH_{AR}\bp_{1} = \bH_{AB}\bp_{2} = 0$ and hence the received signal at Bob in \eqref{eq.origRec} simplifies to

\begin{equation}
\begin{aligned}  
    Y &= \bH_{AB} \bp_1 X_1 + \bH_{RB}\btheta\bH_{AR} \bp_2 X_2 + E_{AB}.
    \label{eq.simpRec}
\end{aligned} 
\end{equation}

Similar to the transmit side, at receiver Bob, we employ a set of orthogonal combiners $\mathbf{v}_{B,k}$ to extract $X_k$ for $k \in \{1,2\}$ \cite{1033686}. Given the spatial separation and an appropriate design of the combiners, we have $\bv_{B,1}^H\bH_{RB}\btheta\bH_{AR} \bp_2 = \bv_{B,2}^H\bH_{AB} \bp_1 = 0$. Hence we can express the extraction $Y_1 \coloneq \bv_{B,1}^H Y$ and $Y_2 \coloneq \bv_{B,2}^H Y$ as
    \begin{equation}
    \begin{aligned}
        Y_{1} &= \bv_{B,1}^H\bH_{AB} \bp_1 X_1 + \bv_{B,1}^H E_{AB},\\
        Y_{2} &= \bv_{B,2}^H\bH_{RB}\btheta\bH_{AR} \bp_2 X_2 + \bv_{B,2}^H E_{AB}.\\
    \end{aligned}        
    \label{eq.combineRec1}
    \end{equation}

We define $\alpha_{AB,1} \coloneq \bv_{B,1}^H\bH_{AB} \bp_1$ to be the gain of the direct channel between Alice and Bob, and $\alpha_{AB,2}^{(\boldsymbol{\theta})} \coloneq \bv_{B,2}^H\bH_{RB}\btheta\bH_{AR} \bp_2$ to be the gain of RIS-assisted channel. The superscript $\boldsymbol{\theta}$ emphasizes that the configuration of the RIS has an impact on the gain. We denote $E_{AB,1}\coloneq \bv_{B,1}^H E_{AB}$ and $E_{AB,2} \coloneq \bv_{B,2}^H E_{AB}$ as complex AWGNs for the corresponding channels. Consequently, \eqref{eq.combineRec1} leads to two independent components:
\begin{equation} 
    \begin{aligned}
        Y_{1} &= \alpha_{AB,1} X_1 + E_{AB,1},\\
        Y_{2} &= \alpha_{AB,2}^{(\boldsymbol{\theta})} X_2 + E_{AB,2}.\\
    \end{aligned}        
    \label{eq.combineRec2}
    \end{equation}
    
This independence arises from the orthogonality of $\bv_{B,1}$ and $\bv_{B,2}$, i.e., $\bv_{B,1}^H\bv_{B,2}=0$.

For the Eves, we do not assume the use of orthogonal combiners which means that our security guarantees hold regardless of how the Eves choose to process their received signals. Hence, the received signal at Eve$_j$ remains as in \eqref{eq.origRecEve}.

\begin{Remark}
In this paper, we consider Bob equipped with multiple antennas. By leveraging the spatial separation inherent in the direct and RIS-assisted links and utilizing suitable combiners, we can effectively recover the transmitted data at Bob. This setting represents one way of establishing two independent communication paths between Alice and Bob. A similar method applies to other technologies that provide spatial separation such as distributed multiple-input multiple-output (MIMO) technique.
\end{Remark}

\subsection{RIS-assisted Wiretap Channel}
In the following part of the paper, we abstract the earlier-mentioned details concerning parameter settings at Alice, the RIS, and Bob, presuming they are appropriately configured to ensure the SSOC. Therefore, we focus on \emph{determining achievable communication rates} under the constraint of semantic security, while abstracting the technical assumptions from the preceding section.
We also extend the wireless channel model above from single channel use to multiple channel uses in order to investigate the relationship between the length of the code words, the quality of the security level, and the reliability level \cite{bloch2011physical}.

We propose a RIS-assisted wiretap system model with SSOC as shown in Fig. \ref{fig.1}, which consists of one Alice, one Bob,  multiple Eves (Eve$_1$, $\dots$, Eve$_d$), and one RIS. Our objective is to send messages reliably between the legitimate parties while keeping all Eves ignorant of messages in the sense of achievable $\delta-$semantic security (cf. Definition \ref{def. Semantic}), suitably small $\delta>0$, with the assistance of a RIS. 

\begin{figure}[tb]
    \centering
    \includegraphics[width=0.5\textwidth]{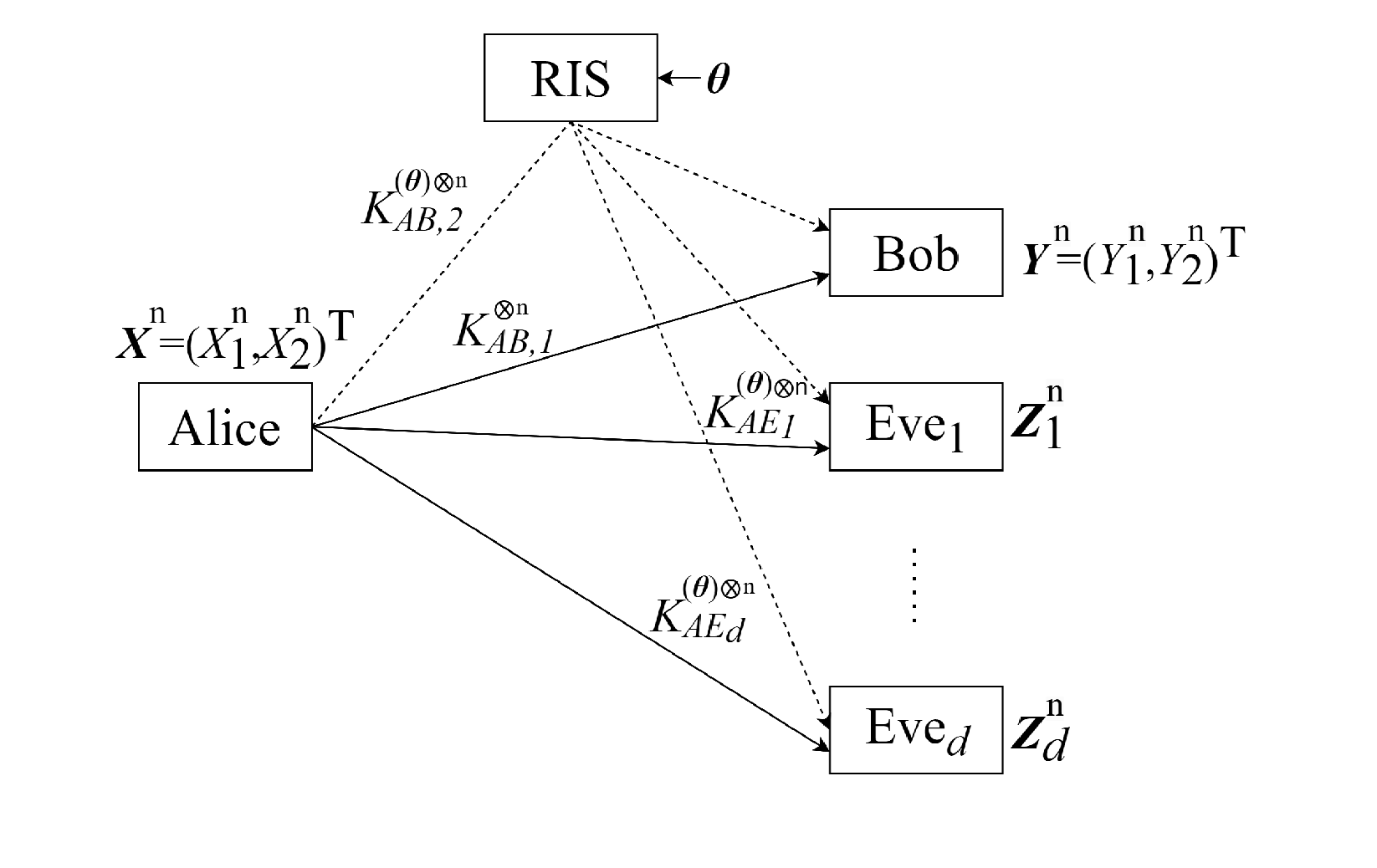}
    \vspace{-0.8cm}
    \caption{RIS-assisted wiretap channel model with SSOC.}
    \label{fig.1}
\end{figure}

We can extend the classic wiretap channel from Definition \ref{def. Channel} to a RIS-assisted wiretap channel as $T_{\text{RIS}} =((\mathcal{X},\mathcal{G}),(\mathcal{Y},\mathcal{F}),(\mathcal{Z}_j,\mathcal{H}_j), K^{(\boldsymbol{\theta})}_{AB}, K^{(\boldsymbol{\theta})}_{{AE}_j},\boldsymbol{\theta})$ for $j=1,\dots, d$. $\mathcal{Z}_j$ is the output alphabet with a $\sigma$-algebra $\mathcal{H}_j$ for Eve$_j$. 
The Markov kernel $K^{(\boldsymbol{\theta})}_{AB}$ denotes the product of $K_{AB,1}$ and $K^{(\boldsymbol{\theta})}_{AB,2}$, i.e. $K^{(\boldsymbol{\theta})}_{AB}(A_1 \times A_2, \boldsymbol{x})=K_{AB,1}(A_1,x_1)\times K^{(\boldsymbol{\theta})}_{AB,2}(A_2, x_2)$ for $x_1,x_2 \in \mathcal{X}$, $A_1,A_2 \in \mathcal{F}$, where $K_{AB,1},K^{(\boldsymbol{\theta})}_{AB,2}: \mathcal{F} \times \mathcal{X} \rightarrow [0,1]$. This is a generalization of the model \eqref{eq.combineRec2}. Based on the absence of assumptions about Eves, we generalize the Markov kernel between Alice and RIS for Eve$_j$ as $K^{(\boldsymbol{\theta})}_{{AE}_j}: \mathcal{H}_j \times \mathcal{X}^{2} \rightarrow [0,1]$. The $n$-fold product of $K(A,x)$ is defined as $K^{\otimes n}(\times^{n}_{i=1} A_i, x^n) = \prod^n_{i=1} K(A_i,x_i)$. The RIS is configured by the vector of parameters $\boldsymbol{\theta}\in \Theta$, which has an impact on the main channel and eavesdropper channels. 

Let $\boldsymbol{X}^n = [X^n_1, X^n_2]^T$ be a random vector, where $X_1, X_2$ are the random variables over the set $\mathcal{X}$ for the direct channel input and the RIS-assisted channel input, respectively. The superscript $n$ denotes the number of channel uses. 
With proper spatial separation and appropriate orthogonal combiners, the output of channel $K^{(\boldsymbol{\theta}){\otimes n}}_{AB}$ at Bob can be split into two vector components: $\boldsymbol{Y}^n = [Y^n_1, Y^n_2]^T$, where $Y_1, Y_2$ are the random variables over the set $\mathcal{Y}$ for the corresponding channel outputs. An input distribution $Q_{\boldsymbol{X}}$ induces an output distribution $Q_{\boldsymbol{Y}}$ via $Q_{\boldsymbol{Y}}(A) = \int K^{(\boldsymbol{\theta})}_{AB}(A,\boldsymbol{x})Q_{\boldsymbol{X}}(d\boldsymbol{x})$ for $A \in \mathcal{F}$. The output of channel $K^{(\boldsymbol{\theta}){\otimes n}}_{AE_j}$ at Eve$_j$ is denoted by $\boldsymbol{Z}_j^n$ over the set $\mathcal{Z}_j^{n}$. The induced output distribution is $Q_{\boldsymbol{Z}_j}(A) = \int K^{(\boldsymbol{\theta})}_{AE_j}(A,\boldsymbol{x})Q_{\boldsymbol{X}}(d\boldsymbol{x})$ for $A \in \mathcal{H}_j$. The n-products of corresponding distributions are denoted by $Q_{\boldsymbol{X}}^n$, $Q_{\boldsymbol{Y}}^n$ and $Q_{\boldsymbol{Z}_j}^n$.

The wiretap code for the channel $T_{\text{RIS}}$ is defined analogous to Definition \ref{def. Wiretap Code}. The difference is that for every $\boldsymbol{\theta}\in \Theta$, we have a randomized encoder $\mathcal{C}_n^{(\boldsymbol{\theta})}$ with vector output on the alphabet $\mathcal{X}^{2\times n}$ and a (deterministic) decoder $D_n$ with vector input on the alphabet $\mathcal{Y}^{2\times n}$. The decoder $D_n$ is not directly related to $\boldsymbol{\theta}$, but is related to the channel information, which is affected by $\boldsymbol{\theta}$. We will drop the superscript $(\boldsymbol{\theta})$ in $\mathcal{C}_n^{(\boldsymbol{\theta})}$ to simplify the notation.

In order to encode $m \in \mathcal{M}$, we perform a random experiment with $W$ according to $P_W$ to generate a sequence $\boldsymbol{x}^n(m,w) \in \mathcal{X}^{2\times n}$. The sequence $\boldsymbol{x}^n (m, W) \in \mathcal{C}_n$ (corresponding to random selection of $w \in \mathcal{W}$) is chosen as the input for the channel. Then the induced output distribution of $\boldsymbol{Z}_j^n$ for each message $m \in \mathcal{M}$ by a given codebook $\mathcal{C}_n$ is

\begin{equation}
    P_{\boldsymbol{Z}_j^n|\boldsymbol{X}^n(m,\cdot)} \coloneq
      \sum_{w=1}^{L_1} P_W(w) P_{\boldsymbol{Z}_j^n|\boldsymbol{X}^n(m,w)}.
\end{equation}

\section{Main Results} \label{Sec. Main results}
In this section, we first introduce a measure of secrecy which is equivalent to semantic security but which is easier to work with analytically. Then we explicitly derive an upper bound on achievable secrecy rates for the RIS-assisted wiretap channel by following the line of arguments about the channel resolvability from \cite{frey2018resolvability}.

\subsection{Distinguishing Security}
 Since it is not easy to prove semantic security for a particular wiretap channel directly, we introduce an equivalent but analytically more accessible notion of secrecy. Before this, we define the underlying distance measure.

\begin{mydef}[Total Variation Distance \protect{\cite[5.24]{ccinlar2011probability}}]
 \label{def. TV distance}
    The total variation distance between two probability measures $P$ and $Q$ defined on the measurable space $(\Omega,\mathcal{F})$ is given by
    \begin{equation}
        ||P- Q||_{\rm{TV}} = \sup_{A \in \mathcal{F}}|P(A)-Q(A)|.
    \end{equation}
\end{mydef}

\begin{mydef}\label{def. Distinguishing Security}
     The eavesdropper's maximum advantage in reconstructing the transmitted message under distinguishing security \cite{bellare2012semantic} criterion is given by 
     \begin{equation}
            Adv_{ds}(\mathcal{C}_n, Z^n) 
            \coloneq \max_{m_1,m_2 \in \mathcal{M}} ||P_{Z^n|m_1} - P_{Z^n|m_2}||_{\rm{TV}}\label{distinguishing TV},
     \end{equation}
     where $\mathcal{M}$ is the message set and $P_{Z^n|m_1},P_{Z^n|m_2}$ are the output distribution given messages $m_1,m_2$, respectively.
    
     We say that the codebook $\mathcal{C}_n$ achieves $\delta-$distinguishing security, for $\delta > 0$, if 
    \[
     Adv_{ds}(\mathcal{C}_n, Z^n) \leq \delta.
    \]
\end{mydef}

\begin{theorem}\label{theo. Equivalence}
\begin{itemize}
    \item[] 
    \item Semantic security and distinguishing security are equivalent, \!i.e.,\! the following inequalities are valid:
    \begin{equation*}
        Adv_{ss} (\mathcal{C}_n, Z^n) \leq Adv_{ds} (\mathcal{C}_n, Z^n) \leq 2 \cdot Adv_{ss}(\mathcal{C}_n, Z^n). 
    \end{equation*}
    \item For any probability measure $P'$ on $Z^n$, we have
    \begin{equation*}
         Adv_{ds}(\mathcal{C}_n, Z^n) \leq 2 \cdot \max_{m\in \mathcal{M}} ||P_{Z^n|m} - P'||_{\rm{TV}}.
    \end{equation*} 
\end{itemize}
\end{theorem}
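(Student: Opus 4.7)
The plan is to establish the three inequalities in turn, following the strategy from \cite{bellare2012semantic}. I would first dispatch the straightforward second bullet, then prove the easy direction $Adv_{ds} \leq 2 \cdot Adv_{ss}$ via a specific choice of partition, and finally tackle what I expect to be the main obstacle: the direction $Adv_{ss} \leq Adv_{ds}$, which must be controlled uniformly over every admissible partition and message distribution.

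The second bullet is an immediate application of the triangle inequality for total variation distance (Definition~\ref{def. TV distance}): for any $m_1, m_2 \in \mathcal{M}$ and any reference measure $P'$ on $\mathcal{Z}^n$,
\begin{equation*}
    ||P_{Z^n|m_1} - P_{Z^n|m_2}||_{\rm{TV}} \leq ||P_{Z^n|m_1} - P'||_{\rm{TV}} + ||P' - P_{Z^n|m_2}||_{\rm{TV}} \leq 2 \max_{m\in\mathcal{M}} ||P_{Z^n|m} - P'||_{\rm{TV}},
\end{equation*}
and taking the maximum over $m_1, m_2$ on the left-hand side yields the bound.

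For $Adv_{ds} \leq 2 \cdot Adv_{ss}$, the idea is that the semantic adversary can specialise her attack to a binary hypothesis test. Given any two messages $m_1, m_2 \in \mathcal{M}$, I would plug into Definition~\ref{def. Semantic} the partition $\Pi = \{\{m_1\}, \{m_2\}\}$ together with $P_M$ uniform on $\{m_1, m_2\}$ (and zero elsewhere). Then $\max_{\pi} P_M(\pi) = 1/2$, while the classical characterisation of total variation distance as the optimal success probability of a two-hypothesis test gives $\sup_g P_{M,Z^n}(M \in g(Z^n)) = \tfrac{1}{2} + \tfrac{1}{2} ||P_{Z^n|m_1} - P_{Z^n|m_2}||_{\rm{TV}}$. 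Hence $Adv_{ss} \geq \tfrac{1}{2} ||P_{Z^n|m_1} - P_{Z^n|m_2}||_{\rm{TV}}$, and maximising over $m_1, m_2$ completes this direction.

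The hard part is $Adv_{ss} \leq Adv_{ds}$, since the inequality must hold uniformly over every partition and message distribution, and we cannot exploit any specific structure of the adversary's goal. My approach is a reference-message argument: fix an arbitrary $m_0 \in \mathcal{M}$ and an arbitrary admissible triple $(P_M, \Pi, g)$, and let $B_\pi = g^{-1}(\pi)$, so that $\{B_\pi\}_{\pi \in \Pi}$ partitions $\mathcal{Z}^n$. Expanding the success probability gives
\begin{equation*}
    P_{M,Z^n}(M \in g(Z^n)) = \sum_{\pi \in \Pi} \sum_{m \in \pi} P_M(m)\, P_{Z^n|m}(B_\pi).
\end{equation*}
The key step is to replace $P_{Z^n|m}(B_\pi)$ by the reference value $P_{Z^n|m_0}(B_\pi)$ at a cost of at most $||P_{Z^n|m} - P_{Z^n|m_0}||_{\rm{TV}} \leq Adv_{ds}$, which is valid for every measurable set by Definition~\ref{def. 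TV distance}. Combining this with $P_M(\pi) \leq \max_{\pi'} P_M(\pi')$ and $\sum_\pi P_{Z^n|m_0}(B_\pi) = 1$ yields $P_{M,Z^n}(M \in g(Z^n)) \leq \max_{\pi \in \Pi} P_M(\pi) + Adv_{ds}$. Rearranging and taking the supremum over $g$, $P_M$, and $\Pi$ delivers the claim.
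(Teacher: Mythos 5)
Your proposal is correct, but it is worth noting that the paper does not actually prove this theorem: it disposes of the first bullet by citing \cite[Theorem 1]{bellare2012semantic} and of the second by the one-line remark that it follows from the triangle inequality. What you have written is therefore a self-contained reconstruction of the arguments underlying the cited result, adapted to the partition-based formulation of Definition~\ref{def. Semantic}. All three steps check out: the triangle-inequality bound for the second bullet is exactly the paper's intended argument; the reduction of $Adv_{ds}$ to a binary Bayes test with uniform prior correctly uses the identity that the optimal success probability equals $\tfrac12 + \tfrac12\lVert P_{Z^n|m_1}-P_{Z^n|m_2}\rVert_{\rm TV}$ under the sup-over-sets normalisation of Definition~\ref{def. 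TV distance}; and the reference-message argument for $Adv_{ss}\le Adv_{ds}$ is sound, since the error incurred by swapping $P_{Z^n|m}(B_\pi)$ for $P_{Z^n|m_0}(B_\pi)$ is bounded by $Adv_{ds}$ for each $m$ and the weights $P_M(m)$ sum to one, while the reference term telescopes to $\sum_\pi P_M(\pi)P_{Z^n|m_0}(B_\pi)\le\max_\pi P_M(\pi)$. The only cosmetic issue is that $\Pi=\{\{m_1\},\{m_2\}\}$ is not literally a partition of $\mathcal{M}$ as Definition~\ref{def. Semantic} requires; take instead $\Pi=\{\{m_1\},\mathcal{M}\setminus\{m_1\}\}$ with $P_M$ uniform on $\{m_1,m_2\}$, which leaves every quantity in your computation unchanged. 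Since $\mathcal{M}$ is finite, there are no measurability concerns with the sets $g^{-1}(\pi)$.
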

The first item follows from \cite[Theorem 1]{bellare2012semantic} and the second item follows from the triangle inequality.

Theorem \ref{theo. Equivalence} helps us to obtain the semantic secrecy results by proving Theorem \ref{Distinguishing achievablity}. 

\subsection{Distinguishing and Semantic Security Achievability}
Before stating the theorem on achievable secrecy rate, we provide a definition of the information density \cite[Eq. (13)]{goldfeld2016semantic} which is adapted to our case. For all $(\boldsymbol{x}^{n},\boldsymbol{z}_j^{n}) \in \mathcal{X}^{2\times n} \times \mathcal{Z}_j^{2\times n}$ and the Markov kernel $K^{(\boldsymbol{\theta})}_{AE_j}$, $j= 1,\dots,d.$, the information density is given by
\begin{equation}
    i[\boldsymbol{x}^{n};\boldsymbol{z}_j^{n}] = \log\left(\frac{dK^{(\boldsymbol{\theta}){\otimes n}}_{AE_j}(\ \cdot \ ,\boldsymbol{x}^n)}{dQ_{\boldsymbol{Z}_j}^n(\cdot)}(\boldsymbol{z}_j^n)\right), \label{eq. def. information density}
\end{equation} 
where the derivative on the right-hand side is the Radon-Nikodym derivative \cite[5.18]{ccinlar2011probability} of the measure $K^{(\boldsymbol{\theta}){\otimes n}}_{AE_j}(\ \cdot \ ,\boldsymbol{x}^n)$ with respect to measure $Q_{\boldsymbol{Z}_j}^n(\cdot)$. The Radon-Nikodym derivative exists for $Q_{\boldsymbol{X}}^n$-a.e. $\boldsymbol{x}$ due to the following argument:\\
For $A \in \mathcal{H}_j$ we have $Q_{\boldsymbol{Z}_j}^n(A) = \int K^{(\boldsymbol{\theta})\otimes n}_{AE_j}(A,\boldsymbol{x})Q^n_{\boldsymbol{X}}(d\boldsymbol{x})$. If $Q_{\boldsymbol{Z}_j}^n(A) = 0$, this implies  $K^{(\boldsymbol{\theta})\otimes n}_{AE_j}(A,\boldsymbol{x}) = 0$ for $Q_{\boldsymbol{X}}^n$-a.e. $\boldsymbol{x}$. Consequently $K^{(\boldsymbol{\theta})\otimes n}_{AE_j}(\cdot,\boldsymbol{x})$ is absolutely continuous with respect to $Q_{\boldsymbol{Z}_j}^n$, in symbols $K^{(\boldsymbol{\theta})\otimes n}_{AE_j}(\cdot,\boldsymbol{x}) \ll Q_{\boldsymbol{Z}_j}^n$ for $Q_{\boldsymbol{X}}^n$-a.e. $\boldsymbol{x}$. Thus the Radon-Nikodym theorem \cite[Theorem 5.11]{ccinlar2011probability} states the existence of the Radon-Nikodym derivative with the condition above. This is also true for the existence of the Radon-Nikodym derivatives in $i[\boldsymbol{x}^{n};\boldsymbol{y}^{n}]$ for almost all $(\boldsymbol{x}^{n},\boldsymbol{y}^{n}) \in \mathcal{X}^{2\times n} \times \mathcal{Y}^{2\times n}$ and the Markov kernel $K^{(\boldsymbol{\theta})\otimes n}_{AB}$.

\begin{theorem}[Achievable Secrecy Rate under Distinguishing and Semantic Security]\label{Distinguishing achievablity}
    Suppose $((\mathcal{X},\mathcal{G}),(\mathcal{Y},\mathcal{F}),(\mathcal{Z}_j,\mathcal{H}_j), K^{(\boldsymbol{\theta})}_{AB},K^{(\boldsymbol{\theta})}_{{AE}_j},\boldsymbol{\theta})$ is a RIS-assisted wiretap channel ($j=1,\dots, d$), $Q_{\boldsymbol{X}}$ is the input distribution, $Q_{\boldsymbol{X},\boldsymbol{Y},\boldsymbol{Z}}$ is the joint distribution induced by $Q_{\boldsymbol{X}},K^{(\boldsymbol{\theta})}_{AB}$ and $K^{(\boldsymbol{\theta})}_{{AE}_j}$. Assume that:
    \begin{itemize}
        \item the moment generation functions of $i[\boldsymbol{X};\boldsymbol{Y}]$ and $i[\boldsymbol{X};\boldsymbol{Z}_j]$, $j= 1,\dots,d$, exist and are finite for some $t > 0$, i.e.
        $\mathbb{E}[\exp{(ti[\boldsymbol{X};\boldsymbol{Y}])}] < +\infty$ and $\mathbb{E}[\exp{(ti[\boldsymbol{X};\boldsymbol{Z}_j])}] < +\infty$ for $j= 1,\dots,d$.
        \item $R_s<I[\boldsymbol{X};\boldsymbol{Y}]-\max\limits_{j=1,\dots,d} I[\boldsymbol{X};\boldsymbol{Z}_j]$.
    \end{itemize}
    
    Then there exist $\xi_j>0$, $j= 1,\dots,d$, $\lambda_1,\lambda_2>0$ such that for  sufficiently large $n\in\mathbb{N}$ we have\\
    $||P_{\boldsymbol{Z}_j^n|\boldsymbol{X}^n(m,\cdot)} - Q_{\boldsymbol{Z}_j}^n||_{\rm{TV}} \leq \exp\{-n\xi_j\}$ (Distinguishing Security) and 
    $P_{\mathcal{E}}(\mathcal{C}_n,D_n) \leq \exp\{-n\lambda_2\}$ (Reliability)
    with probability at least $1-\exp\{-n\lambda_1\}$.
    
    Consequently, by Theorem \ref{theo. Equivalence}, we obtain for the semantic security advantage:
    \begin{equation*}\label{eq. semantic-theorem}
        Adv_{ss}(\mathcal{C}_n, \boldsymbol{Z}_j^{n}) \leq 2\cdot \exp\{-n\xi_j\},  \quad j=1, \ldots ,d.      \end{equation*}  
\end{theorem}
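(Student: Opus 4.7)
The plan is to use a random coding argument combined with the channel resolvability approach of \cite{frey2018resolvability}, so that reliability and distinguishing security emerge from two concentration events whose intersection has probability at least $1-\exp\{-n\lambda_1\}$. First I would draw the $LL_1$ codewords $\boldsymbol{x}^n(m,w)$ independently from $Q_{\boldsymbol{X}}^n$. Picking a small $\gamma > 0$, I set the randomness rate to $R_1 := \max_{j} I[\boldsymbol{X};\boldsymbol{Z}_j] + \gamma$ and the message rate to $R := R_s$; the hypothesis $R_s < I[\boldsymbol{X};\boldsymbol{Y}] - \max_j I[\boldsymbol{X};\boldsymbol{Z}_j]$ then guarantees $R + R_1 < I[\boldsymbol{X};\boldsymbol{Y}] - \gamma$ for $\gamma$ small enough, which is the key double-rate inequality used below.

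For reliability, I would use an information-density threshold decoder that outputs the unique pair $(\hat m, \hat w)$ with $i[\boldsymbol{x}^n(\hat m,\hat w);\boldsymbol{y}^n] > n(I[\boldsymbol{X};\boldsymbol{Y}] - \gamma/2)$. The expected decoding error over codebook and channel splits into a ``correct-codeword-below-threshold'' term, bounded by an exponential Chernoff tail of $i[\boldsymbol{X};\boldsymbol{Y}]$ (available thanks to the finite moment generating function assumption), and a ``competitor-above-threshold'' term, bounded by a union bound over $LL_1 - 1 \leq \exp\{n(R+R_1)\}$ alternate codewords combined with the standard change-of-measure estimate $\exp\{-n(I[\boldsymbol{X};\boldsymbol{Y}] - \gamma/2)\}$. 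Averaging, followed by Markov's inequality applied to the random average error, then yields $P_{\mathcal{E}}(\mathcal{C}_n,D_n) \leq \exp\{-n\lambda_2\}$ on an event of codebook probability at least $1-\exp\{-n\lambda_1'\}$.

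For distinguishing security, fix $m$ and $j$: the per-message output law $P_{\boldsymbol{Z}_j^n|\boldsymbol{X}^n(m,\cdot)}$ is the empirical average of $K^{(\boldsymbol{\theta})\otimes n}_{AE_j}(\cdot,\boldsymbol{x}^n(m,w))$ over $w\in\mathcal{W}$, which is exactly the setting of the channel resolvability result of \cite{frey2018resolvability}. Since $R_1 > I[\boldsymbol{X};\boldsymbol{Z}_j]$ and the moment generating function of $i[\boldsymbol{X};\boldsymbol{Z}_j]$ is finite, that result delivers
\[
||P_{\boldsymbol{Z}_j^n|\boldsymbol{X}^n(m,\cdot)} - Q_{\boldsymbol{Z}_j}^n||_{\rm{TV}} \leq \exp\{-n\xi_j\}
\]
off an event of \emph{doubly} exponentially small codebook probability. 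Taking a union bound over the $\exp\{nR\}$ messages and the $d$ eavesdroppers keeps the overall failure probability doubly exponentially small; intersecting this with the reliability event produces a single realisation $\mathcal{C}_n$ satisfying both bounds with probability at least $1 - \exp\{-n\lambda_1\}$ for some $\lambda_1 > 0$. The semantic security conclusion then follows by applying the first inequality of Theorem \ref{theo. Equivalence} for each $j$.

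The main obstacle is ensuring that the resolvability bound holds \emph{simultaneously} for all exponentially many messages with only an exponentially small codebook failure probability --- this forces the analysis to operate at the doubly exponential level, which is precisely why the finite moment generating function hypotheses on $i[\boldsymbol{X};\boldsymbol{Y}]$ and $i[\boldsymbol{X};\boldsymbol{Z}_j]$ are indispensable for the Chernoff arguments of \cite{frey2018resolvability}. A secondary technical point is the general measurable-space setting, which necessitates working with Radon--Nikodym derivatives throughout; however this has already been set up in the information density definition preceding the theorem, so the resolvability bound applies without modification.
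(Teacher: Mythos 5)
Your proposal is correct and follows essentially the same architecture as the paper's proof: i.i.d.\ random coding from $Q_{\boldsymbol{X}}^n$ with $R_1$ set just above $\max_j I[\boldsymbol{X};\boldsymbol{Z}_j]$, channel resolvability with doubly exponential concentration for the security part, an information-density threshold decoder with a union bound over the $\exp\{n(R+R_1)\}$ codeword pairs for reliability, Markov's inequality to concentrate the random average error, and a final union bound combining the $d$ eavesdroppers' events with the reliability event. The only substantive difference is that you invoke the resolvability concentration of \cite{frey2018resolvability} as a black box, whereas the paper re-derives it in full --- McDiarmid's inequality with bounded-differences constant $1/L_1$, plus a typical/atypical splitting to bound $\mathbb{E}_{\mathcal{C}_n}\lVert P_{\boldsymbol{Z}_j^n|\boldsymbol{X}^n(m,\cdot)}-Q_{\boldsymbol{Z}_j}^n\rVert_{\rm TV}$ via a Chernoff--Hoeffding bound and a R\'enyi-divergence tail estimate --- and your explicit union bound over the exponentially many messages (enabled by the doubly exponential failure probability) is a step the paper leaves implicit.
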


To prove Theorem \ref{Distinguishing achievablity}, we derive the upper bounds for the distinguishing security part and reliability part separately. In the end, both parts are combined into a single statement, concluding the proof. Some auxiliary results for the proof are provided in Appendix \ref{Sec. Auxiliary Results}.

\subsubsection{Distinguishing Security}
The proof of distinguishing security relies on an application of McDiarmid's inequality which is given as follows:
\begin{theorem}[McDiarmid Inequality \cite{mcdiarmid1989method}] \label{theo. McDiarmid Inequality} 
    Let $X_1,\dots,X_n$ be independent random variables taking values in (measurable) sets $\mathcal{V}_1, \ldots , \mathcal{V}_n$, and $\boldsymbol{x}_{p}^n=(x_1,\dots,x_p,\dots,x_n), \boldsymbol{x}_{p}'^n=(x_1,\dots,x'_p,\dots,x_n) \in \mathcal{V}_1 \times \cdots \times \mathcal{V}_n$ pairs of vectors which differ only in the $p$-th component, $p=1,\ldots , n$. Suppose that for a given (measurable) function $f:\mathcal{V}_1 \times \cdots \times \mathcal{V}_n \to \mathbb{R}$ there exists $c\in \mathbb{R}$ such that
    \begin{equation}\label{bounded-differences}
        |f(\boldsymbol{x}_{p}^n)-f(\boldsymbol{x}_{p}'^n)| \leq c,
    \end{equation}
    for all $p=1, \ldots ,n$ and all $\boldsymbol{x}_{p}^n, \boldsymbol{x}_{p}'^{n} \in \mathcal{V}_1 \times \cdots \times \mathcal{V}_n$. Then for the random vector $\boldsymbol{X}^n=(X_1, \ldots , X_n)$ and any $\epsilon>0$ we have
    \begin{equation}\label{McDiarmid}
        \mathbb{P}(f(\boldsymbol{X}^n)-\mathbb{E}[f(\boldsymbol{X}^n)] \geq \epsilon) \leq \exp{\left\{\frac{-2{\epsilon}^2}{c^2n}\right\}}.
    \end{equation}
\end{theorem}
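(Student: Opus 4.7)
The plan is to establish McDiarmid's inequality by the classical Doob martingale argument combined with Hoeffding's lemma. First, I would introduce the filtration $\mathcal{F}_i = \sigma(X_1, \ldots, X_i)$ and form the Doob martingale $Z_i = \mathbb{E}[f(\boldsymbol{X}^n) \mid \mathcal{F}_i]$ for $i = 0, 1, \ldots, n$, so that $Z_0 = \mathbb{E}[f(\boldsymbol{X}^n)]$ and $Z_n = f(\boldsymbol{X}^n)$. The martingale differences $D_i = Z_i - Z_{i-1}$ telescope to $f(\boldsymbol{X}^n) - \mathbb{E}[f(\boldsymbol{X}^n)] = \sum_{i=1}^n D_i$, reducing the task to controlling this sum.

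The next step is to bound the conditional range of $D_i$. Exploiting independence of $X_1, \ldots, X_n$ together with Fubini's theorem, one obtains the representation
\begin{equation*}
Z_i = \int f(X_1, \ldots, X_i, y_{i+1}, \ldots, y_n)\, dP_{X_{i+1}}(y_{i+1}) \cdots dP_{X_n}(y_n).
\end{equation*}
From this representation and the bounded-differences hypothesis \eqref{bounded-differences}, for any two candidate values $x_i, x_i'$ of the $i$-th coordinate the two resulting conditional expectations differ by at most $c$. Hence, conditional on $\mathcal{F}_{i-1}$, the variable $D_i$ (viewed as a function of $X_i$) has conditional mean zero and is supported in an interval of length at most $c$.

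I would then invoke Hoeffding's lemma conditionally: a centered random variable taking values in an interval of length $c$ satisfies $\mathbb{E}[\exp(\lambda D_i) \mid \mathcal{F}_{i-1}] \leq \exp(\lambda^2 c^2 / 8)$ for every $\lambda \in \mathbb{R}$. Iterating this bound via the tower property yields $\mathbb{E}[\exp\{\lambda(f(\boldsymbol{X}^n) - \mathbb{E}[f(\boldsymbol{X}^n)])\}] \leq \exp(n \lambda^2 c^2 / 8)$. A Chernoff/Markov step then gives, for every $\lambda > 0$, $\mathbb{P}(f(\boldsymbol{X}^n) - \mathbb{E}[f(\boldsymbol{X}^n)] \geq \epsilon) \leq \exp(-\lambda \epsilon + n \lambda^2 c^2 / 8)$, and optimizing in $\lambda$ (with $\lambda^\star = 4\epsilon/(n c^2)$) produces the claimed tail bound $\exp(-2\epsilon^2/(c^2 n))$.

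The main obstacle I anticipate is the rigorous execution of the second step in the generality of arbitrary measurable spaces $\mathcal{V}_i$: one must justify the Fubini exchange on the product space, and the supremum/infimum over $X_i$ used to pin down the length of the conditional support interval must be handled in a measurable fashion (e.g., via an essential supremum with respect to the law of $X_i$). Once these measurability points are dispatched, the remainder is textbook concentration-of-measure machinery.
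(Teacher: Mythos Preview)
Your proof outline is the standard Doob-martingale/Hoeffding-lemma argument and is correct. Note, however, that the paper does not prove this theorem at all: it is quoted as a known result with a citation to \cite{mcdiarmid1989method} and then applied as a black box, so there is no ``paper's own proof'' to compare against.
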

The key point of the proof is to derive an upper bound on the expectation value appearing in \eqref{McDiarmid}.
The functions of interest are given by
    \begin{equation}\label{f(C)}
        f_j(\mathcal{C}_n) \coloneq ||P_{\boldsymbol{Z}_j^n|\boldsymbol{X}^n(m,\cdot)} - Q_{\boldsymbol{Z}_j}^n||_{\rm{TV}}, \quad j=1,\dots,d,
    \end{equation} where
    \begin{equation}\label{eq. observed-output-measure}
        P_{\boldsymbol{Z}_j^n|\boldsymbol{X}^n(m,\cdot)} = \frac{1}{L_1} \sum_{w=1}^{L_1} K^{(\boldsymbol{\theta}){\otimes n}}_{{AE}_j}(\ \cdot \ , \boldsymbol{x}^n(m,w)).
    \end{equation}
and we consider a uniformly distributed randomness $W$.

In the first step, we show that \eqref{bounded-differences} in Theorem \ref{theo. McDiarmid Inequality}
holds for $c= \frac{1}{L_1}$. Let $\mathcal{C}_n$ and $\mathcal{C_n'}$ denote codebooks which differ only in codeword corresponding to arbitrary but fixed tuple $(m',w')$.  By using the triangle inequality of total variation distance \cite[Proposition 4.2.5]{athreya2006measure}, we obtain
\begin{align}
    &|f_j(\mathcal{C_n})-f_j(\mathcal{C'_n})| \nonumber\\
    \leq &\lVert \frac{1}{L_1} \sum_{w=1}^{L_1} K^{(\boldsymbol{\theta}){\otimes n}}_{{AE}_j}(\cdot, \boldsymbol{x}^n(m,w))\nonumber\\ 
    &- \frac{1}{L_1} \sum_{w=1}^{L_1} K^{(\boldsymbol{\theta}){\otimes n}}_{{AE}_j}(\cdot, \boldsymbol{x'}^n(m',w'))\rVert_{\rm TV}\nonumber\\
    = &\frac{1}{L_1}\lVert K^{(\boldsymbol{\theta}){\otimes n}}_{{AE}_j}(\cdot, \boldsymbol{x}^n(m,w))\nonumber\\
    &-K^{(\boldsymbol{\theta}){\otimes n}}_{{AE}_j}(\cdot, \boldsymbol{x'}^n(m',w'))\rVert_{\rm TV}\nonumber\\
    \leq &\frac{1}{L_1}, \label{eq. c bounded}
\end{align} 
where the last step follows by the fact that the total variation distance between any two probability measures is always less than or equal to 1.

In what follows we consider the random codebook $\mathcal{C}_n$,
where the random codewords $\boldsymbol{X}^n(m,w)$, $m\in\mathcal{M}, w\in \mathcal{W}$, are independent and distributed according to $Q_{\boldsymbol{X}}^n$.

As a consequence of \eqref{eq. c bounded} and Theorem \ref{theo. McDiarmid Inequality}, we obtain the following inequality for the random codebook and $j=1, \ldots ,d$:
\begin{equation}\label{eq. derived-McDiarmid}
\mathbb{P}_{\mathcal{C}_n}\left(f_j(\mathcal{C}_n)-\mathbb{E}_{\mathcal{C}_n}[f_j(\mathcal{C}_n)] \geq \epsilon_j\right) \leq \exp{\left\{-2{\epsilon_j}^2 L_1\right\}}.
\end{equation}

Our next goal is to derive an upper bound on the expectation $\mathbb{E}_{\mathcal{C}_n}(f_j(\mathcal{C}_n))$, $j=1,\ldots, d,$ appearing in \eqref{eq. derived-McDiarmid}.

\begin{lemma}[Bound for Expectation of TV Distance]\label{Bound for Expectation of TV Distance}
    Suppose that the moment generating functions of $i[\boldsymbol{X};\boldsymbol{Z}_j]$ exist and are finite for $j=1,\dots,d$ and some $t > 0$, and the randomness rate satisfies $R_1 > \max\limits_{j=1,\dots,d}I[\boldsymbol{X};\boldsymbol{Z}_j]$, then there exist  $\beta_{j} > 0$ such that
    \begin{equation}
       \mathbb{E}_{\mathcal{C}_n}(f_j(\mathcal{C}_n)) \!=\! \mathbb{E}_{\mathcal{C}_n}[||P_{\boldsymbol{Z}_j^n|\boldsymbol{X}^n\!(m,\cdot)} \!-\!Q_{\boldsymbol{Z}_j}^n||_{\rm{TV}}] \!\leq\! \exp\!{\{\!-n\beta_{j}\!\}},
    \end{equation} 
    for $j=1,\ldots ,d$ and all sufficiently large $n\in\mathbb{N}$.
\end{lemma}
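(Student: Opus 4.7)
The plan is to adapt the classical channel-resolvability argument (Han--Verdú type) to our measure-theoretic setting, following \cite{frey2018resolvability}. Intuitively, since $L_1=\exp\{nR_1\}$ exceeds $\exp\{nI[\boldsymbol{X};\boldsymbol{Z}_j]\}$, averaging the output kernels over $L_1$ i.i.d. codewords drives the induced mixture towards $Q_{\boldsymbol{Z}_j}^n$ in total variation, and the MGF assumption lets us upgrade this vanishing behavior to an exponential rate.

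As a first step I would rewrite the TV distance through its Radon--Nikodym representation,
\begin{equation*}
f_j(\mathcal{C}_n) = \frac{1}{2}\int \Bigl|\, \frac{1}{L_1}\sum_{w=1}^{L_1}\phi_j(\boldsymbol{X}^n(m,w),\boldsymbol{z}) - 1 \,\Bigr|\,dQ_{\boldsymbol{Z}_j}^n(\boldsymbol{z}),
\end{equation*}
where $\phi_j(\boldsymbol{x},\boldsymbol{z}) := dK_{AE_j}^{(\boldsymbol{\theta})\otimes n}(\,\cdot\,,\boldsymbol{x})/dQ_{\boldsymbol{Z}_j}^n$ evaluated at $\boldsymbol{z}$ is well-defined almost surely by the absolute-continuity argument already established in the preamble to the theorem, and satisfies $\mathbb{E}_{\boldsymbol{X}\sim Q_{\boldsymbol{X}}^n}[\phi_j(\boldsymbol{X},\boldsymbol{z})]=1$ for $Q_{\boldsymbol{Z}_j}^n$-a.e. $\boldsymbol{z}$ by Fubini and the definition of $Q_{\boldsymbol{Z}_j}^n$. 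I would then fix a threshold $\gamma>0$ with $R_1>\max_j I[\boldsymbol{X};\boldsymbol{Z}_j]+\gamma$, define the typical set
\begin{equation*}
\mathcal{G}_{\gamma,j}:=\bigl\{(\boldsymbol{x},\boldsymbol{z}): i[\boldsymbol{x};\boldsymbol{z}] \le n(I[\boldsymbol{X};\boldsymbol{Z}_j]+\gamma)\bigr\},
\end{equation*}
and introduce the clipped derivative $\tilde\phi_j := \phi_j\mathbf{1}_{\mathcal{G}_{\gamma,j}}$, which is bounded above by $M := \exp\{n(I[\boldsymbol{X};\boldsymbol{Z}_j]+\gamma)\}$.

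Adding and subtracting $\mathbb{E}_{\boldsymbol{X}}[\tilde\phi_j(\boldsymbol{X},\boldsymbol{z})]$ inside the absolute value and using the triangle inequality splits $\mathbb{E}_{\mathcal{C}_n}[f_j(\mathcal{C}_n)]$ into a concentration term and a truncation term. The truncation term reduces, via Fubini and the identity $1-\mathbb{E}_{\boldsymbol{X}}[\tilde\phi_j(\boldsymbol{X},\boldsymbol{z})]=\int \phi_j(\boldsymbol{x},\boldsymbol{z})\mathbf{1}_{\mathcal{G}_{\gamma,j}^c}\,dQ_{\boldsymbol{X}}^n(\boldsymbol{x})$, to $2\,Q_{\boldsymbol{X},\boldsymbol{Z}_j}^n(\mathcal{G}_{\gamma,j}^c)$; the finite MGF at some $t>0$ then yields a Chernoff/Cramér bound $Q_{\boldsymbol{X},\boldsymbol{Z}_j}^n(\mathcal{G}_{\gamma,j}^c)\le\exp\{-n\alpha_j(\gamma)\}$ with $\alpha_j(\gamma)>0$ for sufficiently small $\gamma$. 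The concentration term, by Cauchy--Schwarz under the codebook expectation followed by a second Cauchy--Schwarz in $\boldsymbol{z}$, is bounded by $\sqrt{M/L_1}=\exp\{n(I[\boldsymbol{X};\boldsymbol{Z}_j]+\gamma-R_1)/2\}$, exploiting the i.i.d. structure of the codewords together with $\mathbb{E}\tilde\phi_j^2 \le M\cdot\mathbb{E}\tilde\phi_j \le M$.

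Combining the two bounds and taking $\gamma$ small enough (possible because $R_1>\max_j I[\boldsymbol{X};\boldsymbol{Z}_j]$ strictly) yields $\mathbb{E}_{\mathcal{C}_n}[f_j(\mathcal{C}_n)]\le\exp\{-n\beta_j\}$ for some $\beta_j>0$. The main obstacle, I expect, is the careful measure-theoretic bookkeeping: with general alphabets, typical-set counting is unavailable and all estimates must go through Radon--Nikodym derivatives, with Fubini/Tonelli invoked throughout to swap the codebook expectation with the integral over $\boldsymbol{z}$. A secondary point is to keep $\alpha_j(\gamma)$ bounded away from zero for a common $\gamma$ across the finitely many $j=1,\dots,d$; this follows because $d$ is finite and all MGFs are finite at a common $t>0$.
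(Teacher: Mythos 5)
Your proposal is correct and follows the same overall strategy as the paper: both split the induced output measure along the typical set $\{i[\boldsymbol{x}^n;\boldsymbol{z}_j^n]\le n(I[\boldsymbol{X};\boldsymbol{Z}_j]+\gamma)\}$ of the information density, bound the atypical mass $Q_{\boldsymbol{X},\boldsymbol{Z}_j}^n(\mathcal{G}_{\gamma,j}^c)$ by an exponential Chernoff/Cram\'er estimate whose exponent is positive for small $\gamma$ thanks to the finite MGF (the paper phrases this via the R\'enyi divergence $D_\alpha$ with $\alpha\to 1$, which is the same computation with $t=\alpha-1$), and then control the typical part using that the clipped Radon--Nikodym derivative is bounded by $M=\exp\{n(I[\boldsymbol{X};\boldsymbol{Z}_j]+\gamma)\}$. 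The one genuine difference is the concentration tool for the typical part: the paper works with the one-sided representation $\mathbb{E}_{Q}[\tfrac{dP}{dQ}-1]^+$ (Lemma \ref{rewrite TV distance}), applies a Chernoff--Hoeffding bound (Lemma \ref{lem. chernoff}) to the sum of $L_1$ bounded i.i.d.\ terms, and integrates the resulting tail over $t$ via $\mathbb{E}[X]=\int_0^\infty P(X\ge t)\,dt$; you instead use a second-moment/Cauchy--Schwarz argument, bounding the $L^1$ deviation by $\sqrt{\operatorname{Var}/L_1}\le\sqrt{M/L_1}$ via $\mathbb{E}\tilde\phi_j^2\le M\,\mathbb{E}\tilde\phi_j\le M$. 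Both routes yield the identical exponent $\tfrac{1}{2}\bigl(R_1-I[\boldsymbol{X};\boldsymbol{Z}_j]-\gamma\bigr)$ for the typical term, so nothing is lost; your variance argument is shorter and avoids the tail-integration step, while the paper's Chernoff--Hoeffding machinery is what it reuses later to get the double-exponential codebook concentration, so it is already on hand. Your measure-theoretic bookkeeping (a.s.\ existence of $\phi_j$, $\mathbb{E}_{\boldsymbol{X}}[\phi_j(\boldsymbol{X},\boldsymbol{z})]=1$ for $Q_{\boldsymbol{Z}_j}^n$-a.e.\ $\boldsymbol{z}$ via Tonelli, and uniformity over the finitely many $j$) matches what the paper needs and states.
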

Before proving Lemma \ref{Bound for Expectation of TV Distance} we state two auxiliary results that are used in the proof.

We first define the typical set $\mathcal{T}_{\epsilon_j}$ and the atypical set $\mathcal{T}_{\epsilon_j}^c$ \cite{goldfeld2016semantic}, $j=1,\ldots ,d$, as
    \begin{align}
        \mathcal{T}_{\epsilon_j} &\coloneq \left\{(\boldsymbol{x}^n,\boldsymbol{z}_j^n):\frac{1}{n}i[\boldsymbol{x}^n,\boldsymbol{z}_j^n] \leq I[\boldsymbol{X};\boldsymbol{Z}_j] + \epsilon_j \right\} \label{eq. typical sets},\\
        \mathcal{T}_{\epsilon_j}^c  &\coloneq \left\{(\boldsymbol{x}^n,\boldsymbol{z}_j^n):\frac{1}{n}i[\boldsymbol{x}^n,\boldsymbol{z}_j^n] > I[\boldsymbol{X};\boldsymbol{Z}_j] + \epsilon_j \right\}.
    \end{align}

For simpler notation, we define the two corresponding sets:
    \begin{align}
        \mathcal{T}_{\epsilon_j}(\boldsymbol{x}^n(m,w)) &= \left\{\boldsymbol{z}_j^n \in \mathcal{Z}^{2\times n}: (\boldsymbol{x}^n(m,w),\boldsymbol{z}_j^n) \in \mathcal{T}_{\epsilon_j}\right\},\\
        \mathcal{T}_{\epsilon_j}^c(\boldsymbol{x}^n(m,w)) &= \left\{\boldsymbol{z}_j^n \in \mathcal{Z}^{2\times n}: (\boldsymbol{x}^n(m,w),\boldsymbol{z}_j^n) \in \mathcal{T}_{\epsilon_j}^c\right\}.
    \end{align}

Then the output distribution $P_{\boldsymbol{Z}_j^n|\boldsymbol{X}^n(m,\cdot)}$ can be split into a typical part and an atypical part:
     \begin{equation} \label{eq. split}
         P_{\boldsymbol{Z}_j^n|\boldsymbol{X}^n(m,\cdot)} = P_{j,\mathcal{C}_n,1} +  P_{j,\mathcal{C}_n,2},
     \end{equation} where
     \begin{align}
         P_{j,\mathcal{C}_n,1} &=  \frac{1}{L_1} \sum_{w=1}^{L_1}
         K^{(\boldsymbol{\theta}){\otimes n}}_{{AE}_j}(\cdot \cap \mathcal{T}_{\epsilon_j}(\boldsymbol{x}^n(m,w)) ,\boldsymbol{x}^n(m,w)),\label{eq. P_jC1}\\
         P_{j,\mathcal{C}_n,2} &=  \frac{1}{L_1} \sum_{w=1}^{L_1} 
         K^{(\boldsymbol{\theta}){\otimes n}}_{{AE}_j}(\cdot \cap \mathcal{T}_{\epsilon_j}^c(\boldsymbol{x}^n(m,w)),\boldsymbol{x}^n(m,w)),
     \end{align}
which leads to the following two lemmas.

\begin{lemma}[Bound for Typical Terms]\label{Lem. Bound for Typical Terms}
    For some $\beta_{j1} >0$ where $j=1,\dots,d$:
    \begin{align}  \mathbb{E}_{{\mathcal{C}}_n}\mathbb{E}_{Q_{\boldsymbol{Z}_j}^n}\left[\frac{dP_{j,\mathcal{C}_n,1}}{dQ_{\boldsymbol{Z}_j}^n}-1\right]^+ \leq \exp{\{-n\beta_{j1}\}}.
    \end{align}
\end{lemma}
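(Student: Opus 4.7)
The plan is to rewrite $dP_{j,\mathcal{C}_n,1}/dQ_{\boldsymbol{Z}_j}^n$ as an empirical average of bounded i.i.d.\ random variables over the random codebook and then apply a Chernoff-type tail bound pointwise in $\boldsymbol{z}^n$. To this end, define
\begin{equation*}
U_{j,w}(\boldsymbol{z}^n) \coloneq \frac{dK^{(\boldsymbol{\theta})\otimes n}_{AE_j}(\cdot,\boldsymbol{X}^n(m,w))}{dQ_{\boldsymbol{Z}_j}^n}(\boldsymbol{z}^n)\cdot\mathbf{1}_{\mathcal{T}_{\epsilon_j}(\boldsymbol{X}^n(m,w))}(\boldsymbol{z}^n),
\end{equation*}
so that by \eqref{eq. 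P_jC1} one has $dP_{j,\mathcal{C}_n,1}/dQ_{\boldsymbol{Z}_j}^n(\boldsymbol{z}^n)=L_1^{-1}\sum_{w=1}^{L_1}U_{j,w}(\boldsymbol{z}^n)$. Since the random codewords are drawn i.i.d.\ from $Q^n_{\boldsymbol{X}}$, for each fixed $\boldsymbol{z}^n$ the variables $(U_{j,w}(\boldsymbol{z}^n))_w$ are i.i.d.\ with two key properties: the typical-set definition \eqref{eq. typical sets} yields the deterministic bound $U_{j,w}\leq b_{n,j}\coloneq\exp(n(I[\boldsymbol{X};\boldsymbol{Z}_j]+\epsilon_j))$, and the identity $Q^n_{\boldsymbol{Z}_j}(\cdot)=\int K^{(\boldsymbol{\theta})\otimes n}_{AE_j}(\cdot,\boldsymbol{x}^n)\,dQ^n_{\boldsymbol{X}}(\boldsymbol{x}^n)$ combined with dropping the indicator gives $\mathbb{E}_{\mathcal{C}_n}[U_{j,w}(\boldsymbol{z}^n)]\leq 1$ for $Q^n_{\boldsymbol{Z}_j}$-a.e.\ $\boldsymbol{z}^n$.

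Next, I would swap the two outer expectations by Tonelli (the integrand is nonnegative) and work at a fixed $\boldsymbol{z}^n$. A Chernoff/Hoeffding inequality for bounded i.i.d.\ sums on $[0,b_{n,j}]$ with mean at most $1$ yields, for every $\delta>0$,
\begin{equation*}
\mathbb{P}_{\mathcal{C}_n}\!\bigl(\tfrac{1}{L_1}\textstyle\sum_w U_{j,w}(\boldsymbol{z}^n)>1+\delta\bigr)\leq\exp\!\bigl(-\tfrac{\delta^2 L_1}{2\,b_{n,j}}\bigr).
\end{equation*}
Splitting the positive part at the threshold $1+\delta$ (bounded by $\delta$ on the concentrated event and by $b_{n,j}$ on its complement) gives the uniform pointwise estimate
\begin{equation*}
\mathbb{E}_{\mathcal{C}_n}\!\bigl[L_1^{-1}\textstyle\sum_w U_{j,w}(\boldsymbol{z}^n)-1\bigr]^{+}\leq \delta + b_{n,j}\exp\!\bigl(-\tfrac{\delta^2L_1}{2\,b_{n,j}}\bigr).
\end{equation*}

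To finish, I would invoke the hypothesis $R_1>\max_j I[\boldsymbol{X};\boldsymbol{Z}_j]$ of Lemma \ref{Bound for Expectation of TV Distance} to pick $\epsilon_j>0$ so small that $\Delta_j\coloneq R_1-I[\boldsymbol{X};\boldsymbol{Z}_j]-\epsilon_j>0$, and then choose $\delta=\exp(-n\gamma_j)$ with $0<\gamma_j<\tfrac{1}{2}\Delta_j$. Since $L_1/b_{n,j}=\exp(n\Delta_j)$, both terms on the right-hand side decay at the common exponential rate, so the bound reduces to $\exp(-n\beta_{j1})$ for some $\beta_{j1}>0$, uniformly in $\boldsymbol{z}^n$. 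Integrating this pointwise estimate against $Q^n_{\boldsymbol{Z}_j}$ removes the ``uniform'' qualifier and yields the claim.

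The main obstacle is the bookkeeping that couples the three parameters $(\epsilon_j,\delta,\gamma_j)$: the typical-set margin $\epsilon_j$ has to be small enough to leave a positive rate gap after it has been absorbed into the Chernoff exponent, while $\gamma_j$ has to be strictly positive yet stay below $\tfrac12\Delta_j$ so that both the linear ``$\delta$'' term and the doubly-exponential tail contribution decay at the same exponential rate. The Chernoff bound and the Radon--Nikodym manipulations themselves are standard in the channel-resolvability framework of \cite{frey2018resolvability}.
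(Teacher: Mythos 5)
Your argument is correct and matches the paper's proof in all essentials: Tonelli to exchange the two expectations, the i.i.d.\ bounded structure of $L_1^{-1}\sum_w U_{j,w}(\boldsymbol{z}^n)$ with per-term mean at most $1$ and typicality bound $\exp(n(I[\boldsymbol{X};\boldsymbol{Z}_j]+\epsilon_j))$, and a Bernstein/Chernoff concentration step whose exponent is positive precisely because $R_1 > I[\boldsymbol{X};\boldsymbol{Z}_j]+\epsilon_j$. The only (immaterial) difference is in how the tail bound is converted into a bound on the expectation of the positive part: the paper integrates $\mathbb{P}\bigl(dP_{j,\mathcal{C}_n,1}/dQ_{\boldsymbol{Z}_j}^n \geq t+1\bigr)$ over all $t\in(0,\infty)$ via $\mathbb{E}[X]=\int_0^\infty \mathbb{P}(X\geq t)\,dt$, whereas you split at a single optimized threshold $1+\delta$; both yield the required $\exp\{-n\beta_{j1}\}$ decay.
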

The proof of Lemma \ref{Lem. Bound for Typical Terms} can be found in Appendix \ref{app:typ-resolv}. We provide a necessary definition before Lemma \ref{Lem. Bound for Atypical Terms}. 

\begin{mydef}[R\'enyi Divergence\cite{van2014renyi}]\label{def. Renyi}
    Let $(\Omega,\Sigma)$ be a measurable space and let $P,Q$ be probability measures on $\Sigma$ with $Q \ll P$. Then for $\alpha$, i.e. $\alpha \in (0,1) \bigcup (1,\infty)$, the R\'enyi Divergence is defined by
    \begin{equation}
        D_{\alpha}[P||Q] = \frac{1}{\alpha-1}\log{\int\left(\frac{dQ}{dP}\right)^{1-\alpha}dP}. \label{eq. def. Renyi}
    \end{equation}
\end{mydef}

\begin{lemma}[Bound for Atypical Terms]\label{Lem. Bound for Atypical Terms}
    For some $\alpha>1$ and $0<\beta_{j2}\leq(\alpha-1)(I[\boldsymbol{X};\boldsymbol{Z}_j]+\epsilon_j-D_{\alpha}[Q_{\boldsymbol{X},\boldsymbol{Z}_j}||Q_{\boldsymbol{X}}Q_{\boldsymbol{Z}_j}])$ where $j=1,\dots,d$:
    \begin{equation}
        Q_{\boldsymbol{X},\boldsymbol{Z}_j}^n(\mathcal{T}_{\epsilon_j}^c) \leq \exp{\{-n\beta_{j2}\}}.
    \end{equation}
\end{lemma}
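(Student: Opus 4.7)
The plan is to derive the lemma via a Chernoff-style tail bound tailored to information density. First, I would rewrite the event $\mathcal{T}_{\epsilon_j}^c$ in terms of the information density: $(\boldsymbol{X}^n,\boldsymbol{Z}_j^n)\in\mathcal{T}_{\epsilon_j}^c$ if and only if $i[\boldsymbol{X}^n;\boldsymbol{Z}_j^n] > n(I[\boldsymbol{X};\boldsymbol{Z}_j]+\epsilon_j)$. Introducing a tilt parameter $\alpha>1$ and applying Markov's inequality to the non-negative random variable $\exp\bigl((\alpha-1)i[\boldsymbol{X}^n;\boldsymbol{Z}_j^n]\bigr)$, I obtain
\begin{equation*}
Q_{\boldsymbol{X},\boldsymbol{Z}_j}^n(\mathcal{T}_{\epsilon_j}^c) \leq \exp\bigl(-n(\alpha-1)(I[\boldsymbol{X};\boldsymbol{Z}_j]+\epsilon_j)\bigr)\,\mathbb{E}\bigl[\exp\bigl((\alpha-1)i[\boldsymbol{X}^n;\boldsymbol{Z}_j^n]\bigr)\bigr].
\end{equation*}

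Next, the product structure of the kernel $K^{(\boldsymbol{\theta})\otimes n}_{AE_j}$ and of the input $Q_{\boldsymbol{X}}^n$ make $i[\boldsymbol{X}^n;\boldsymbol{Z}_j^n]=\sum_{k=1}^n i[\boldsymbol{X}_k;\boldsymbol{Z}_{j,k}]$ a sum of i.i.d.\ terms, so the moment generating function factors into the $n$th power of the single-letter MGF. Recognising that under $Q_{\boldsymbol{X},\boldsymbol{Z}_j}$ the identity $i[\boldsymbol{X};\boldsymbol{Z}_j]=\log\bigl(dQ_{\boldsymbol{X},\boldsymbol{Z}_j}/d(Q_{\boldsymbol{X}}Q_{\boldsymbol{Z}_j})\bigr)$ holds almost surely, Definition \ref{def. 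Renyi} yields
\begin{equation*}
\mathbb{E}_{Q_{\boldsymbol{X},\boldsymbol{Z}_j}}\bigl[\exp\bigl((\alpha-1)i[\boldsymbol{X};\boldsymbol{Z}_j]\bigr)\bigr] = \exp\bigl((\alpha-1)D_\alpha[Q_{\boldsymbol{X},\boldsymbol{Z}_j}\,\|\,Q_{\boldsymbol{X}}Q_{\boldsymbol{Z}_j}]\bigr).
\end{equation*}
Plugging this back leaves the clean bound $\exp\bigl(-n(\alpha-1)(I[\boldsymbol{X};\boldsymbol{Z}_j]+\epsilon_j-D_\alpha)\bigr)$, which is exactly the form claimed in the lemma for every $\beta_{j2}$ in the stated range.

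Finally, I would need to argue that a genuinely useful (i.e.\ positive) $\beta_{j2}$ exists, which amounts to showing that some $\alpha>1$ satisfies $D_\alpha[Q_{\boldsymbol{X},\boldsymbol{Z}_j}\,\|\,Q_{\boldsymbol{X}}Q_{\boldsymbol{Z}_j}] < I[\boldsymbol{X};\boldsymbol{Z}_j]+\epsilon_j$. This is where the MGF hypothesis of Theorem \ref{Distinguishing achievablity} enters: the assumption $\mathbb{E}[\exp(t\,i[\boldsymbol{X};\boldsymbol{Z}_j])] < \infty$ for some $t>0$ implies that $D_{1+s}$ is finite for all $s\in(0,t]$, and standard results on Rényi divergence then give $D_\alpha \downarrow I[\boldsymbol{X};\boldsymbol{Z}_j]$ as $\alpha \downarrow 1$. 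Hence, for any fixed $\epsilon_j>0$, one may pick $\alpha\in(1,1+t]$ close enough to $1$ that $D_\alpha - I[\boldsymbol{X};\boldsymbol{Z}_j] < \epsilon_j$, making the exponent strictly positive. I expect this continuity step to be the main obstacle: without the finite-MGF hypothesis one cannot even guarantee finiteness of $D_\alpha$ for $\alpha>1$, so the moment condition carried through from the hypothesis is essential, and a careful invocation of the relevant monotonicity/continuity lemma for Rényi divergences is what must be supplied.
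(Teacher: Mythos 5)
Your proposal is correct and follows essentially the same route as the paper: a Chernoff/Markov bound on $\exp\bigl((\alpha-1)i[\boldsymbol{X}^n;\boldsymbol{Z}_j^n]\bigr)$, identification of the resulting moment with the R\'enyi divergence (the paper phrases your MGF factorization as the additivity of $D_\alpha$ over product measures), and the same concluding argument that the finite-MGF hypothesis gives finiteness of $D_{\alpha'}$ for some $\alpha'>1$ and hence, by continuity and $D_\alpha \to I[\boldsymbol{X};\boldsymbol{Z}_j]$ as $\alpha \downarrow 1$, a strictly positive exponent. No gaps.
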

The proof of Lemma \ref{Lem. Bound for Atypical Terms} can be found in Appendix \ref{app:typ-resolv}.
\begin{proof}[Proof of Lemma \ref{Bound for Expectation of TV Distance}]
     The expectation of TV distance \eqref{f(C)} can be split into two corresponding parts due to the typical and atypical sets:
     \begin{align}              &\mathbb{E}_{\mathcal{C}_n}||P_{\boldsymbol{Z}_j^n|\boldsymbol{X}^n(m,\cdot)} - Q_{\boldsymbol{Z}_j}^n||_{\rm{TV}} \\
         &= \mathbb{E}_{\mathcal{C}_n}\mathbb{E}_{Q_{\boldsymbol{Z}_j}^n}\left[\frac{dP_{j,\mathcal{C}_n,1}}{dQ_{\boldsymbol{Z}_j}^n}+\frac{dP_{j,\mathcal{C}_n,2}}{dQ_{\boldsymbol{Z_j}}^n}-1\right]^+ \label{26}\\
         &\leq  \mathbb{E}_{\mathcal{C}_n}\mathbb{E}_{Q_{\boldsymbol{Z}_j}^n}\left[\frac{dP_{j,\mathcal{C}_n,1}}{dQ_{\boldsymbol{Z}_j}^n}-1\right]^+ +\mathbb{E}_{\mathcal{C}_n}[P_{j,\mathcal{C}_n,2}] \\
         &= \mathbb{E}_{\mathcal{C}_n}\mathbb{E}_{Q_{\boldsymbol{Z}_j}^n}\left[\frac{dP_{j,\mathcal{C}_n,1}}{dQ_{\boldsymbol{Z}_j}^n}-1\right]^+ + Q_{\boldsymbol{X},\boldsymbol{Z_j}}^n(\mathcal{T}_{\epsilon_j}^c),\label{28}
     \end{align}
     where \eqref{26} follows from \eqref{eq. split} and Lemma \ref{rewrite TV distance}. In \eqref{28} $\mathbb{E}_{\mathcal{C}_n}[P_{j,\mathcal{C}_n,2}] = Q_{\boldsymbol{X},\boldsymbol{Z_j}}^n(\mathcal{T}_{\epsilon_i}^c)$ follows since the codewords $\boldsymbol{X}^n(m,w)$ are drawn according to the distribution $Q^n_{\boldsymbol{X}}$.
     
     The terms in \eqref{28} are bounded by Lemma \ref{Lem. Bound for Typical Terms} and Lemma \ref{Lem. Bound for Atypical Terms}, respectively. Combining these two bounds yields a bound for the expectation of the total variation distance:
     \begin{align}          \mathbb{E}_{\mathcal{C}_n}||P_{\boldsymbol{Z}_j^n|\boldsymbol{X}^n} - Q_{\boldsymbol{Z}_j}^n||_{\rm{TV}} &\leq \exp{\{-n\beta_{j1}\}} + \exp{\{-n\beta_{j2}\}} \\
         &\leq \exp{\{-n\beta_{j}\}} 
     \end{align} for any choice of $\beta_{j}$ with $0 < \beta_{j} < \beta_{j1},\beta_{j2}$ and all sufficiently large $n$.
\end{proof}

By Lemma \ref{Bound for Expectation of TV Distance} and \eqref{eq. c bounded}, the McDiarmid inequality \eqref{eq. derived-McDiarmid} yields
\begin{align}
     P_{\mathcal{C}_n}(f_j(\mathcal{C}_n) \!\geq \!\exp\{-n\beta_{j}\}\!+\!\epsilon_j) &\leq \exp\{{-2\epsilon_j^2 L_1}\} \\
     &= \exp\{-2\epsilon_j^2\exp\{nR_1\}\}\label{eq. specify epsilon}.
\end{align}

We can set $\epsilon_j$ to one possible value as follows: 
\begin{equation}
    \epsilon_j \coloneq \frac{1}{\sqrt{2}}\exp{\{-nb_j/2\}},
\end{equation} where $b_j$ is constant for each $j = 1,\dots,d$. The choice of $\epsilon_j$ ensures that the exponential term on the right-hand side in \eqref{eq. specify epsilon} becomes $\exp{\{-\exp\{n(R_1-b_j)\}\}}$ and is double exponentially small when $0<b_j<R_1$ is specified. Therefore by choosing a suitable $\zeta_{j1}>\beta_j$ and setting $\zeta_{j2}\coloneq R_1-b_j$, the inequality can be further simplified to
\begin{equation}
     P_{\mathcal{C}_n}(f_j(\mathcal{C}_n)\geq \exp\{-n\zeta_{j1}\}) \leq \exp\{-\exp\{n\zeta_{j2}\}\}\label{eq. distinguishing security result},
\end{equation} which implies that the probability of not achieving distinguishing security via random codebooks is double exponentially low.

\subsubsection{Reliability}
In the proof, we first define a new type of average decoding error probability. To send one message $m\in \mathcal{M}$, the encoder chooses the randomness $w\in \mathcal{W}$ uniformly at random to generate a codeword $\boldsymbol{x}^n(m,w)$ and transmits it. The decoder reconstructs the message and randomness from the received sequence. The decoding is said to be successful if the reconstructed message and randomness exactly match the sent ones. Otherwise, we denote it as an error event by $\mathcal{E}'$. The new error probability is  given by
    \begin{align}        
    P_{\mathcal{E}'}(\mathcal{C}_n,D_n) = &\frac{1}{LL_1} \sum_{w=1}^{L_1} \sum_{m=1}^L  \label{eq. new average decoding error}\\&P_{{Y}^n|{X}^n}(D({Y}^n) \neq (m,w)|{X}^n={x}^n(m,w))\nonumber.
    \end{align}
    
From Definition \ref{def. error} and \eqref{eq. new average decoding error}, it can be observed that any error in reconstructing $m$ is included in the set of errors in reconstructing $(m,w)$, implying that the error probability of reconstructing 
$m$ is less than or equal to the error probability of reconstructing $(m,w)$, i.e.,

\begin{equation}
    P_{\mathcal{E}}(\mathcal{C}_n,D_n) \leq P_{\mathcal{E}'}(\mathcal{C}_n,D_n). \label{eq. error relation}
\end{equation}

Then the proof of reliability aims to derive an upper bound on the expectation of $P_{\mathcal{E}'}(\mathcal{C}_n,D_n)$ by the following Lemma and apply Markov's inequality to show that $P_{\mathcal{E}'}(\mathcal{C}_n,D_n)$ is low with high probability.

\begin{lemma}[Bound for Expectation of Average Decoding Error]\label{Bound for Expectation of Average Decoding Error}
    Suppose that the moment generating functions of $i[\boldsymbol{X};\boldsymbol{Y}]$ exist and are finite for some $t > 0$, and the randomness rate satisfies $R_1 < I[\boldsymbol{X};\boldsymbol{Y}]-R_s$, then there exist  $\Gamma > 0$ such that
    \begin{equation}
    \mathbb E_{\mathcal{C}_n}[P_{\mathcal{E}'}(\mathcal{C}_n,D_n)] \leq \exp\{-n\Gamma\},\label{eq. bound for expectation of error}
    \end{equation} 
    for all sufficiently large $n\in\mathbb{N}$.
\end{lemma}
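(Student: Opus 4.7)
The plan is to treat $(m,w)$ as a single augmented message over $\mathcal{M}\times\mathcal{W}$ of cardinality $LL_1 = \exp\{n(R_s + R_1)\}$ and to prove an ordinary channel coding theorem for the main channel $K^{(\boldsymbol{\theta})}_{AB}$. The hypothesis $R_1 < I[\boldsymbol{X};\boldsymbol{Y}] - R_s$ is precisely the statement that the total rate $R + R_1 = R_s + R_1$ lies strictly below the mutual information, so an information-density threshold decoder should yield exponentially small expected error.

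First I would fix a small $\eta > 0$ with $\eta < I[\boldsymbol{X};\boldsymbol{Y}] - R_s - R_1$ and define the (codebook-dependent) decoder $D_n$: given $\boldsymbol{y}^n$, output the unique pair $(\hat m,\hat w)$ with
\begin{equation*}
    i[\boldsymbol{x}^n(\hat m,\hat w);\boldsymbol{y}^n] > n\bigl(I[\boldsymbol{X};\boldsymbol{Y}] - \eta\bigr),
\end{equation*}
and declare an error otherwise. Conditioning on the transmitted pair $(m,w)$, the error event decomposes as $\mathcal{A}_1 \cup \mathcal{A}_2$, where $\mathcal{A}_1$ is the event that the transmitted pair itself fails the threshold and $\mathcal{A}_2$ is the event that some other pair passes it.

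For $\mathcal{A}_1$, under the random codebook the pair $(\boldsymbol{X}^n(m,w),\boldsymbol{Y}^n)$ is distributed according to $Q^n_{\boldsymbol{X},\boldsymbol{Y}}$, so $i[\boldsymbol{X}^n(m,w);\boldsymbol{Y}^n]$ is a sum of $n$ i.i.d.\ random variables with mean $nI[\boldsymbol{X};\boldsymbol{Y}]$. The finiteness of the moment generating function of $i[\boldsymbol{X};\boldsymbol{Y}]$ near $t=0$ yields a Chernoff-type lower-tail bound $\mathbb{P}_{\mathcal{C}_n}(\mathcal{A}_1) \leq \exp\{-n\Gamma_1\}$ for some $\Gamma_1 > 0$, in the same spirit as the MGF-based estimate used for $i[\boldsymbol{X};\boldsymbol{Z}_j]$ in Lemma \ref{Lem. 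Bound for Atypical Terms}. For $\mathcal{A}_2$, each non-transmitted codeword $\boldsymbol{X}^n(m',w')$ is independent of $(\boldsymbol{X}^n(m,w),\boldsymbol{Y}^n)$ and drawn from $Q^n_{\boldsymbol{X}}$, and a direct change-of-measure calculation using the Radon-Nikodym derivative in \eqref{eq. def. information density} gives
\begin{equation*}
    \mathbb{P}\bigl\{i[\tilde{\boldsymbol{X}}^n;\boldsymbol{Y}^n] > n\tau\bigr\} \leq \exp\{-n\tau\}
\end{equation*}
for independent $\tilde{\boldsymbol{X}}^n \sim Q^n_{\boldsymbol{X}}$ and $\boldsymbol{Y}^n \sim Q^n_{\boldsymbol{Y}}$. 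Taking $\tau = I[\boldsymbol{X};\boldsymbol{Y}] - \eta$ and applying the union bound over the $LL_1 - 1$ non-transmitted pairs yields
\begin{equation*}
    \mathbb{P}_{\mathcal{C}_n}(\mathcal{A}_2) \leq \exp\bigl\{-n\bigl(I[\boldsymbol{X};\boldsymbol{Y}] - R_s - R_1 - \eta\bigr)\bigr\} = \exp\{-n\Gamma_2\}
\end{equation*}
with $\Gamma_2 > 0$ by the choice of $\eta$. Since both bounds are uniform in $(m,w)$, averaging preserves them, and any $\Gamma \in (0,\min\{\Gamma_1,\Gamma_2\})$ yields the claim for all sufficiently large $n$.

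The main obstacle will be converting the abstract MGF hypothesis on $i[\boldsymbol{X};\boldsymbol{Y}]$ into a quantitative lower-tail bound at the target $I[\boldsymbol{X};\boldsymbol{Y}] - \eta$: one needs the associated Cram\'er rate function to be strictly positive in a neighborhood of the mean, which follows from the MGF being finite on an open interval containing $t=0$ (the argument parallels the Rényi-divergence based estimate in Lemma \ref{Lem. Bound for Atypical Terms}, but applied to the main channel and to the opposite tail). The remaining measure-theoretic bookkeeping, in particular replacing ratios of pmfs by Radon-Nikodym derivatives in the change-of-measure step, is essentially identical to what already appears in the proof of Lemma \ref{Lem. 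Bound for Typical Terms}.
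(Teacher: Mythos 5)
Your proposal is correct and is essentially the paper's own argument: both treat $(m,w)$ as a single message of total rate $R+R_1$, use a threshold/typicality decoder based on the information density, split the error into ``the transmitted codeword fails the test'' and ``some other codeword passes it,'' bound the first event by a Chernoff-type tail estimate on $i[\boldsymbol{X}^n;\boldsymbol{Y}^n]$ (the paper phrases this via the R\'enyi divergence $D_{\alpha}[Q_{\boldsymbol{X},\boldsymbol{Y}}||Q_{\boldsymbol{X}}Q_{\boldsymbol{Y}}]$ and its continuity as $\alpha\to 1$, which is precisely your Cram\'er rate-function step), and bound the second by the change of measure $d(Q_{\boldsymbol{X}}^nQ_{\boldsymbol{Y}}^n)=\exp\{-i\}\,dQ_{\boldsymbol{X},\boldsymbol{Y}}^n$ followed by a union bound over the $LL_1$ pairs, exactly as in Lemmas \ref{lem. bound for error one} and \ref{lem. bound for error two}. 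The one substantive difference is the orientation of the test: you accept codewords with $i>n(I[\boldsymbol{X};\boldsymbol{Y}]-\eta)$ and control the \emph{lower} tail for $\mathcal{A}_1$, whereas the paper's typical set \eqref{eq. def. typical set reliability} is $\{i\le n(I[\boldsymbol{X};\boldsymbol{Y}]+\epsilon)\}$ and Lemma \ref{lem. bound for error one} controls the \emph{upper} tail. Your orientation is the internally consistent one: the step \eqref{eq. typical set R} in the proof of Lemma \ref{lem. bound for error two} replaces $\exp\{-i\}$ by $\exp\{-n(I[\boldsymbol{X};\boldsymbol{Y}]+\epsilon)\}$ on the typical set, which requires $i\ge n(I[\boldsymbol{X};\boldsymbol{Y}]+\epsilon)$ there --- the opposite of the stated definition --- so the paper's version needs the same sign correction that your lower-threshold decoder already builds in (at the cost of bounding the lower rather than the upper tail in the first sub-lemma, which, as you note, is still available from the moment-generating-function hypothesis since $\mathbb{E}[\exp\{-s\,i[\boldsymbol{X};\boldsymbol{Y}]\}]\le 1$ for $s\in(0,1)$).
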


Before proving Lemma \ref{Bound for Expectation of Average Decoding Error}, we split the decoding error by adopting the method from \cite{frey2017mac} and state an auxiliary result for each part.

We define the  typical set and atypical set in a similar way to the proof of distinguishing security:
    \begin{align}
        \mathcal{T}^{'}_{\epsilon} &\coloneq \left\{(\boldsymbol{x}^n,\boldsymbol{y}^n):\frac{1}{n}i[\boldsymbol{x}^n;\boldsymbol{y}^n] \leq I[\boldsymbol{X};\boldsymbol{Y}] + \epsilon \right\},\label{eq. def. typical set reliability}\\
        \mathcal{T}_{\epsilon}^{'c}  &\coloneq \left\{(\boldsymbol{x}^n,\boldsymbol{y}^n):\frac{1}{n}i[\boldsymbol{x}^n;\boldsymbol{y}^n] > I[\boldsymbol{X};\boldsymbol{Y}] + \epsilon \right\},
    \end{align}
and a corresponding joint typicality decoder:
\begin{align*}
    D(\boldsymbol{y}^n) \!\coloneq \!
    \begin{cases}
        (m,w), \! &(\boldsymbol{x}^n(m,w),\boldsymbol{y}^n) \! \in \! \mathcal{T}^{'}_{\epsilon} \ \text{for \! unique} \ (m,w),\\
        (1,1),\! &\text{otherwise.}
    \end{cases}      
\end{align*} 

For simpler notation, we define the two corresponding sets:
\begin{align}
    \mathcal{T}^{'}_{\epsilon}(\boldsymbol{x}^n) &= \{\boldsymbol{y}^n:(\boldsymbol{x}^n,\boldsymbol{y}^n)\in \mathcal{T}^{'}_{\epsilon}\},\\
    \mathcal{T}^{'c}_{\epsilon}(\boldsymbol{x}^n) &= \{\boldsymbol{y}^n:(\boldsymbol{x}^n,\boldsymbol{y}^n)\in \mathcal{T}^{'c}_{\epsilon}\}.  
\end{align}

In order to achieve a successful decoding, the received sequence must be jointly typical with the unique encoded message and unique with this property. The error event $\mathcal{E}'$ can be decomposed into two distinct events:  $\mathcal{E}_1$ and $\mathcal{E}_2$. Event $\mathcal{E}_1$ occurs when the received sequence $\boldsymbol{y}^n$ can not be assigned to any jointly typical codeword, while event $\mathcal{E}_2$ occurs when $\boldsymbol{y}^n$ is assignable to multiple jointly typical codewords. The probabilities of the error events are formalized as follows:
    \begin{align}
    P_{\mathcal{E}_1}(m,w)&\coloneq \!P_{\boldsymbol{Y}^n|\boldsymbol{X}^n}(\boldsymbol{Y}^n \!\in \!\mathcal{T}^{'c}_{\epsilon}(\boldsymbol{X}^n) |\boldsymbol{X}^n \!= \boldsymbol{x}^n(m,w)),\label{eq. P_epsilon1}\\
     P_{\mathcal{E}_2}(m,w)&\coloneq\! P_{\boldsymbol{Y}^n|\boldsymbol{X}^n}(\boldsymbol{Y}^n \in \mathcal{T}^{'}_{\epsilon}(\boldsymbol{X}^n) |\boldsymbol{X}^n \!= \boldsymbol{x}^n(m',w'),\nonumber\\
     & \qquad \qquad \ \; (m',w')\neq (m,w)).\label{eq. P_epsilon2}
    \end{align}

\begin{lemma}\label{lem. bound for error one}
    For some $\alpha>1$ and $0<\Gamma_1\leq(\alpha-1)(I[\boldsymbol{X};\boldsymbol{Y}]+\epsilon-D_{\alpha}[Q_{\boldsymbol{X},\boldsymbol{Y}}||Q_{\boldsymbol{X}}Q_{\boldsymbol{Y}}])$:
    \begin{equation}
        \mathbb E_{\mathcal{C}_n}[P_{\mathcal{E}_1}(m,w)] \leq \exp\{-n\Gamma_1\}.
    \end{equation}   
\end{lemma}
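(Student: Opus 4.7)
The plan is to leverage the structural similarity between this lemma and Lemma~\ref{Lem. Bound for Atypical Terms}: both bound the probability that an i.i.d.\ pair lands in an atypical set defined via the information density, the only difference being that here the relevant channel is $K^{(\boldsymbol{\theta})}_{AB}$ to Bob rather than $K^{(\boldsymbol{\theta})}_{{AE}_j}$ to an eavesdropper. Hence the same proof strategy transfers essentially verbatim.

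First I would reduce the expectation over the random codebook to a joint-distribution probability. Under the random coding construction, each codeword $\boldsymbol{X}^n(m,w)$ is drawn i.i.d.\ according to $Q_{\boldsymbol{X}}^n$, and the channel output $\boldsymbol{Y}^n$ is then generated from $K^{(\boldsymbol{\theta})\otimes n}_{AB}(\cdot, \boldsymbol{X}^n(m,w))$. Hence the joint law of $(\boldsymbol{X}^n(m,w),\boldsymbol{Y}^n)$ coincides with $Q_{\boldsymbol{X},\boldsymbol{Y}}^n$, and unwinding the definition~\eqref{eq. P_epsilon1} gives
\begin{equation*}
\mathbb{E}_{\mathcal{C}_n}[P_{\mathcal{E}_1}(m,w)] = Q_{\boldsymbol{X},\boldsymbol{Y}}^n(\mathcal{T}^{'c}_{\epsilon}) = \mathbb{P}\!\left(\tfrac{1}{n}i[\boldsymbol{X}^n;\boldsymbol{Y}^n] > I[\boldsymbol{X};\boldsymbol{Y}] + \epsilon\right),
\end{equation*}
where the probability is taken under $Q_{\boldsymbol{X},\boldsymbol{Y}}^n$.

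Next I would bound this right-hand probability by a Chernoff/Cram\'er-style argument. Because the kernel factorises across channel uses, $i[\boldsymbol{X}^n;\boldsymbol{Y}^n]$ is a sum of $n$ i.i.d.\ copies of $i[\boldsymbol{X};\boldsymbol{Y}]$. Applying Markov's inequality to $\exp\{(\alpha-1)i[\boldsymbol{X}^n;\boldsymbol{Y}^n]\}$ for some $\alpha>1$, the resulting moment generating function factorises and, after a short computation using Definition~\ref{def. Renyi}, evaluates to $\exp\{n(\alpha-1)D_{\alpha}[Q_{\boldsymbol{X},\boldsymbol{Y}}||Q_{\boldsymbol{X}}Q_{\boldsymbol{Y}}]\}$. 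Putting the two pieces together,
\begin{equation*}
Q_{\boldsymbol{X},\boldsymbol{Y}}^n(\mathcal{T}^{'c}_{\epsilon}) \leq \exp\{-n(\alpha-1)(I[\boldsymbol{X};\boldsymbol{Y}]+\epsilon-D_{\alpha}[Q_{\boldsymbol{X},\boldsymbol{Y}}||Q_{\boldsymbol{X}}Q_{\boldsymbol{Y}}])\},
\end{equation*}
which yields the claimed bound for any admissible $\Gamma_1$.

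The main (and mild) obstacle is producing an $\alpha>1$ for which the exponent is strictly positive, i.e.\ $D_{\alpha}[Q_{\boldsymbol{X},\boldsymbol{Y}}||Q_{\boldsymbol{X}}Q_{\boldsymbol{Y}}] < I[\boldsymbol{X};\boldsymbol{Y}]+\epsilon$. Here one invokes continuity of $\alpha\mapsto D_{\alpha}$ together with the limit $D_{\alpha}\to I[\boldsymbol{X};\boldsymbol{Y}]$ as $\alpha\downarrow 1$; both properties are guaranteed by the hypothesised finiteness of the moment generating function of $i[\boldsymbol{X};\boldsymbol{Y}]$ in a neighbourhood of the origin. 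Choosing $\alpha$ sufficiently close to $1$ then makes $\Gamma_1>0$ possible, and since the argument is exactly that of Lemma~\ref{Lem. Bound for Atypical Terms} applied to the Bob channel rather than the Eve channel, the details already worked out in Appendix~\ref{app:typ-resolv} can be transcribed almost verbatim, with $\boldsymbol{Z}_j$ replaced throughout by $\boldsymbol{Y}$.
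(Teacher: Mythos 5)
Your proposal is correct and follows essentially the same route as the paper: reduce $\mathbb{E}_{\mathcal{C}_n}[P_{\mathcal{E}_1}(m,w)]$ to $Q_{\boldsymbol{X},\boldsymbol{Y}}^n(\mathcal{T}^{'c}_{\epsilon})$, apply Markov's inequality to $\exp\{(\alpha-1)i[\boldsymbol{X}^n;\boldsymbol{Y}^n]\}$, identify the resulting moment via the (additive) R\'enyi divergence $D_{\alpha}[Q_{\boldsymbol{X},\boldsymbol{Y}}||Q_{\boldsymbol{X}}Q_{\boldsymbol{Y}}]$, and choose $\alpha$ close to $1$ to make the exponent positive. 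The paper's proof is indeed the atypical-set argument of Lemma~\ref{Lem. Bound for Atypical Terms} transcribed to the Bob channel, exactly as you describe.
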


\begin{lemma}\label{lem. bound for error two}
    For some $0<\Gamma_2<I[\boldsymbol{X};\boldsymbol{Y}]+\epsilon-R-R_1$:
    \begin{equation}
        \mathbb E_{\mathcal{C}_n}[P_{\mathcal{E}_2}(m,w)]\leq \exp\{-n\Gamma_2\}.
    \end{equation}
\end{lemma}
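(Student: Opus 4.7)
The plan is to carry out a standard random-coding union-bound argument in the general measure-theoretic setting, anchored by the change-of-measure identity encoded by the information density in \eqref{eq. def. information density}. First, I would decompose the event underlying $P_{\mathcal{E}_2}(m, w)$ as a union over all ``wrong'' indices $(m', w') \neq (m, w)$ of the event $\{(\boldsymbol X^n(m', w'), \boldsymbol Y^n) \in \mathcal T^{'}_\epsilon\}$, apply the union bound, and take the expectation over the random codebook. Since the codewords are drawn i.i.d.\ according to $Q_{\boldsymbol X}^n$, each wrong codeword $\boldsymbol X^n(m', w')$ is independent of $(\boldsymbol X^n(m, w), \boldsymbol Y^n)$, whose $\boldsymbol Y^n$-marginal is $Q_{\boldsymbol Y}^n$. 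Hence
\[
\mathbb E_{\mathcal{C}_n}[P_{\mathcal{E}_2}(m, w)] \leq (LL_1 - 1) \cdot Q_{\boldsymbol X}^n \otimes Q_{\boldsymbol Y}^n(\mathcal T^{'}_\epsilon).
\]

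Next, I would bound the product-measure probability via a change of measure to the joint law $P_{\boldsymbol X \boldsymbol Y}^n$ induced by $Q_{\boldsymbol X}^n$ and $K^{(\boldsymbol \theta)\otimes n}_{AB}$. Using the identity $dP_{\boldsymbol X \boldsymbol Y}^n / d(Q_{\boldsymbol X}^n \otimes Q_{\boldsymbol Y}^n) = \exp\{i[\boldsymbol X^n; \boldsymbol Y^n]\}$, one obtains $Q_{\boldsymbol X}^n \otimes Q_{\boldsymbol Y}^n(\mathcal T^{'}_\epsilon) = \int_{\mathcal T^{'}_\epsilon} \exp\{-i[\boldsymbol X^n; \boldsymbol Y^n]\} \, dP_{\boldsymbol X \boldsymbol Y}^n$. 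Invoking the typical-set constraint (with the sign convention that makes $\exp\{-i\}$ uniformly bounded on $\mathcal T^{'}_\epsilon$), the integrand is majorized by a deterministic exponential of order $\exp\{-n(I[\boldsymbol X; \boldsymbol Y] - \epsilon)\}$, giving $Q_{\boldsymbol X}^n \otimes Q_{\boldsymbol Y}^n(\mathcal T^{'}_\epsilon) \leq \exp\{-n(I[\boldsymbol X; \boldsymbol Y] - \epsilon)\}$. Combining with $L L_1 = \exp\{n(R + R_1)\}$ and $R = R_s$ then yields $\mathbb E_{\mathcal{C}_n}[P_{\mathcal{E}_2}(m, w)] \leq \exp\{-n(I[\boldsymbol X; \boldsymbol Y] - \epsilon - R_s - R_1)\}$. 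Under the hypothesis $R_1 < I[\boldsymbol X; \boldsymbol Y] - R_s$, choosing $\epsilon > 0$ small enough makes the exponent strictly positive, so any $\Gamma_2 > 0$ below the stated threshold $I[\boldsymbol X; \boldsymbol Y] + \epsilon - R - R_1$ is admissible for all sufficiently large $n$.

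The main technical subtlety is the measure-theoretic bookkeeping: both the change-of-measure step and the independence argument require care in the (possibly continuous) product-space setting of the RIS-assisted wiretap channel. These are underpinned by the absolute continuity $K^{(\boldsymbol \theta)\otimes n}_{AB}(\cdot, \boldsymbol x) \ll Q_{\boldsymbol Y}^n$ established earlier in the excerpt and by the moment-generating-function assumption on $i[\boldsymbol X; \boldsymbol Y]$ in the hypotheses of Theorem~\ref{Distinguishing achievablity}, which together guarantee that the Radon--Nikodym derivative defining the information density and the resulting integrals are well-defined and finite.
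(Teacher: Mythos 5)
Your proposal follows essentially the same route as the paper's proof: a union bound over the $LL_1-1$ wrong index pairs, the observation that each wrong codeword is independent of $(\boldsymbol{X}^n(m,w),\boldsymbol{Y}^n)$ so that the relevant probability is $Q_{\boldsymbol{X}}^n\otimes Q_{\boldsymbol{Y}}^n(\mathcal{T}^{'}_{\epsilon})$, a change of measure to $Q_{\boldsymbol{X},\boldsymbol{Y}}^n$ via $\exp\{-i[\boldsymbol{x}^n;\boldsymbol{y}^n]\}$, and a pointwise bound on the integrand over the typicality region. The one place you diverge is the typicality step, and there you are actually more careful than the paper. The set $\mathcal{T}^{'}_{\epsilon}$ as defined in \eqref{eq. def. typical set reliability} only imposes the upper bound $\tfrac{1}{n}i[\boldsymbol{x}^n;\boldsymbol{y}^n]\leq I[\boldsymbol{X};\boldsymbol{Y}]+\epsilon$, which bounds $\exp\{-i[\boldsymbol{x}^n;\boldsymbol{y}^n]\}$ from \emph{below} on $\mathcal{T}^{'}_{\epsilon}$; the paper's passage from \eqref{eq. information density R} to \eqref{eq. typical set R} nevertheless uses it as an upper bound, which is not justified by the stated definition and is what produces the (too optimistic) exponent $I[\boldsymbol{X};\boldsymbol{Y}]+\epsilon-R-R_1$. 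Your version, which explicitly requires the sign convention under which $\exp\{-i\}$ is bounded above on the decoding region (i.e., a lower bound $\tfrac{1}{n}i\geq I[\boldsymbol{X};\boldsymbol{Y}]-\epsilon$, as in a two-sided typical set), yields the exponent $I[\boldsymbol{X};\boldsymbol{Y}]-\epsilon-R-R_1$ and is sound; since this is smaller than the threshold in the lemma statement, the existential claim ``for some $0<\Gamma_2<I[\boldsymbol{X};\boldsymbol{Y}]+\epsilon-R-R_1$'' is still met, and the downstream rate condition $R_1<I[\boldsymbol{X};\boldsymbol{Y}]-R_s$ absorbs the loss by taking $\epsilon$ small. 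So your argument is correct, and it quietly repairs a direction-of-inequality slip in the paper's own typicality step at the cost of a $2\epsilon$ weaker exponent.
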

The proofs of Lemma \ref{lem. bound for error one} and Lemma \ref{lem. bound for error two} can be found in Appendix \ref{app:errors}.

\begin{proof}[Proof of Lemma \ref{Bound for Expectation of Average Decoding Error}]
Due to the splitting \eqref{eq. P_epsilon1} and \eqref{eq. P_epsilon2}, the expectation of the average decoding error probability \eqref{eq. average decoding error} can be bounded by
    \begin{align}
    &\mathbb E_{\mathcal{C}_n}[P_{\mathcal{E}'}(\mathcal{C}_n,D_n)]\nonumber\\ 
    \leq& \mathbb E_{\mathcal{C}_n}\left[\frac{1}{LL_1} \sum_{w=1}^{L_1} \sum_{m=1}^L (P_{\mathcal{E}_1}(m,w) + P_{\mathcal{E}_2}(m,w))\right]\label{eq. union bound}\\
    =&\frac{1}{LL_1} \sum_{w=1}^{L_1} \sum_{m=1}^L \mathbb E_{\mathcal{C}_n}\left[P_{\mathcal{E}_1}(m,w)\right] 
    \nonumber\\ &  +\frac{1}{LL_1} \sum_{w=1}^{L_1} \sum_{m=1}^L \mathbb E_{\mathcal{C}_n}\left[P_{\mathcal{E}_2}(m,w)\right]\label{eq. linearity of expectation}\\
    =& \mathbb E_{\mathcal{C}_n}[P_{\mathcal{E}_1}(m,w)] +\mathbb E_{\mathcal{C}_n}[P_{\mathcal{E}_2}(m,w)]\label{eq.i.i.d. codewords},
    \end{align} 
    where \eqref{eq. union bound} is an application of the union bound, \eqref{eq. linearity of expectation} follows by the linearity of expectation, and \eqref{eq.i.i.d. codewords} follows by the expectation independent from the individual messages due to the i.i.d. codewords. The two terms in \eqref{eq.i.i.d. codewords} are respectively bounded by Lemma \ref{lem. bound for error one} and Lemma \ref{lem. bound for error two}, yielding 
\begin{align}
    \mathbb E_{\mathcal{C}_n}[P_{\mathcal{E}'}(\mathcal{C}_n,D_n)]
    &\leq \exp\{-n\Gamma_1\}+\exp\{-n\Gamma_2\}\nonumber\\   
    &\leq \exp\{-n\Gamma\}\label{eq. bound for expectation of error} 
\end{align} for any choice of $\Gamma$ with $0<\Gamma<\Gamma_1,\Gamma_2$ and for all sufficiently large $n$.
\end{proof}

Next we can apply Markov's inequality with \eqref{eq. bound for expectation of error} for $\Gamma>\gamma_1>0$:
\begin{align}
    P_{\mathcal{C}_n}&(P_{\mathcal{E}'}(\mathcal{C}_n,D_n) 
    \geq \exp\{-n\gamma_1\})\\&\leq \exp\{n\gamma_1\}\mathbb E_{\mathcal{C}_n}[P_{\mathcal{E}'}(\mathcal{C}_n,D_n)]\\
    &\leq \exp\{n\gamma_1\}\exp\{-n\Gamma\},\\
    &=\exp\{-n(\Gamma-\gamma_1)\},\\
    &=\exp\{-n\gamma_2\}
\end{align} with the choice $\gamma_2 \coloneq \Gamma-\gamma_1$.

Hence by \eqref{eq. error relation}, we have
\begin{equation}
    P_{\mathcal{C}_n}\left(P_{\mathcal{E}}(\mathcal{C}_n,D_n)  \geq \exp\{-n\gamma_1\}\right)\leq \exp\{-n\gamma_2\}.\label{eq. reliability result}
\end{equation}

\subsubsection{Distinguishing Security Achievability}
Now we can combine the distinguishing security result \eqref{eq. distinguishing security result} and reliability result \eqref{eq. reliability result} into a single statement of achievability:

For $\max\limits_{j=1,\dots,d}I[\boldsymbol{X};\boldsymbol{Z}_j] < R_1 <I[\boldsymbol{X};\boldsymbol{Y}]-R_s$ and $j=1,\dots,d$, we have
   \begin{align*}
      \bullet P_{\mathcal{C}_n}&\left(||P_{\boldsymbol{Z}_j^n|\boldsymbol{X}^n(m,\cdot)} - Q_{\boldsymbol{Z}_j}^n||_{\rm{TV}}\geq \exp\{-n\zeta_{j1}\}\right)\nonumber\\ &\leq \exp\{-\exp\{n\zeta_{j2}\}\},\\ 
    \bullet P_{\mathcal{C}_n}&\left(P_{\mathcal{E}}(\mathcal{C}_n,D_n) \right) \geq \exp\{-n\gamma_1\})\leq \exp\{-n\gamma_2\}.
\end{align*}

Therefore via the union bound, we have the probability of not achieving reliable communication for Bob and secure communication for all Eves:
\begin{align}
    &P_{\mathcal{C}_n}(\bigcup_{j=1,\dots,d}||P_{\boldsymbol{Z}_j^n|\boldsymbol{X}^n(m,\cdot)} - Q_{\boldsymbol{Z}_j}^n||_{\rm{TV}}\geq \exp\{-n\zeta_{j1}\} \nonumber\\
    &\qquad \bigcup P_{\mathcal{E}'}(\mathcal{C}_n,D_n)\geq \exp\{-n\gamma_1\} )\\
    &\leq \sum\limits_{j=1,\dots,d}P_{\mathcal{C}_n}(||P_{\boldsymbol{Z}_j^n|\boldsymbol{X}^n(m,\cdot)} - Q_{\boldsymbol{Z}_j}^n||_{\rm{TV}}\geq \exp\{-n\zeta_{j1}\})\nonumber \\
    &\qquad+ P_{\mathcal{C}_n}(P_{\mathcal{E}'}(\mathcal{C}_n,D_n) \geq \exp\{-n\gamma_1\})\\
    &\leq \sum\limits_{j=1,\dots,d}\exp\{-\exp\{n\zeta_{j2}\}\} + \exp\{-n\gamma_2\}\\
    &\leq \exp\{-n\lambda_1\}
\end{align} for any choice of $\lambda_1$ with $0<\lambda_1 < \gamma_2, \min\limits_{j=1,\dots,d}(\zeta_{j2})$ and sufficiently large $n$.

\section{Evaluation of Secrecy Rate} \label{Sec. Evaluation}
In this section, we evaluate the achievable secrecy rate of the RIS-assisted wiretap channel with SSOC. Then we formulate the secrecy rate maximization problem and design an optimization algorithm. In Section \ref{Sec. Evaluation} and \ref{Sec. Simulation}, we assume for numerical simplicity that all Eves are able to extract the corresponding signals from the direct link and the RIS-assisted link, and that the two signals are mutually independent. The theoretical result of Theorem \ref{Distinguishing achievablity} still holds for Eves' channels as described in \eqref{eq.origRecEve}

\subsection{Secrecy Rate Maximization Problem}
Consider a RIS-assisted wiretap channel $T_{\text{RIS}}$ with an AWGN model as defined in Sec. \ref{Subsec. Wireless}. Let $\boldsymbol{X}^n \in \mathbb{C}^{2\times n}$ be the vector-valued channel input, $\boldsymbol{Y}^n \in \mathbb{C}^{2\times n}$ and $\boldsymbol{Z}_j^n \in \mathbb{C}^{2\times n}$ be the vector-valued channel outputs for Bob and Eve$_j$, $j=1,\dots,d$, respectively. The channel outputs can be written as
\begin{align} 
    \boldsymbol{Y}^n &= \bH_{AB}^{(\boldsymbol{\theta})} \boldsymbol{X}^n  + \boldsymbol{E}^n_{AB},\\
    \boldsymbol{Z}_j^n &= \bH_{{AE}_j}^{(\boldsymbol{\theta})} \boldsymbol{X}^n  + \boldsymbol{E}^n_{AE_j},\label{eq. Output Z}\\
    \bH_{AB}^{(\boldsymbol{\theta})} &= \left[
    \begin{array}{cc}
    \alpha_{AB,1}& 0	\\
    0 & \alpha_{AB,2}^{(\boldsymbol{\theta})}
    \end{array}
    \right], \\
   \bH_{{AE}_j}^{(\boldsymbol{\theta})} &= \left[
    \begin{array}{cc}
    \alpha_{{AE}_j,1}& 0	\\
    0 & \alpha_{{AE}_j,2}^{(\boldsymbol{\theta})}
    \end{array}
    \right],
\end{align}
where $\bH_{AB}^{(\boldsymbol{\theta})}\in \mathbb{C}^{2\times 2}$ denotes the channel matrix between Alice and Bob, and the diagonal corresponds to the independence between direct link and RIS-assisted link in \eqref{eq.combineRec2}. $\boldsymbol{E}^n_{AB} = [E^n_{AB,1},E^n_{AB,2}]^T\in \mathbb{C}^{2\times n}$ denotes the channel noise between Alice and Bob, where $E^n_{AB,1},E^n_{AB,2}$ correspond to the independent complex AWGNs from the two links with noise power $N_{AB,1}$, $N_{AB,2}$ in \eqref{eq.combineRec2}, respectively. According to the assumption, the channel output at Eve$_j$ can be similarly simplified to \eqref{eq. Output Z} from \eqref{eq.origRecEve}. The channel matrix $\mathbf{H}_{{AE}_j}^{(\boldsymbol{\theta})} \in \mathbb{C}^{2\times 2}$ between Alice and Eve$_j$ is also a diagonal matrix with $\alpha_{{AE}_j,1}$, $\alpha_{{AE}_j,2}^{(\boldsymbol{\theta})}$ on the diagonal, and $\boldsymbol{E}^n_{AE_j} \!=\![E^n_{AE_j,1},E^n_{AE_j,2}]^T \!\in\! \mathbb{C}^{2\times n}$, where the components are also independent complex AWGNs with noise power $N_{AE_j,1}, N_{AE_j,2}$.
The RIS-related channel gains are optimized and fixed by the RIS configuration. The power is constrained according to $\mathbb{E}(|{X_1^n}^T X_1^n|^2) \leq P_1$ and $\mathbb{E}(|{X_2^n}^T X_2^n|^2) \leq P_2$. The total transmit power is constrained by $P_1+P_2 \leq P_t$.

Now we can evaluate the achievable secrecy rate from Theorem \ref{Distinguishing achievablity}. Assume that $\boldsymbol{X}^n$ is a Gaussian vector with zero mean and covariance matrix $K_{\boldsymbol{X}^n}$ subject to the power constraint $\rm{Tr}$$(K_{\boldsymbol{X}^n})=P_1+P_2 \leq P_t$, where $\rm Tr(\cdot)$ denotes the sum of elements on the main diagonal of the matrix. The secrecy rate \cite[Example 22.3]{el2011network} is given by
\begin{align} \label{eq. Rs det}
    R_s &= \log\det(\mathbb{I}+\bH_{AB}^{(\boldsymbol{\theta})}K_{\boldsymbol{X}^n}\bH_{AB}^{(\boldsymbol{\theta})H}K^{-1}_{\boldsymbol{E}_{AB}^n}) \nonumber\\
    &-\max\limits_{j=1,\dots,d} \left\{\log\det(\mathbb{I}+\bH_{AE_j}^{(\boldsymbol{\theta})}K_{\boldsymbol{X}^n}\bH_{AE_j}^{(\boldsymbol{\theta})H}K^{-1}_{\boldsymbol{E}_{AE_j}^n})\right\},
\end{align}where $\mathbb{I}$ is an identity matrix of size $2 \times 2$, and $K_{\boldsymbol{E}_{AB}^n},K_{\boldsymbol{E}_{AE_j}^n}$ are the covariance matrices of $\boldsymbol{E}^n_{AB},\boldsymbol{E}^n_{AE_j}$, respectively.

Since the noise components in $\boldsymbol{E}^n_{AB}$ and $\boldsymbol{E}^n_{AE_j}$ are independent, $K_{\boldsymbol{E}_{AB}^n}$ and $K_{\boldsymbol{E}_{AE_j}^n}$ are diagonal matrices with the corresponding noise power on the diagonal. The secrecy rate can be further simplified to
\begin{align}\label{eq. independent noise}
    &R_s(P_1,P_2) = \left.\log\left(1+\frac{P_1|\alpha_{AB,1}|^2}{N_{AB,1}}\right)+\log\left(1+\frac{P_2|\alpha_{AB,2}^{(\boldsymbol{\theta})}|^2}{N_{AB,2}}\right)\right.\\ \nonumber
&-  \max\limits_{j=1,\dots,d} \left(\log\left(1+\frac{P_1|\alpha_{AE_j,1}|^2}{N_{AE_j,1}}\right)+\log\left(1+\frac{P_2|\alpha_{AE_j,2}^{(\boldsymbol{\theta})}|^2}{N_{AE_j,2}}\right)\right).
\end{align}

In order to enhance the security of the system, we jointly optimize $P_1$ and $P_2$ to maximize the secrecy rate for all Eves with fixed $\boldsymbol{\theta}$. The problem of interest can be formulated as
\begin{align}\label{eq. Op Problem}
    &\max\limits_{P_1,P_2\geq 0 } R_s(P_1,P_2) \nonumber\\ 
    &\text{subject to}  \quad P_1+P_2\leq P_t. 
\end{align}

\subsection{Algorithm Design for Problem \eqref{eq. Op Problem}}
Observing that the objective function in \eqref{eq. independent noise} is non-concave, we adopt the Minorize-Maximization (MM) method \cite{hunter2004tutorial} to tackle the problem. The idea is to minorize the objective function by one suitable surrogate function in each iteration, making it concave. Then we solve the concave optimization problem for the surrogate function instead of the objective function with the Karush-Kuhn-Tucker (KKT) approach \cite{kuhn2013nonlinear} and produce the pair of $(P_1,P_2)$ for the next iteration until convergence. 

We start to explain one iteration of the proposed algorithm. Let $\mu_1 \coloneq \frac{|\alpha_{AB,1}|^2}{N_{AB,1}}, \mu_2 \coloneq \frac{|\alpha_{AB,2}^{(\boldsymbol{\theta})}|^2}{N_{AB,2}}, \beta_{j,1} \coloneq\frac{|\alpha_{AE_j,1}|^2}{N_{AE_j,1}}, \beta_{j,2} \coloneq\frac{|\alpha_{AE_j,2}^{(\boldsymbol{\theta})}|^2}{N_{AE_j,2}}$, $j=1,\dots,d,$ to simplify the notations.

Note that $\log(1+P_1\beta_{j,1})$ and $\log(1+P_2\beta_{j,2})$ are two logarithmic functions, their tangent lines at the fixed point $(P^{(m)}_1,P^{(m)}_2)$ are given by
\begin{align}
    T_{\beta_{j,1}}\!(P_1|P^{(m)}_1\!) &\!=\! \log(1+\beta_{j,1}P^{(m)}_1\!)+\frac{\beta_{j,1}(P_1-P^{(m)}_1)}{1+\beta_{j,1}P^{(m)}_1},\\
    T_{\beta_{j,2}}\!(P_2|P^{(m)}_2\!) &\!=\! \log(1+\beta_{j,2}P^{(m)}_2\!)+\frac{\beta_{j,2}(P_2-P^{(m)}_2)}{1+\beta_{j,2}P^{(m)}_2},
\end{align} where $(P^{(m)}_1,P^{(m)}_2)$ denotes a pair of fixed value of $(P_1,P_2)$ in the iteration $m$.

Due to the concavity of logarithmic functions, we have for all $(P_1,P_2)$
\begin{align}
    T_{\beta_{j,1}}(P_1|P^{(m)}_1) &\geq \log(1+P_1\beta_{j,1}),\\
    T_{\beta_{j,2}}(P_2|P^{(m)}_2) &\geq \log(1+P_2\beta_{j,2}).
\end{align}

We can construct a surrogate function for $R_s(P_1,P_2)$ by
\begin{align} \label{eq. surrogate function}
    &g(P_1,P_2,m|P^{(m)}_1,P^{(m)}_2) \nonumber\\&\coloneq \left.\log(1+P_1\mu_1)+\log(1+P_2\mu_2)\right. \nonumber\\
    &-\max\limits_{j=1,\dots,d} \left(T_{\beta_{j,1}}(P_1|P^{(m)}_1)+T_{\beta_{j,2}}(P_2|P^{(m)}_2)\right),
\end{align}

The function $g(P_1,P_2,m|P^{(m)}_1,P^{(m)}_2)$ is said to minorize $R_s(P_1,P_2)$ at the point $(P^{(m)}_1,P^{(m)}_2)$ \cite[Equation (1)]{hunter2004tutorial} if for all $(P_1,P_2)$
\begin{align}
    g(P_1,P_2,m|P^{(m)}_1,P^{(m)}_2) &\leq R_s(P_1,P_2),\label{eq. surrogate1}\\
    g(P^{(m)}_1,P^{(m)}_2,m|P^{(m)}_1,P^{(m)}_2) &= R_s(P^{(m)}_1,P^{(m)}_2).\label{eq. surrogate2}
\end{align}

In the MM algorithm, we maximize the surrogate function $g(P_1,P_2,m|P^{(m)}_1,P^{(m)}_2)$ rather than the actual function $R_s(P_1,P_2)$. If $(P^{(m+1)}_1,P^{(m+1)}_2)$ denotes the maximizer of $g(P_1,P_2,m|P^{(m)}_1,P^{(m)}_2)$, i.e., 
\begin{align}
&g(P^{(m+1)}_1,P^{(m+1)}_2,j|P^{(m)}_1,P^{(m)}_2) \nonumber\\ \geq \quad&  g(P_1,P_2,m|P^{(m)}_1,P^{(m)}_2) \quad \text{for all}\;(P_1,P_2),\label{eq. maximizer}
\end{align} 
then we can show that the MM procedure pushes $R_s(P_1,P_2)$  towards its maximum value: 
\begin{align}
    R_s(P^{(m+1)}_1,P^{(m+1)}_2) &\geq g(P^{(m+1)}_1,P^{(m+1)}_2,m|P^{(m)}_1,P^{(m)}_2)\nonumber\\
    &\geq g(P^{(m)}_1,P^{(m)}_2,m|P^{(m)}_1,P^{(m)}_2)\nonumber\\
    &= R_s(P^{(m)}_1,P^{(m)}_2)， \label{eq. ascent}
\end{align} where the inequality follows directly from \eqref{eq. surrogate1}, \eqref{eq. surrogate2} and \eqref{eq. maximizer}. The ascent property \eqref{eq. ascent} provides the MM algorithm with remarkable numerical stability.

In the next step, we perform optimization for the surrogate function over $P_1,P_2$. The corresponding problem is
\begin{align}
    &\max\limits_{P_1,P_2\geq 0} \ g(P_1,P_2,m|P^{(m)}_1,P^{(m)}_2)\nonumber \\
    &\text{subject to}  \quad P_1+P_2\leq P_t. \label{eq. op KKT}
\end{align} 

Note that $T_{\beta_{j,1}}(P_1|P^{(m)}_1),T_{\beta_{j,2}}(P_2|P^{(m)}_2)$ are two linear functions. Taking the maximum of  $T_{\beta_{j,1}}(P_1|P^{(m)}_1)+T_{\beta_{j,2}}(P_2|P^{(m)}_2)$ yields a convex function. The minus sign ahead makes the max function concave. Since logarithmic functions are concave, $g(P_1,P_2,m|P^{(m)}_1,P^{(m)}_2)$ is a sum of concave functions, and hence it is also a concave function. Therefore we can adopt the KKT algorithm to solve the optimization problem \eqref{eq. op KKT}. 

We first form a Lagrangian function with a Lagrange multiplier $\lambda$:
\begin{equation} \label{eq. Lagrangian}
    L(P_1,P_2,\lambda) \coloneq g(P_1,P_2,m|P^{(m)}_1,P^{(m)}_2)-\lambda q(P_1,P_2),
\end{equation}where 
\begin{equation}\label{eq. constraint}
    q(P_1,P_2) \coloneq P_1+P_2-P_t \leq 0
\end{equation} represents the inequality constraint.

Since $g(P_1,P_2,m|P^{(m)}_1,P^{(m)}_2)$ is a concave function and the constraint \eqref{eq. constraint} clearly shows that the optimization problem \eqref{eq. op KKT} is strictly feasible (Slater condition) \cite[Definition 10.5]{lauritzen2013undergraduate}, the conditions \eqref{eq. P1}-\eqref{eq. check2} are necessary and sufficient for $(P^*_1,P^*_2)$ being the maximizer of \eqref{eq. op KKT} \cite[Theorem 10.6]{lauritzen2013undergraduate}.

\begin{multicols}{2}
\noindent
    \begin{align*}
    \frac{\partial L(P^*_1,P^*_2,\lambda^*)}{\partial P_1} &= 0 \tag{a}\label{eq. P1}\\
    \lambda^* q(P^*_1,P^*_2) &= 0  \tag{c} \label{eq. g=0}\\
    P^*_1, P^*_2, \lambda^* &\geq 0 \tag{e}\label{eq. check2}
    \end{align*}
    \begin{align*}
    \frac{\partial L(P^*_1,P^*_2,\lambda^*)}{\partial P_2} &= 0 \tag{b}\label{eq. P2}\\
    q(P^*_1,P^*_2) &\leq 0  \tag{d}\label{eq. check1} 
    \end{align*}
\end{multicols}

The calculation of the closed-form solution is executed sequentially in two cases:

\emph{Case \Romannum{1}: inactive inequality constraint ($\lambda^* = 0$)}
\vspace*{-0.2cm}
\begin{itemize}
    \item Solving $P^*_1$ and $P^*_2$ from \eqref{eq. P1} and \eqref{eq. P2}.
    \item Putting $P^*_1,P^*_2$ in \eqref{eq. check1} and \eqref{eq. check2} to check the constraints.
    \item If satisfying the constraints, calculating $L(P^*_1,P^*_2, 0)$ as result. Otherwise, going to \emph{case \Romannum{2}}.
\end{itemize}

By applying \eqref{eq. surrogate function}, \eqref{eq. Lagrangian}, \eqref{eq. constraint} in \eqref{eq. P1} and \eqref{eq. P2}, we have
\begin{align*}
        \frac{\mu_1}{(1+P^*_1\mu_1)}-\max\limits_j(\frac{\beta_{j,1}}{1+\beta_{j,1}P^{(m)}_1}) &= 0 , \tag{a'}\\
        \frac{\mu_2}{(1+P^*_2\mu_2)}-\max\limits_j(\frac{\beta_{j,2}}{1+\beta_{j,2}P^{(m)}_2})  &= 0 .\tag{b'}
\end{align*}

In this case, 
\begin{align}
    P^*_1 &= \max\limits_j (\frac{1+\beta_{j,1}P^{(m)}_1}{\beta_{j,1}}) - \frac{1}{\mu_1},\\
    P^*_2 &= \max\limits_j (\frac{1+\beta_{j,2}P^{(m)}_2}{\beta_{j,2}}) - \frac{1}{\mu_2}.
\end{align}

If $P^*_1,P^*_2$ satisfy the inequality \eqref{eq. check1} and are non-negative \eqref{eq. check2}, the maximum is $L(P^*_1,P^*_2,0)$.

\emph{Case \Romannum{2}: active inequality constraint ($\lambda^* > 0$)}
\vspace*{-0.2cm}
\begin{itemize}
    \item $q(P^*_1,P^*_2)=0$ derived from \eqref{eq. g=0}.
    \item Solving $P^*_1,P^*_2,\lambda^*$ from \eqref{eq. P1}, \eqref{eq. P2} and $q(P_1,P_2)=0$.
    \item Putting $P^*_1,P^*_2,\lambda^*$ in \eqref{eq. check2} to check the constraints.
    \item If satisfying the constraints, calculating $L(P^*_1,P^*_2, \lambda^*)$ as result. Otherwise, no results.
\end{itemize}

In this case, there is a system of three linear equations with three unknowns:
\begin{align*}
        \frac{\mu_1}{(1+P^*_1\mu_1)}-\max\limits_j(\frac{\beta_{j,1}}{1+\beta_{j,1}P^{(m)}_1}) -\lambda &= 0 ,  \tag{a'}\label{eq. taga'}\\
        \frac{\mu_2}{(1+P^*_2\mu_2)}-\max\limits_j(\frac{\beta_{j,2}}{1+\beta_{j,2}P^{(m)}_2}) -\lambda &= 0 , \tag{b'}\label{eq. tagb'}\\
        \qquad P^*_1+P^*_2-P_t &= 0  \tag{c'}\label{eq. tagc'}.      
\end{align*} 

After deriving $P^*_2=P_t-P^*_1$ from \eqref{eq. tagc'} and bringing it to \eqref{eq. tagb'}, we can obtain a quadratic equation of $P^*_1$ by eliminating $\lambda^*$ via \eqref{eq. taga'} = \eqref{eq. tagb'}:
\begin{equation}\label{eq. quadractic equation for P1}
    AP^2_1 + BP_1 +C = 0, 
\end{equation}where 
\begin{align*}
    A &\coloneq D\mu_1\mu_2,\\
    B &\coloneq -D(\mu_1-\mu_2)-\mu_1\mu_2(DP_t+2),\\
    C &\coloneq \mu_1-\mu_2-D+P_t(\mu_1\mu_2-D\mu_2),\\
    D &\coloneq \max\limits_j(\frac{\beta_{j,1}}{1+\beta_{j,1}P^{(m)}_1}) - \max\limits_j(\frac{\beta_{j,2}}{1+\beta_{j,2}P^{(m)}_2}).
\end{align*}

If the discriminant $\bigtriangleup = B^2-4AC \geq 0$ for \eqref{eq. quadractic equation for P1}, the roots exist:
\begin{equation}
    P^*_1 = \frac{-B\pm \sqrt{\bigtriangleup}}{2A}.
\end{equation}

Then we have
\begin{align}
    P^*_2 &= P_t - P^*_1,\\
    \lambda^* &= \frac{\mu_1}{(1+ P^*_1\mu_1)}-\max\limits_j(\frac{\beta_{j,1}}{1+\beta_{j,1}P^{(m)}_1}) .
\end{align}

If $P^*_1,P^*_2,\lambda^*$ satisfy the non-negative constraints \eqref{eq. check2}, the maximum is $L(P^*_1,P^*_2, \lambda^*)$.

Eventually, we obtain the maximum value of the surrogate function:
\begin{align}
    \max\limits_{P_1,P_2} \ g(P_1,P_2,m|P^{(m)}_1,P^{(m)}_2) 
    = L(P^*_1,P^*_2, \lambda^*)
\end{align}and produce a new pair for the next iteration:
\begin{equation}
    (P^{(m+1)}_1,P^{(m+1)}_2) = (P^*_1,P^*_2).
\end{equation}

We summarize the whole process of the proposed algorithm for the problem \eqref{eq. Op Problem} in Algorithm \ref{Algorithm 1}. 
\begin{algorithm}
    \caption{Proposed Algorithm for Problem \eqref{eq. Op Problem}}
    \label{Algorithm 1}
    \begin{algorithmic}[1]
        \State Initialization: set an initial pair $(P^{(0)}_1,P^{(0)}_2)$.
        \Repeat 
        \State Construct a surrogate function $g(P_1,P_2,m|P^{(m)}_1\!,P^{(m)}_2\!)$ at the fixed pair $(P^{(m)}_1,P^{(m)}_2)$.
        \State Use KKT approach to find the optimal solution for $g$.
        \State Produce the next iterate pair $(P^{(m+1)}_1,P^{(m+1)}_2)$.
        \Until{convergence.}
    \end{algorithmic}
\end{algorithm}

\begin{Remark}
    The MM algorithm has a linear rate of convergence \cite[Equation (18)]{hunter2004tutorial} as it approaches an optimum point $(P^*_1,P^*_2)$:
    \begin{equation*}
        \lim\limits_{m \rightarrow \infty} \frac{\lVert (P^{(m+1)}_1,P^{(m+1)}_2) - (P^*_1,P^*_2) \rVert}{\lVert (P^{(m)}_1,P^{(m)}_2) - (P^*_1,P^*_2) \rVert} = c <1.
    \end{equation*}
\end{Remark}

\section{Simulation Results} \label{Sec. Simulation}
In this section, we evaluate the performance of the proposed method via simulations for a RIS-assisted system with SSOC. Moreover, we also consider the conventional case without SSOC \cite{shen2019secrecy} as a comparison and the simple case without RIS as a benchmark. We  maximize the secrecy rate by optimizing the power allocation to the direct and RIS-assisted links, while conventional algorithms optimize the total transmit power. Alice, the RIS, and Bob are located at (0,0), (50,0), and (50,10) in meter, respectively, as shown in Fig. \ref{Fig. new}. We set all antenna gains to 5 dB and all noise power to -104 dBm. The path loss model is free space path loss and given by $\rm PL = 20\log_{10} (\frac{4\pi d}{\lambda})$ (dB), where $\lambda = 0.01$ m is the wavelength and $d$ is the distance between two entities. The RIS was configured and fixed during the simulation. The number of iterations in Algorithm \ref{Algorithm 1} is 500.

\begin{figure}[tb]
	\begin{minipage}[t]{0.49\linewidth}
		\centering
		\includegraphics[width=1.1\textwidth]{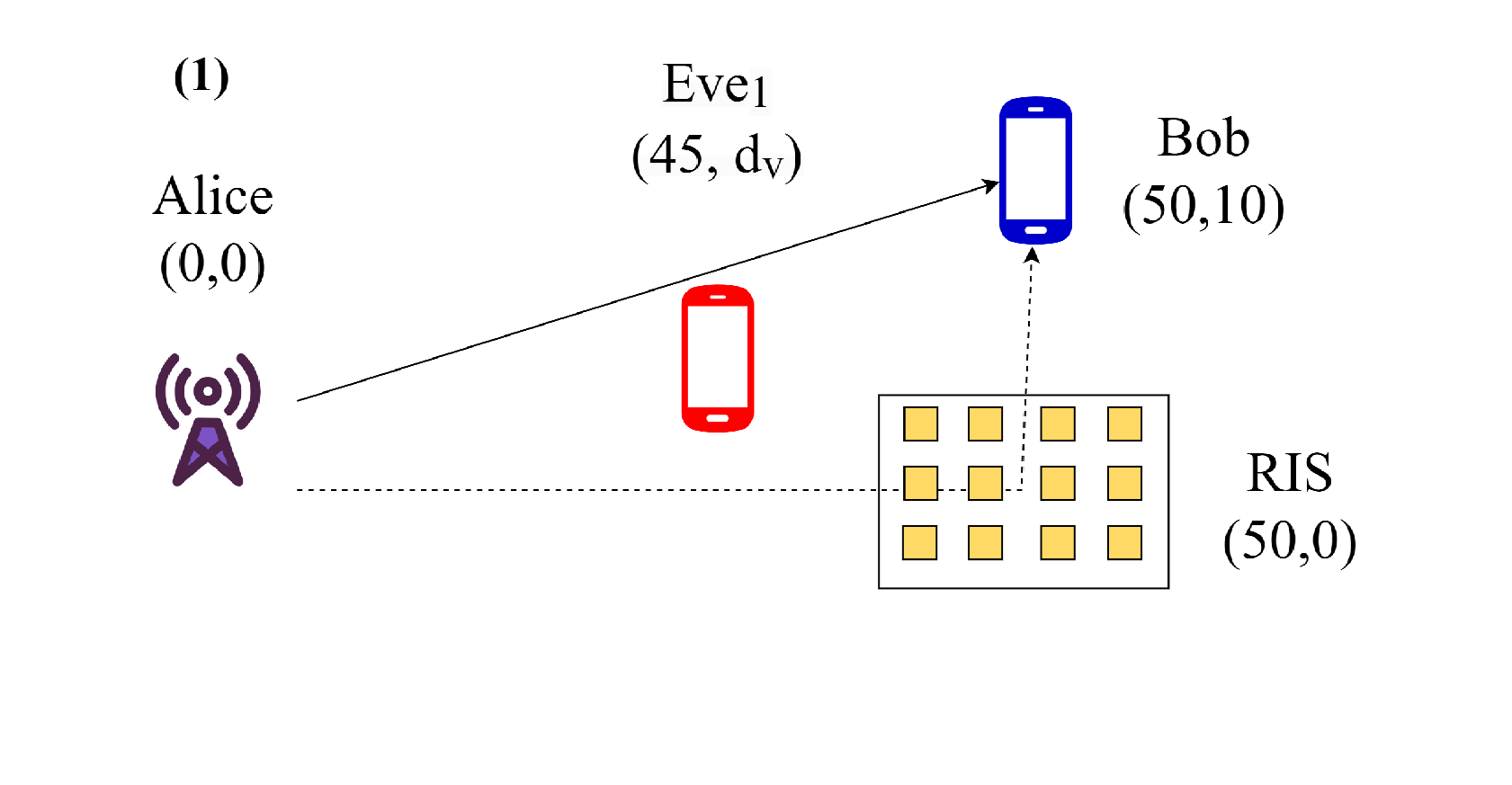}
	\end{minipage}
	\begin{minipage}[t]{0.49\linewidth}
		\centering
		\includegraphics[width=1.1\textwidth]{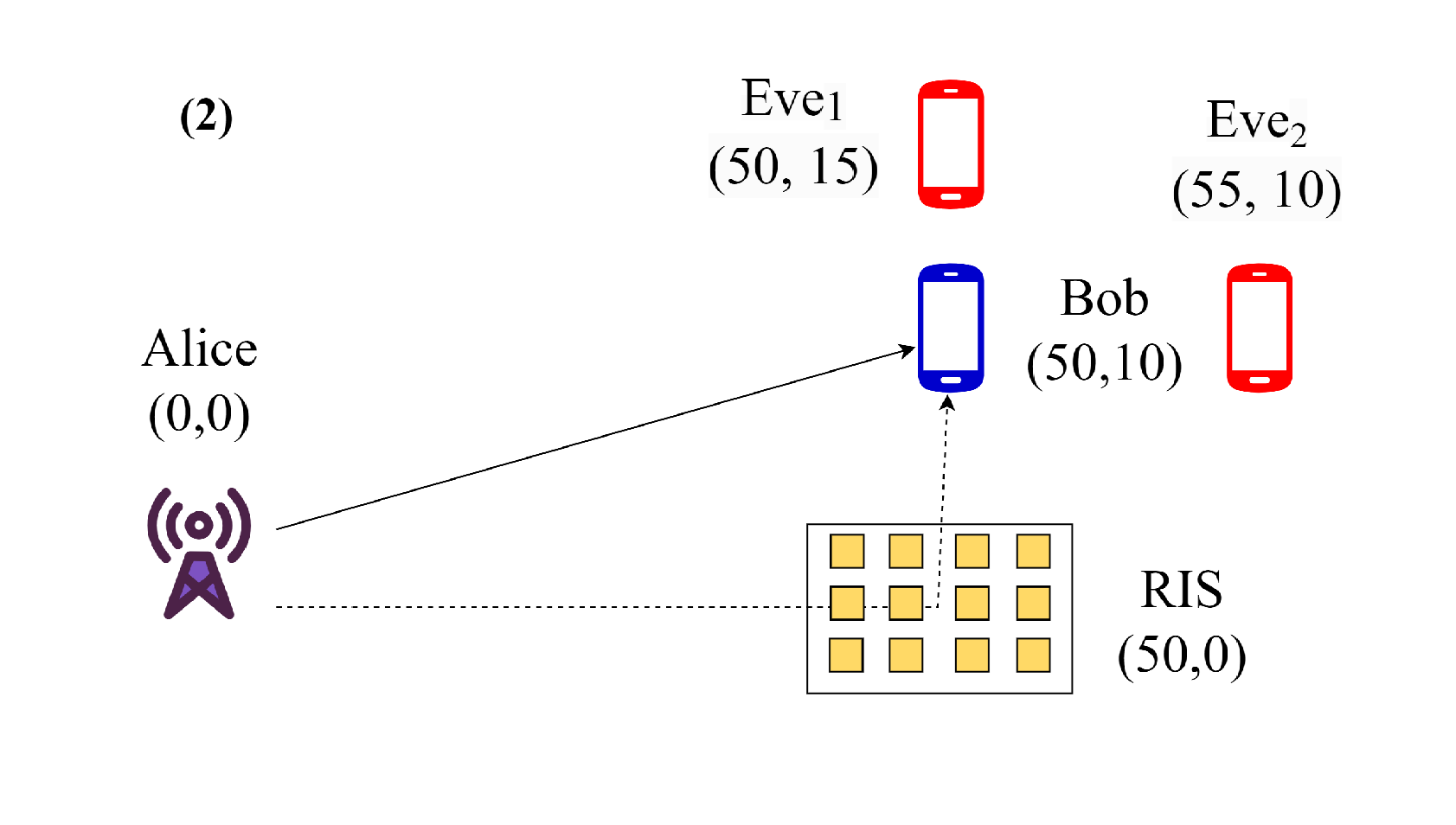}
	\end{minipage}
    	\caption{Simulation settings showing the positions of Alice, the RIS and Bob, along with two scenarios of Eves: (1) one moving Eve (2) two fixed Eves.}
		\label{Fig. new}
\end{figure}

Fig. \ref{Fig.2} shows the secrecy rate performance of different schemes by varying Eve's vertical distance $d_v$. Eve is located at $(45,d_v)$ in meter as shown in Fig. \ref{Fig. new} (1). From the overall perspective, the secrecy rate increases with the growing vertical distance for all three schemes. This is because Eve is moving away from Alice and the RIS, thus reducing the information leakage. It is observed that when $d_v \in [0,5]$, the secrecy rate is zero due to the proximity of Eve to Alice and the RIS. When $d_v \in [5,25]$, the rates of the two schemes with RIS increase significantly and identically, while the rate of the case without RIS stays zero. This is because Eve is still in an advantageous position for intercepting communication over the direct channel, and the RIS-assisted channel takes over the communication task. Therefore, all transmit power is used for the communication over the RIS-assisted channel and the combiner does not help for data extraction from two channels, making the blue and red curves overlap. As $d_v$ further increases, our proposed scheme becomes advantageous over the reference scheme. When both direct and RIS-assisted channels are available, our proposed SSOC method considers their channel state information individually to optimize the power allocation, resulting in a higher secrecy rate. 

Fig. \ref{Fig.3} illustrates the effect of the total transmit power $P_t$ on the secrecy rate. Two Eves are located at (50,15) and (55,10) in meter, respectively, as shown in Fig. \ref{Fig. new} (2). From the figure, we observe that the secrecy rate without using the RIS increases very slowly with growing $P_t$ and saturates at a lower value. This is because the two Eves are close to Bob such that beamforming at Alice alone can only achieve a very limited secrecy rate. In contrast, the secrecy rates of two schemes with RIS increase significantly with growing $P_t$. The reason is that by configuring the RIS, the signals from the direct channel and the RIS-assisted channel can be combined constructively at Bob, but destructively at Eves, thus enhancing the system security. Moreover, it is observed that the more total transmit power can be allocated within a certain range, the more benefits our proposed SSOC method compared to the reference method will bring. The results also show that the benefits still exist in the case of more than one Eve.

Fig. \ref{Fig.4} shows the performance of the secrecy rate when the number of Eves increases. The total transmit power is 40 dBm. We fix a region with an x-axis of 40 to 45 and a y-axis of 30 to 50, from which we randomly select a certain number of Eves. We perform the random experiment 500 times in the simulation to generate an array of secrecy rates. Then we quantize the array in a way that $95\%$ of the values in it are greater than or equal the result, i.e., $P(R_s(i) \geq \hat{R}_s ) = 95\%$ for $i=1,\dots,500$, where $\hat{R}_s$ denotes the secrecy rate for this region. From the figure, it is observed that the secrecy rates of both schemes for the region are decreasing as the number of Eves increases. This matches with the intuitive expectation that the more Eves are in the same area, the greater the threat to system security. In addition, our proposed SSOC method shows more resilience against Eves in terms of security due to the power allocation compared to the reference method. 

\begin{figure}[tb]
    \centering
    \includegraphics[width=0.5\textwidth]{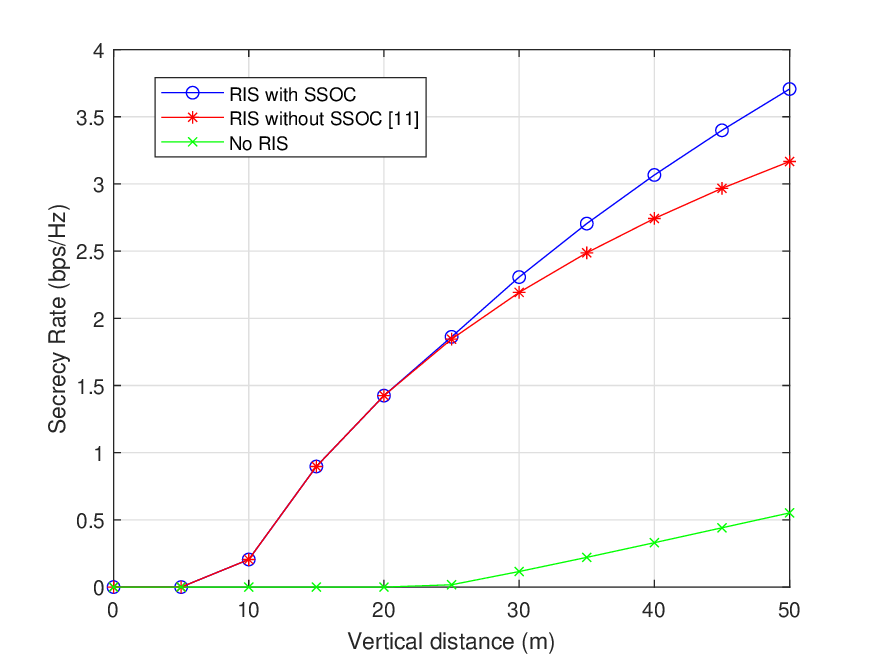}
    \caption{Secrecy rate versus the vertical distance of one Eve.}
    \label{Fig.2}
\end{figure}

\begin{figure}[tb]
    \centering
    \includegraphics[width=0.5\textwidth]{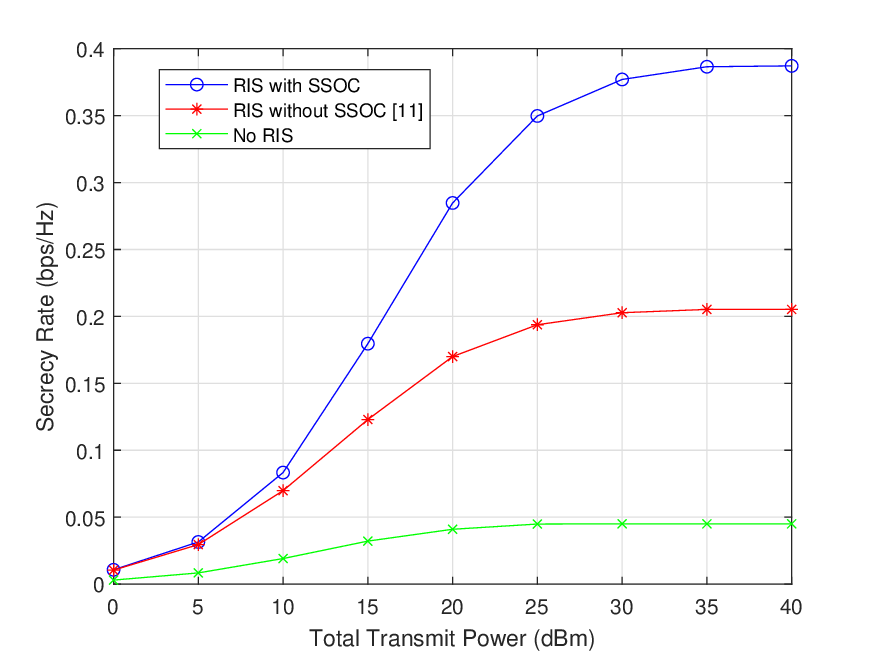}
    \caption{Secrecy rate versus the total transmit power with two Eves.}
    \label{Fig.3}
\end{figure}

\begin{figure}[tb]
    \centering
    \includegraphics[width=0.5\textwidth]{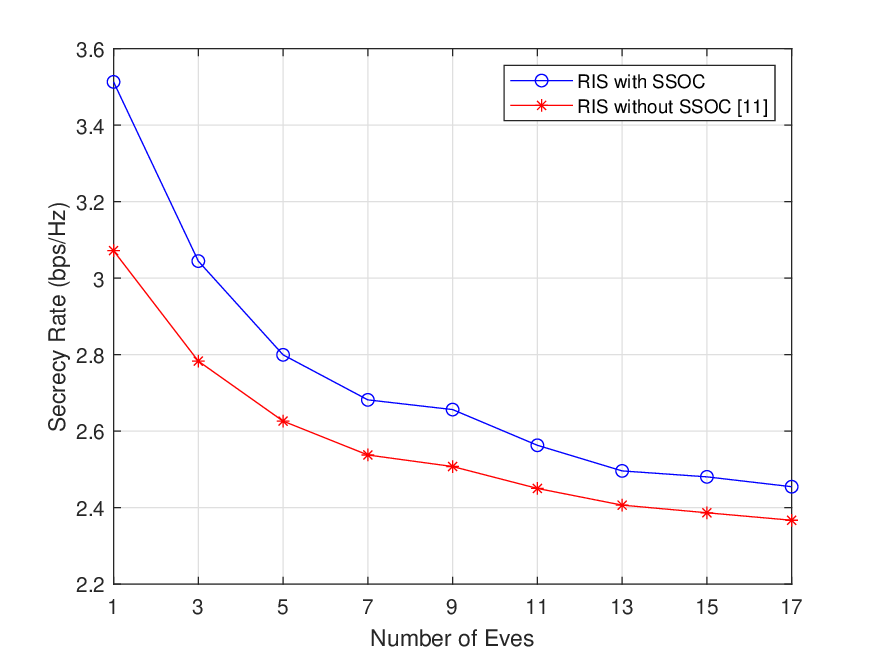}
    \caption{Secrecy rate versus the number of Eves.}
    \label{Fig.4}
\end{figure}

\section{Conclusion} \label{Sec. Conclusion}
In this work, we have proposed a RIS-assisted wiretap channel with SSOC. We have shown that the semantic secure and reliable communication in this regime can be achieved and have derived the achievable secrecy rate. We have designed an optimization algorithm for the achievable secrecy rate which allocates the total available transmit power between the direct link and the RIS-assisted link. The simulation results have demonstrated the security performance enhancement with respect to the total transmit power, the location and the number of eavesdroppers. 
\section{Appendix}
\subsection{Auxiliary Results}\label{Sec. Auxiliary Results}
\begin{lemma}\label{rewrite TV distance}
    The total variation distance can be rephrased as follows:
    \begin{equation}
        ||P_{\boldsymbol{Z}_j^n|\boldsymbol{X}^n(m,\cdot)} - Q_{\boldsymbol{Z}_j}^n||_{\rm{TV}} = \mathbb{E}_{Q_{\boldsymbol{Z}_j}^n}\left[\frac{dP_{\boldsymbol{Z}_j^n|\boldsymbol{X}^n(m,\cdot)}}{dQ_{\boldsymbol{Z}_j}^n}-1\right]^+.
    \end{equation}
\end{lemma}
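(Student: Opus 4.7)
The plan is to verify the identity by the standard rewriting of total variation distance in terms of the Radon--Nikodym derivative, while taking care to check absolute continuity of the mixture measure on the left-hand side.

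First I would confirm that the Radon--Nikodym derivative in the statement is well defined, i.e.\ that $P_{\boldsymbol{Z}_j^n|\boldsymbol{X}^n(m,\cdot)} \ll Q_{\boldsymbol{Z}_j}^n$. By \eqref{eq. observed-output-measure}, this measure is the uniform convex combination $\frac{1}{L_1}\sum_{w=1}^{L_1} K^{(\boldsymbol{\theta})\otimes n}_{AE_j}(\cdot,\boldsymbol{x}^n(m,w))$, and each summand is absolutely continuous with respect to $Q_{\boldsymbol{Z}_j}^n$ for $Q_{\boldsymbol{X}}^n$-a.e.\ $\boldsymbol{x}^n$ by the very Radon--Nikodym argument already laid out in the paragraph preceding Theorem \ref{Distinguishing achievablity}. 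Absolute continuity is preserved under finite convex combinations, so the derivative $\varphi := dP_{\boldsymbol{Z}_j^n|\boldsymbol{X}^n(m,\cdot)}/dQ_{\boldsymbol{Z}_j}^n$ exists almost surely and satisfies $\int \varphi \, dQ_{\boldsymbol{Z}_j}^n = 1$.

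Next I would apply the textbook Jordan-type argument. Set $f := \varphi - 1$, so that $\mathbb{E}_{Q_{\boldsymbol{Z}_j}^n}[f] = 0$. For any measurable $A$, a direct computation gives $P_{\boldsymbol{Z}_j^n|\boldsymbol{X}^n(m,\cdot)}(A) - Q_{\boldsymbol{Z}_j}^n(A) = \int_A f\, dQ_{\boldsymbol{Z}_j}^n$. Choosing $A^\star := \{f > 0\}$ maximises this signed difference and yields $\int f^+\, dQ_{\boldsymbol{Z}_j}^n$, while the complement $\{f \leq 0\}$ maximises $Q(A) - P(A)$ and yields $\int f^-\, dQ_{\boldsymbol{Z}_j}^n$. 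Because $\int f\, dQ_{\boldsymbol{Z}_j}^n = 0$, the two integrals are equal, so both candidate extremisers of $|P(A) - Q(A)|$ give the same value. Plugging this into Definition \ref{def. TV distance}, the supremum defining $||\cdot||_{\rm{TV}}$ is attained and equals $\mathbb{E}_{Q_{\boldsymbol{Z}_j}^n}[(\varphi - 1)^+]$, which is precisely the right-hand side of the lemma.

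This result poses no substantive difficulty; it is a classical identity from measure theory. The only subtlety worth flagging explicitly is that the paper's earlier Radon--Nikodym discussion is phrased for a single kernel $K^{(\boldsymbol{\theta})\otimes n}_{AE_j}(\cdot, \boldsymbol{x}^n)$ rather than for the finite mixture appearing in $P_{\boldsymbol{Z}_j^n|\boldsymbol{X}^n(m,\cdot)}$, so I would include the short remark that finite convex combinations preserve absolute continuity. Beyond that, the proof is pure bookkeeping about the positive and negative parts of the zero-mean function $\varphi - 1$.
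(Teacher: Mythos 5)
Your proof is correct. The paper itself does not spell out an argument for this lemma; it simply cites an external reference (Lemma A.6 of the cited work by Bender et al.), so there is no internal proof to compare against. Your derivation is the standard one: establish $P_{\boldsymbol{Z}_j^n|\boldsymbol{X}^n(m,\cdot)} \ll Q_{\boldsymbol{Z}_j}^n$ (correctly noting that absolute continuity of each kernel term for $Q_{\boldsymbol{X}}^n$-a.e.\ codeword passes to the finite uniform mixture), then observe that $\sup_A \bigl(P(A)-Q(A)\bigr)$ is attained on $\{\varphi>1\}$ and equals $\int(\varphi-1)^+\,dQ$, with the zero-mean property of $\varphi-1$ guaranteeing that the negative-part extremiser gives the same value, so the supremum of the absolute difference in Definition \ref{def. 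TV distance} equals $\mathbb{E}_{Q_{\boldsymbol{Z}_j}^n}[(\varphi-1)^+]$. Your explicit remark about extending the paper's single-kernel Radon--Nikodym discussion to the mixture is a genuine (if minor) point of care that the paper leaves implicit.
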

The proof is given in \cite[Lemma A.6]{bender2021non}.

\begin{theorem}[Tonelli's Theorem \cite{werner2006einfuhrung}]\label{Theo. Tonelli}
    Let $(X,\mathcal{A},\mu)$ and $(Y,\mathcal{B},v)$ be $\sigma$-finite measure spaces and $f: X\times Y \rightarrow [0,+\infty]$ be $\mathcal{A} \otimes \mathcal{B}$-measurable, then
    \begin{align}
        \int_{X \times Y} f d(\mu \otimes v) &= \int_X \left(\int_Y f(x,y) dv(y)\right) d\mu(x) \nonumber \\
        &= \int_Y \left(\int_X f(x,y) d\mu(x)\right) dv(y).
    \end{align}
\end{theorem}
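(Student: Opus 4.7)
The plan is to follow the standard measure-theoretic construction by proving the statement first for indicator functions, then for simple functions, and finally extending to arbitrary non-negative measurable $f$ via the monotone convergence theorem. The scaffold has four layers of increasing generality, each built by linearity or a limit argument on the previous one.

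First I would establish the base case on measurable rectangles $E = A \times B$ with $A \in \mathcal{A}$, $B \in \mathcal{B}$. Here $f = \mathbf{1}_{A \times B}$, the $x$-sections satisfy $\mathbf{1}_{A \times B}(x, \cdot) = \mathbf{1}_A(x) \mathbf{1}_B(\cdot)$, and the inner integral evaluates to $\mathbf{1}_A(x) \, v(B)$, which is $\mathcal{A}$-measurable in $x$. Integrating against $\mu$ gives $\mu(A) v(B) = (\mu \otimes v)(A \times B)$ by the definition of the product measure, and symmetrically on the other side.

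The main obstacle is the second layer: extending the identity to indicators of arbitrary sets $E \in \mathcal{A} \otimes \mathcal{B}$. The plan here is to apply a monotone class (or Dynkin $\pi$-$\lambda$) argument. Let $\mathcal{D}$ be the collection of sets $E \in \mathcal{A} \otimes \mathcal{B}$ for which (i) the section $E_x = \{y : (x,y) \in E\}$ is in $\mathcal{B}$ for every $x$, (ii) $x \mapsto v(E_x)$ is $\mathcal{A}$-measurable, and (iii) $(\mu \otimes v)(E) = \int v(E_x) \, d\mu(x)$, with the analogous property for $y$-sections. Under the assumption that $\mu$ and $v$ are \emph{finite}, I would verify that $\mathcal{D}$ is a $\lambda$-system containing the $\pi$-system of measurable rectangles, concluding $\mathcal{D} = \mathcal{A} \otimes \mathcal{B}$. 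The $\sigma$-finiteness hypothesis is then used to patch this together: choose exhaustions $X = \bigcup_k X_k$, $Y = \bigcup_\ell Y_\ell$ with $\mu(X_k), v(Y_\ell) < \infty$, apply the finite-measure result to the restrictions of $\mu, v$ to $X_k \times Y_\ell$, and pass to the limit via monotone convergence on $E \cap (X_k \times Y_\ell) \uparrow E$. Properties (ii) and (iii) in the definition of $\mathcal{D}$ are the delicate points; (ii) requires that a pointwise monotone limit of measurable functions is measurable, and (iii) requires monotone convergence applied to the iterated integrals.

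With the statement secured for all indicators, extension to non-negative simple functions $f = \sum_{i=1}^N c_i \mathbf{1}_{E_i}$ follows by linearity of all three integrals in the equation. Finally, for a general $\mathcal{A} \otimes \mathcal{B}$-measurable $f \geq 0$, I would invoke the standard approximation $f_n \uparrow f$ by simple non-negative functions (for instance $f_n = \sum_{k=0}^{n 2^n - 1} \frac{k}{2^n} \mathbf{1}_{\{k/2^n \leq f < (k+1)/2^n\}} + n \mathbf{1}_{\{f \geq n\}}$) and apply the monotone convergence theorem three times: once to the product-measure integral, once to the inner $v$-integral (giving measurability and the limit of the inner integrals), and once to the outer $\mu$-integral. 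The symmetric identity with the roles of $X$ and $Y$ swapped is obtained by repeating the same argument; both iterated integrals coincide with $\int f \, d(\mu \otimes v)$, which proves the claim.
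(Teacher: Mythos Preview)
Your proof outline is correct and follows the standard textbook approach to Tonelli's theorem. However, the paper does not actually prove this statement: it is listed under ``Auxiliary Results'' with a citation to \cite{werner2006einfuhrung} and is used without proof as a black box (specifically, to justify swapping expectations in the proof of Lemma~\ref{Lem. Bound for Typical Terms}). So there is no ``paper's own proof'' to compare against; your proposal simply supplies a standard argument for a result the authors take as known.
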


\begin{lemma}\label{Lem. non negative}
    For a non-negative random variable $X$:
    \begin{equation}
        \mathbb{E}[X] = \int_0^{+\infty} P(X \geq t)dt.
    \end{equation}
\end{lemma}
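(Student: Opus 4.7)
The plan is to use the layer cake representation of a non-negative random variable together with Tonelli's theorem (Theorem \ref{Theo. Tonelli}), which is stated just above as an auxiliary result and is precisely the tool needed here.

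First, I would observe that for any non-negative real number $x$, one has the elementary identity $x = \int_0^{+\infty} \mathbf{1}\{x \geq t\} \, dt$, since the integrand equals $1$ on the interval $[0,x]$ and $0$ elsewhere. Applied pointwise to a non-negative random variable $X$ on a probability space $(\Omega,\mathcal{A},\mathbb{P})$, this gives
\begin{equation*}
X(\omega) = \int_0^{+\infty} \mathbf{1}\{X(\omega) \geq t\} \, dt \qquad \text{for all } \omega \in \Omega.
\end{equation*}

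Next, I would take expectations on both sides and interchange the order of integration. To apply Tonelli's theorem, I need that the map $(\omega,t) \mapsto \mathbf{1}\{X(\omega) \geq t\}$ is jointly $\mathcal{A} \otimes \mathcal{B}([0,+\infty))$-measurable. This holds because the set $\{(\omega,t) : X(\omega) \geq t\}$ is the preimage of the closed set $\{(x,t) : x \geq t\} \subset \mathbb{R}^2$ under the measurable map $(\omega,t) \mapsto (X(\omega),t)$. Both measures $\mathbb{P}$ and the Lebesgue measure on $[0,+\infty)$ are $\sigma$-finite, so Tonelli applies and yields
\begin{equation*}
\mathbb{E}[X] = \int_{\Omega} \int_0^{+\infty} \mathbf{1}\{X(\omega) \geq t\} \, dt \, d\mathbb{P}(\omega) = \int_0^{+\infty} \int_{\Omega} \mathbf{1}\{X(\omega) \geq t\} \, d\mathbb{P}(\omega) \, dt.
\end{equation*}
Recognizing the inner integral as $\mathbb{P}(X \geq t)$ completes the proof.

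There is no real obstacle in this argument; the only mildly technical point is verifying joint measurability so that Tonelli's theorem is applicable, which is routine. Everything else is a direct rewriting and an application of the theorem stated immediately before the lemma, so the proof is essentially two lines once the layer cake identity is in place.
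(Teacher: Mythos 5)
Your proof is correct: the layer-cake identity combined with Tonelli's theorem (and the joint-measurability check you supply) is the standard derivation of this formula. The paper itself does not spell out a proof but merely cites \cite[Equation (21.9)]{billingsley1995probability}, where essentially this same Fubini--Tonelli argument is used, so your approach coincides with the intended one.
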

The proof is given in \cite[Equation (21.9)]{billingsley1995probability}.

\begin{lemma}\label{lem. chernoff}
    Let $X_1,\dots,X_n$ be independent random variables, with $0 \leq X_k \leq J$ for each $k$. Let $S_n = \sum_{k=1}^n X_k$ and $\mu = \mathbb{E}[S_n] \leq \mu'$. For any $\epsilon > 0$:
    \begin{equation}
        P(S_n \geq (1+\epsilon)\mu') \leq \exp{\left\{-\frac{\epsilon^2\mu'}{2(1+\epsilon/3)J} \right\}}.
    \end{equation}
\end{lemma}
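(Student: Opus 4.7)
The plan is to apply the Chernoff bounding technique together with a sharp moment generating function bound for bounded nonnegative random variables, arriving at Bennett's inequality, and then convert it into the stated Bernstein form via a standard elementary inequality.

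First, for any $s>0$, Markov's inequality applied to $e^{sS_n}$ gives
\[
P(S_n \geq (1+\epsilon)\mu') \leq e^{-s(1+\epsilon)\mu'}\,\mathbb{E}[e^{sS_n}],
\]
and independence of the $X_k$ factorizes the expectation as $\prod_{k=1}^n \mathbb{E}[e^{sX_k}]$. Since $X_k \in [0,J]$, convexity of $x \mapsto e^{sx}$ on $[0,J]$ yields the linear upper bound $e^{sX_k} \leq 1 + \tfrac{e^{sJ}-1}{J}X_k$ pointwise. Taking expectations and using $1+y \leq e^{y}$ I would obtain
\[
\mathbb{E}[e^{sX_k}] \leq \exp\!\left(\tfrac{e^{sJ}-1}{J}\mathbb{E}[X_k]\right),
\]
which, after multiplying over $k$ and invoking $\mu = \mathbb{E}[S_n] \leq \mu'$, gives $\mathbb{E}[e^{sS_n}] \leq \exp\{\tfrac{e^{sJ}-1}{J}\mu'\}$.

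Substituting this back, the problem reduces to minimizing $-s(1+\epsilon)\mu' + \tfrac{e^{sJ}-1}{J}\mu'$ over $s>0$. The optimal choice is the Bennett choice $s = J^{-1}\ln(1+\epsilon)$, for which $e^{sJ}-1 = \epsilon$, and I obtain the Bennett-type bound
\[
P(S_n \geq (1+\epsilon)\mu') \leq \exp\!\left(-\tfrac{\mu'}{J}\bigl[(1+\epsilon)\ln(1+\epsilon) - \epsilon\bigr]\right).
\]

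To finish, I would apply the well-known elementary inequality
\[
(1+\epsilon)\ln(1+\epsilon) - \epsilon \;\geq\; \frac{\epsilon^2}{2(1+\epsilon/3)} \quad \text{for all } \epsilon \geq 0,
\]
which immediately turns Bennett's bound into the claimed Bernstein form. I expect the main obstacle to be justifying this last step cleanly, since everything preceding it is routine Chernoff bookkeeping. The standard approach is to set $h(\epsilon) := (1+\epsilon)\ln(1+\epsilon) - \epsilon - \frac{\epsilon^2}{2(1+\epsilon/3)}$ and verify $h(0)=h'(0)=0$ together with $h''(\epsilon) \geq 0$ on $[0,\infty)$; alternatively, one can compare power series coefficients of the two sides. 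Once this inequality is in hand, combining it with the Bennett form completes the proof of the lemma.
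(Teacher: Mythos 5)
Your derivation is correct. The paper does not prove this lemma inline; it only cites \cite[Appendix 15]{bender2021non}, so there is no in-paper argument to compare against step by step. Your route is the standard one: Markov's inequality on $e^{sS_n}$, the convexity/chord bound $e^{sx}\leq 1+\frac{e^{sJ}-1}{J}x$ on $[0,J]$ followed by $1+y\leq e^y$, the Bennett choice $s=J^{-1}\ln(1+\epsilon)$, and finally the elementary inequality $(1+\epsilon)\ln(1+\epsilon)-\epsilon\geq \frac{\epsilon^2}{2(1+\epsilon/3)}$. The one place where the "extension" of the lemma (replacing $\mu$ by an upper bound $\mu'$) actually matters is handled correctly: you invoke $\mu\leq\mu'$ only after the MGF bound, where the coefficient $\frac{e^{sJ}-1}{J}$ is positive for $s>0$, so the substitution preserves the inequality; and the deviation threshold $(1+\epsilon)\mu'$ is already stated in terms of $\mu'$, so no further care is needed there. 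Your proposed verification of the last elementary inequality also goes through: with $h(\epsilon)=(1+\epsilon)\ln(1+\epsilon)-\epsilon-\frac{\epsilon^2}{2(1+\epsilon/3)}$ one computes $h(0)=h'(0)=0$ and $h''(\epsilon)=\frac{1}{1+\epsilon}-\frac{27}{(3+\epsilon)^3}\geq 0$, since this reduces to $9\epsilon^2+\epsilon^3\geq 0$. In short, your proof is complete and self-contained, which is arguably a service beyond what the paper offers for this statement.
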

The lemma above is an extension of Chernoff-Hoeffding inequality \cite{mcdiarmid1998concentration}. The proof of the extension is given in \cite[Appendix 15]{bender2021non}.

\subsection{Proofs of Lemma \ref{Lem. Bound for Typical Terms} and Lemma \ref{Lem. Bound for Atypical Terms}}\label{app:typ-resolv}
\begin{proof}[Proof of Lemma \ref{Lem. Bound for Typical Terms}]
    For $j = 1,\dots,d$ and $t >0$:
    \begin{align}                           
        &\mathbb{E}_{\mathcal{C}_n}\mathbb{E}_{Q_{\boldsymbol{Z}_j}^n}\left[\frac{dP_{j,\mathcal{C}_n,1}}{dQ_{\boldsymbol{Z}_j}^n}-1\right]^+ \nonumber\\                   &=\mathbb{E}_{Q_{\boldsymbol{Z}_j}^n}\mathbb{E}_{\mathcal{C}_n}\left[\frac{dP_{j,\mathcal{C}_n,1}}{dQ_{\boldsymbol{Z}_j}^n}-1\right]^+ \label{eq. Tonelli}\\
         &= \mathbb{E}_{Q_{\boldsymbol{Z}_j}^n}\left[\int_0^{+\infty} P_{\mathcal{C}_n}\left(\left[\frac{dP_{j,\mathcal{C}_n,1}}{dQ_{\boldsymbol{Z}_j}^n}-1\right]^+ \geq t\right)dt\right]\label{eq. non negative}\\
         &= \mathbb{E}_{Q_{\boldsymbol{Z}_j}^n}\left[\int_0^{+\infty} P_{\mathcal{C}_n}\left(\frac{dP_{j,\mathcal{C}_n,1}}{dQ_{\boldsymbol{Z}_j}^n} \geq t+1\right)dt\right] \label{eq. probability term},
    \end{align} where \eqref{eq. Tonelli} follows by applying the Tonelli's Theorem and \eqref{eq. non negative} follows by applying Lemma \ref{Lem. non negative}. The last equality results from the equivalence of the events $\left\{\left[\frac{dP_{j,\mathcal{C}_n,1}}{dQ_{\boldsymbol{Z}_j}^n}-1\right]^+\geq t\right\}(t>0)$ and $\left\{\frac{dP_{j,\mathcal{C}_n,1}}{dQ_{\boldsymbol{Z}_j}^n} \geq t+1\right\}$.

    In the next step, we bound the probability term in \eqref{eq. probability term} by Lemma \ref{lem. chernoff}. Note that the term $\frac{dP_{j,\mathcal{C}_n,1}}{dQ_{\boldsymbol{Z}_j}^n}$ can be regarded as a sum of independent random variables. Thus, bounds for the expectation of the sum and the individual terms of the sum are needed. 

    We begin with the upper bound $\mu'$ for the expectation of the sum $\mu$:
   \begin{equation}
        \mu = \mathbb{E}_{\mathcal{C}_n}[\frac{dP_{j,\mathcal{C}_n,1}}{dQ_{\boldsymbol{Z}_j}^n}] \leq \mathbb{E}_{\mathcal{C}_n}[\frac{dP_{\boldsymbol{Z}_j^n|\boldsymbol{X}^n(m,\cdot)}}{dQ_{\boldsymbol{Z}_j}^n}] =1\coloneq \mu'.
   \end{equation}

   For the bound $J_j$ on individual terms, we apply the definitions of $P_{j,\mathcal{C}_n,1}$ \eqref{eq. P_jC1}, the information density \eqref{eq. def. information density} and the typical set \eqref{eq. typical sets}. Note that $L_1=\exp{\{nR_1\}}$ we have 
   \begin{align*}
       &\frac{1}{{L_1}}\frac{dK^{(\boldsymbol{\theta}){\otimes n}}_{{AE}_j}(\cdot \cap \mathcal{T}_{\epsilon_j}(\boldsymbol{x}^n(m,w)), \boldsymbol{x}^n(m,w))}{dQ_{\boldsymbol{Z}_j}^n}\\
       &= \frac{1}{{L_1}}\exp\log\left(\frac{dK^{(\boldsymbol{\theta}){\otimes n}}_{{AE}_j}(\cdot \cap \mathcal{T}_{\epsilon_j}(\boldsymbol{x}^n(m,w)), \boldsymbol{x}^n(m,w))}{dQ_{\boldsymbol{Z}_j}^n}\right)\\
       &= \exp\{-nR_1\}\exp\log\left(\!\!\frac{dK^{(\boldsymbol{\theta}){\otimes n}}_{{AE}_j}(\cdot \cap \mathcal{T}_{\epsilon_j}(\boldsymbol{x}^n(m,w)), \boldsymbol{x}^n(m,w))}{dQ_{\boldsymbol{Z}_j}^n}\!\!\right)\\
       &= \exp\{-nR_1\}+\exp\{i[\boldsymbol{x}^n;\boldsymbol{z}_j^n]\}\cdot \mathbbm{1}_{(\boldsymbol{x}^n,\boldsymbol{z}_j^n)\in \mathcal{T}_{\epsilon_j}}\\
       & \leq \exp\{-nR_1 + i[\boldsymbol{x}^n;\boldsymbol{z}_j^n]\}\\
       & \leq \exp\{n(I[\boldsymbol{X};\boldsymbol{Z}_j] + \epsilon_j-R_1)\}\coloneq J_j.
   \end{align*}
   
   Substituting $\mu'$ and $J_j$ into Lemma \ref{lem. chernoff} yields
   \begin{align}
       &P_{\mathcal{C}_n}\left(\!\sum_{w=1}^{L_1}\!\!\frac{1}{{L_1}}\frac{dK^{(\boldsymbol{\theta}){\otimes n}}_{{AE}_j}(\cdot \cap \mathcal{T}_{\epsilon_j}(\boldsymbol{x}^n(m,w)), \boldsymbol{x}^n(m,w))}{dQ_{\boldsymbol{Z}_j}^n} \!\geq\! t+1\right) \nonumber\\
   &\quad \leq \exp{\left\{-\frac{t^2}{2(1+t/3)J_j} \right\}}\label{eq. probability bound}.
   \end{align}
   
   Now we use the inequality in \eqref{eq. probability bound} to derive an upper bound on the integral in \eqref{eq. probability term}. Observe that \eqref{eq. probability bound} can be upper bounded by $\exp{\{-\frac{t^2}{3J_j}\}}$ for $0 \leq t <1$ and $\exp{\{-\frac{t}{3J_j}\}}$ for $t>1$. Thus we obtain 
   \begin{align}
       &\int_0^{+\infty} P_{\mathcal{C}_n}\left(\frac{dP_{j,\mathcal{C}_n,1}}{dQ_{\boldsymbol{Z}_j}^n} \geq t+1\right)dt \nonumber\\
       &\leq \int_0^{+\infty}\exp{\left\{-\frac{t^2}{2(1+t/3)J_j} \right\}}dt\nonumber\\
       &\leq \int_0^1 \exp{\{-\frac{t^2}{3J_j}\}}dt + \int_1^{+\infty} \exp{\{-\frac{t}{3J_j}\}}dt\nonumber\\
       &\leq \int_0^{+\infty} \exp{\{-\frac{t^2}{3J_j}\}}dt + \int_0^{+\infty} \exp{\{-\frac{t}{3J_j}\}}dt\nonumber\\
       &= \frac{\sqrt{3\pi}}{2}\sqrt{J_j}+3J_j\nonumber\\
       &= \frac{\sqrt{3\pi}}{2}\exp{\{-\frac{n\epsilon_j}{2}\}}+3\exp{\{-n\epsilon_j\}}\label{eq. randomness rate}\\
       &\leq \exp{\{-\frac{n\epsilon_j}{2}\}}(\frac{\sqrt{3\pi}}{2}+3\exp{\{-\frac{n\epsilon_j}{2}\}})\nonumber\\
       &\leq 5\exp{\{-\frac{n\epsilon_j}{2}\}}\label{eq. randomness R1},
   \end{align} where \eqref{eq. randomness rate} follows by setting the randomness rate for Eve$_j$, $j=1,\dots,d$, to $(I[\boldsymbol{X};\boldsymbol{Z}_i]+2\epsilon_j)$ such that 
   \begin{align*}
       J_j&=\exp\{n(I[\boldsymbol{X};\boldsymbol{Z}_j] + \epsilon_j-R_1)\}\\
       &= \exp\{-n\epsilon_j\}.
   \end{align*} 
   
   Note that we can set one randomness rate $R_1$ for all eavesdroppers:
   \begin{equation}
       R_1 \coloneq  \max_{j=1,\dots,d}(I[\boldsymbol{X};\boldsymbol{Z}_j]+2\epsilon_j),
   \end{equation} in which the inequality \eqref{eq. randomness R1} still holds.
   Finally we have
   \begin{align*}
       \mathbb{E}_{Q_{\boldsymbol{Z}_j}^n}\left[\int_0^{+\infty} P_{\mathcal{C}_n}\left(\frac{dP_{j,\mathcal{C}_n,1}}{dQ_{\boldsymbol{Z}_j}^n} \geq t+1\right)dt\right] &\leq  5\exp{\{-\frac{n\epsilon_j}{2}\}}\\
       &\leq \exp\{-n\beta_{j1}\}
   \end{align*} by specifying $\epsilon_j >0$ and choosing $\beta_{j1}<\frac{\epsilon_j}{2}$.
\end{proof}
\begin{proof}[Proof of Lemma \ref{Lem. Bound for Atypical Terms}]
   We can adopt the idea of proving  channel resolvability in \cite{frey2018resolvability}. For some $\alpha > 1$ and $j=1,\dots,d$:\\
   \begin{align}
        &Q_{\boldsymbol{X},\boldsymbol{Z}_j}^n(\mathcal{T}_{\epsilon_j}^c)\nonumber \\
        &= Q_{\boldsymbol{X},\boldsymbol{Z}_j}^n (\{(\boldsymbol{x}^n,\boldsymbol{z}_j^n):\frac{1}{n}i[\boldsymbol{x}^n;\boldsymbol{z}_j^n] >I[\boldsymbol{X};\boldsymbol{Z}_j] + \epsilon_j \})\nonumber \\
        &= Q_{\boldsymbol{X},\boldsymbol{Z}_j}^n(\{(\boldsymbol{x}^n,\boldsymbol{z}_j^n): \exp{\{(\alpha-1)i[\boldsymbol{x}^n;\boldsymbol{z}_j^n]\}} \nonumber\\
        &\qquad\qquad >\exp{\{(\alpha-1)n(I[\boldsymbol{X};\boldsymbol{Z}_j] + \epsilon_j)\}}\})\nonumber\\
        &\leq \mathbb{E}_{{\boldsymbol{X}^n,\boldsymbol{Z}_j^n}} [\exp{\{(\alpha-1)i[\boldsymbol{x}^n;\boldsymbol{z}_j^n]\}}] \nonumber\\
        &\qquad\qquad\cdot \exp{\{-(\alpha-1)n(I[\boldsymbol{X};\boldsymbol{Z}_j] + \epsilon_j)\}} \label{eq. Markov inequality}\\
        &= \int_{\mathcal{X}^{2\times n}\times{\mathcal{Z}}^{2\times n}} \exp{\{(\alpha-1)i[\boldsymbol{x}^n;\boldsymbol{z}_j^n]\}}Q_{\boldsymbol{X},\boldsymbol{Z}_j}^n(d(\boldsymbol{x}^n,\boldsymbol{z}_j^n))\nonumber\\
        &\quad \cdot \exp{\{-(\alpha-1)n(I[\boldsymbol{X};\boldsymbol{Z}_j] + \epsilon_j)\}}\nonumber\\
        &= \exp\log\{\int_{\mathcal{X}^{2\times n}\times{\mathcal{Z}}^{2\times n}} \left(\frac{dK^{(\boldsymbol{\theta}){\otimes n}}_{AE_j}(\cdot,\boldsymbol{x}^n)}{dQ_{\boldsymbol{Z}_j}^n}(\boldsymbol{z}_j^n)\right)^{\alpha-1} \label{eq. information density} \\
        &\quad  \cdot Q_{\boldsymbol{X},\boldsymbol{Z}_j}^n(d(\boldsymbol{x}^n,\boldsymbol{z}_j^n))\} \cdot \exp{\{-(\alpha-1)n(I[\boldsymbol{X};\boldsymbol{Z}_j] + \epsilon_j)\}}\nonumber
                    \end{align}
    \begin{align}
        &= \exp(-n(\alpha-1)(I[\boldsymbol{X};\boldsymbol{Z}_j]+\epsilon_j-\frac{1}{n}D_{\alpha}[Q_{\boldsymbol{X},\boldsymbol{Z}_j}^n||Q_{\boldsymbol{X}}^nQ_{\boldsymbol{Z}_j}^n])\label{eq. renyi additivity1}\\
        &= \exp(-n(\alpha-1)(I[\boldsymbol{X};\boldsymbol{Z}_j]+\epsilon_j-D_{\alpha}[Q_{\boldsymbol{X},\boldsymbol{Z}_j}||Q_{\boldsymbol{X}}Q_{\boldsymbol{Z}_j}])\label{eq. renyi additivity2}\\
        &\leq \exp{\{-n\beta_{j2}\}} \nonumber
   \end{align} for some 
   \begin{equation}
       0 < \beta_{j2} \leq (\alpha-1)(I[\boldsymbol{X};\boldsymbol{Z}_j]+\epsilon_j-D_{\alpha}[Q_{\boldsymbol{X},\boldsymbol{Z}_j}||Q_{\boldsymbol{X}}Q_{\boldsymbol{Z}_j}])\nonumber,
   \end{equation}
    where \eqref{eq. Markov inequality} is an application of Markov's inequality, and \eqref{eq. information density} follows by the definition of information density \eqref{eq. def. information density}. Due to the additivity of the R\'enyi Divergence \cite[Theorem 28]{van2014renyi}, we get \eqref{eq. renyi additivity2} from \eqref{eq. renyi additivity1}. 
    
Note that under the assumption in Lemma \ref{Bound for Expectation of TV Distance}, $\mathbb{E}[\exp\{t\cdot i[\boldsymbol{X};\boldsymbol{Z}_j]\}]$ exists and is finite for some $t >0$, yielding a finite $D_{\alpha'}[Q_{\boldsymbol{X},\boldsymbol{Z}_j}||Q_{\boldsymbol{X}}Q_{\boldsymbol{Z}_j}]$ for some $\alpha'>1$. This is because for some $t >0$ we have
\begin{align}
       &\mathbb{E}[\exp\{t\cdot i[\boldsymbol{X};\boldsymbol{Z}_j]\}] \\
       = &\mathbb{E}[\exp\{t\cdot \log(\frac{dQ_{\boldsymbol{X},\boldsymbol{Z}_j}}{dQ_{\boldsymbol{X}}Q_{\boldsymbol{Z}_j}})\}]\label{eq. i to ln}\\
        = &\mathbb{E}[(\frac{dQ_{\boldsymbol{X},\boldsymbol{Z}_j}}{dQ_{\boldsymbol{X}}Q_{\boldsymbol{Z}_j}})^t]\\
        = &\int (\frac{dQ_{\boldsymbol{X},\boldsymbol{Z}_j}}{dQ_{\boldsymbol{X}}Q_{\boldsymbol{Z}_j}})^t \cdot Q_{\boldsymbol{X},\boldsymbol{Z}_j}(d(\boldsymbol{x},\boldsymbol{z}_j))\}\\
        = & \int (\frac{dQ_{\boldsymbol{X},\boldsymbol{Z}_j}}{dQ_{\boldsymbol{X}}Q_{\boldsymbol{Z}_j}})^{\alpha'-1} \cdot Q_{\boldsymbol{X},\boldsymbol{Z}_j}(d(\boldsymbol{x},\boldsymbol{z}_j))\}\label{eq. t to alpha}\\
        =& \exp\{(\alpha'-1)D_{\alpha'}[Q_{\boldsymbol{X},\boldsymbol{Z}_j}||Q_{\boldsymbol{X}}Q_{\boldsymbol{Z}_j}]\}\label{eq. int to D},
\end{align} where \eqref{eq. i to ln} follows by the definition of information density \eqref{eq. def. information density}, \eqref{eq. t to alpha} results from the substitution $t = \alpha'-1$ for some $\alpha' >1$ and the definition of R\'enyi divergence \eqref{eq. def. Renyi} yields \eqref{eq. int to D}. 

Observe that the exponential term in \eqref{eq. int to D} is finite under the assumption, leading to a finite exponent $D_{\alpha'}[Q_{\boldsymbol{X},\boldsymbol{Z}_j}||Q_{\boldsymbol{X}}Q_{\boldsymbol{Z}_j}]$ for some finite $\alpha'>1$. Furthermore, $D_{\alpha}[Q_{\boldsymbol{X},\boldsymbol{Z}_j}||Q_{\boldsymbol{X}}Q_{\boldsymbol{Z}_j}]$ is continuous and finite in $\alpha$ for $\alpha<\alpha'$ \cite[Theorem 7]{van2014renyi}. Since $D_{\alpha}[Q_{\boldsymbol{X},\boldsymbol{Z}_j}||Q_{\boldsymbol{X}}Q_{\boldsymbol{Z}_j}] \rightarrow I[\boldsymbol{X};\boldsymbol{Z}_j]$ for $\alpha \rightarrow 1$ \cite[Theorem 5]{van2014renyi}\cite[Section 2.3]{thomas2006elements}, we can choose $\alpha>1$, but sufficiently close to 1 so that the upper bound for $\beta_{j2}$ is positive.
\end{proof}

\subsection{Proofs of Lemma \ref{lem. bound for error one} and Lemma \ref{lem. bound for error two}}\label{app:errors}
 \begin{proof}[Proof of Lemma \ref{lem. bound for error one}] For some $\alpha>1$:
        \begin{align}
            &\mathbb E_{\mathcal{C}_n}[P_{\mathcal{E}_1}(m,w)]
            = \mathbb E_{\mathcal{C}_n}[P_{\boldsymbol{Y}^n|\boldsymbol{X}^n}(\boldsymbol{Y}^n \!\in \!\mathcal{T}^{'c}_{\epsilon} |\boldsymbol{X}^n \!= \boldsymbol{x}^n(m,w))]\nonumber\\
            &=P_{\boldsymbol{X}^n,\boldsymbol{Y}^n}((\boldsymbol{X}^n,\boldsymbol{Y}^n)\in \mathcal{T}_{\epsilon}^{'c})\nonumber\\
            &= P_{\boldsymbol{X}^n,\boldsymbol{Y}^n} (\{(\boldsymbol{x}^n,\boldsymbol{y}^n):\frac{1}{n}i[\boldsymbol{x}^n;\boldsymbol{y}^n] >I[\boldsymbol{X};\boldsymbol{Y}] + \epsilon \})\nonumber \\
            &= P_{\boldsymbol{X}^n,\boldsymbol{Y}^n}(\{(\boldsymbol{x}^n,\boldsymbol{y}^n): \exp{\{(\alpha-1)i[\boldsymbol{x}^n;\boldsymbol{y}^n]\}} \nonumber\\
            &\qquad\qquad >\exp{\{(\alpha-1)n(I[\boldsymbol{X};\boldsymbol{Y}] + \epsilon)\}}\})\nonumber\\
            &\leq \mathbb{E}_{{\boldsymbol{X}^n \!,\boldsymbol{Y}^n}} [\exp{\{(\alpha-1)i[\boldsymbol{x}^n;\boldsymbol{y}^n]\}}] \nonumber\\
            &\qquad\qquad\cdot \exp{\{-(\alpha-1)n(I[\boldsymbol{X};\boldsymbol{Y}] + \epsilon)\}}\label{eq. Markov inequality R}
                        \end{align}
    \begin{align}
            &=  \mathbb{E}_{{\boldsymbol{X}^n \!,\boldsymbol{Y}^n}}[(\frac{dK^{(\boldsymbol{\theta}){\otimes n}}_{AB}}{dQ_{\boldsymbol{Y}}^n})^{\alpha-1}]\! \cdot \exp{\{-(\alpha-1)n(I[\boldsymbol{X};\boldsymbol{Y}] + \epsilon)\}}\nonumber\\
            &= \exp(\!-n(\alpha \!-\!1)(I[\boldsymbol{X};\boldsymbol{Y}]\!+\!\epsilon\!-\!D_{\alpha}[Q_{\boldsymbol{X},\boldsymbol{Y}}||Q_{\boldsymbol{X}}Q_{\boldsymbol{Y}}]))\label{eq. renyi property R}\\
            &\leq \exp{\{-n\Gamma_1\}} \nonumber
        \end{align} for some 
         \begin{equation}
       0 < \Gamma_1 \leq (\alpha-1)(I[\boldsymbol{X};\boldsymbol{Y}]+\epsilon-D_{\alpha}[Q_{\boldsymbol{X},\boldsymbol{Y}}||Q_{\boldsymbol{X}}Q_{\boldsymbol{Y}}]),
        \end{equation}
        where \eqref{eq. Markov inequality R} follows by Markov's inequality, and \eqref{eq. renyi property R} follows by the definition \eqref{eq. def. Renyi} and the additivity \cite[Theorem 28]{van2014renyi} of R\'enyi divergence. We can use similar reasoning as in the proof of Lemma \ref{Lem. Bound for Atypical Terms} to argue that the upper bound for $\Gamma_1$ is positive if $\alpha$ is chosen sufficiently close to 1.
    \end{proof}

     \begin{proof}[Proof of Lemma \ref{lem. bound for error two}]
        \begin{align}
            &\mathbb E_{\mathcal{C}_n}[P_{\mathcal{E}_2}(m,w)] 
            = \mathbb E_{\mathcal{C}_n} [P_{\boldsymbol{Y}^n|\boldsymbol{X}^n}(\boldsymbol{Y}^n \in \mathcal{T}^{'}_{\epsilon} |\boldsymbol{X}^n \!= \boldsymbol{x}^n(m',w'),\nonumber\\
            &\qquad \qquad \qquad \qquad(m',w')\neq (m,w))]\nonumber\\
            &\leq \sum_{m'=1}^L \sum_{w'=1}^{L_1} \mathbb E_{\mathcal{C}_n}[P_{\boldsymbol{Y}^n|\boldsymbol{X}^n}(\boldsymbol{Y}^n \in \mathcal{T}^{'}_{\epsilon} |\boldsymbol{X}^n \!= \boldsymbol{x}^n(m',w'))]\nonumber\\
            &= \exp\{n(R+R_1)\} \!\int_{\mathcal{Y}^{2\times n}}\! \int_{\mathcal{X}^{2\times n}} \!\mathbbm{1}_{(\boldsymbol{x}^n,\boldsymbol{y}^n)\in  \mathcal{T}^{'}_{\epsilon}} Q(d\boldsymbol{x}^n)Q(d\boldsymbol{y}^n)\nonumber\\
            &= \exp\{n(R+R_1)\} \!\int_{\mathcal{Y}^{2\times n}\times \mathcal{X}^{2\times n}} \mathbbm{1}_{(\boldsymbol{x}^n,\boldsymbol{y}^n)\in  \mathcal{T}^{'}_{\epsilon}} \nonumber\\
            &\qquad\qquad\qquad
            \exp\{-i[\boldsymbol{x}^n;\boldsymbol{y}^n]\} Q_{\boldsymbol{X}^n,\boldsymbol{Y}^n}(d(\boldsymbol{x}^n,\boldsymbol{y}^n))\label{eq. information density R}\\
            &\leq \exp\{n(R+R_1)\} \!\int_{\mathcal{Y}^{2\times n}\times \mathcal{X}^{2\times n}} \exp\{-n(I[\boldsymbol{X};\boldsymbol{Y}]+\epsilon)\}\nonumber \\
            &\qquad\qquad\qquad\qquad\qquad Q_{\boldsymbol{X}^n,\boldsymbol{Y}^n}(d(\boldsymbol{x}^n,\boldsymbol{y}^n))\label{eq. typical set R}\\
            &= \exp\{-n(I[\boldsymbol{X};\boldsymbol{Y}]+\epsilon-R-R_1)\}\nonumber\\
            &\leq \exp\{-n\Gamma_2\}\nonumber
        \end{align} for some
        \begin{equation}
            0<\Gamma_2<I[\boldsymbol{X};\boldsymbol{Y}]+\epsilon-R-R_1,
        \end{equation} where \eqref{eq. information density R} follows by the definition \eqref{eq. def. information density} and reformulation of information density, and \eqref{eq. typical set R} follows by the definition of $\mathcal{T}^{'}_{\epsilon}$ \eqref{eq. def. typical set reliability}. Note that $R+R_1<I[\boldsymbol{X};\boldsymbol{Y}]$ such that the upper bound for $\Gamma_2$ is positive.
    \end{proof}

%


\ifCLASSOPTIONcaptionsoff
\newpage
\fi
\bibliographystyle{ieeetr}
\bibliography{ref}

\end{document}